\newif\ifreport\reporttrue
\DeclareMathOperator*{\argmax}{argmax}
\theoremstyle{definition}
\newtheorem{definition}{Definition}[section]
\newtheorem{remark}{Remark}
\newtheorem{theorem}{Theorem}
\newtheorem{lemma}{Lemma}
\newtheorem{corollary}{Corollary}
\begin{document}
\title{The Age of Information in Multihop Networks}

\author{\large Ahmed M. Bedewy, Yin Sun, \emph{Member, IEEE}, and Ness B. Shroff, \emph{Fellow, IEEE}
\thanks{This paper was presented in part at IEEE ISIT 2017 \cite{Bedewy_Multihop_conf}.}
\thanks {This work has been supported in part by ONR grants N00014-17-1-2417 and N00014-15-1-2166, Army Research Office grants W911NF-14-1-0368 and MURI W911NF-12-1-0385,  National Science Foundation grants CNS-1446582, CNS-1421576, CNS-1518829, and CCF-1813050, and a grant from the Defense Thrust Reduction Agency HDTRA1-14-1-0058.}
\thanks{A. M. Bedewy is with the  Department  of  ECE,  The  Ohio  State  University, Columbus, OH 43210 USA (e-mail:  bedewy.2@osu.edu).}
\thanks{Y.  Sun  is  with  the  Department  of  ECE,  Auburn  University,  Auburn,  AL 36849 USA (e-mail:  yzs0078@auburn.edu).}
\thanks{N. B.  Shroff  is  with  the  Department  of  ECE and  the  Department  of  CSE, The Ohio State University,  Columbus, OH 43210 USA  (e-mail:  shroff.11@osu.edu).}
}
\maketitle

\pagenumbering{arabic}
\begin{abstract}
Information updates in multihop networks such as Internet of Things (IoT) and intelligent transportation systems have received significant recent attention. In this paper, we minimize the  age of a single information flow in interference-free multihop networks. When preemption is allowed and the packet transmission times are exponentially distributed, we prove that a preemptive Last-Generated, First-Served (LGFS) policy results in smaller age processes across all nodes in the network  than any other causal policy (in a stochastic ordering sense). In addition, for the class of New-Better-than-Used (NBU) distributions, we show that the non-preemptive LGFS policy is within a constant age gap from the optimum average age. In contrast, our numerical result shows that the preemptive LGFS policy can be very far from the optimum for some NBU transmission time distributions. Finally,  when preemption is prohibited and the packet transmission times are arbitrarily distributed, the non-preemptive LGFS policy is shown to minimize the age processes across all nodes in the network among all work-conserving policies (again in a stochastic ordering sense). Interestingly, these results hold under quite general conditions, including (i) arbitrary packet generation and arrival times, and (ii) for minimizing both the age processes in stochastic ordering and any non-decreasing functional of the age processes.

\end{abstract}
\begin{IEEEkeywords} Age of information; Data freshness; Multihop network; New-Better-than-Used; Stochastic ordering; Scheduling \end{IEEEkeywords}
\section{Introduction}\label{Int}
 \begin{figure*}[!tbp]
 \centering
 \subfigure[Single-gateway model]{
  \includegraphics[scale=0.65]{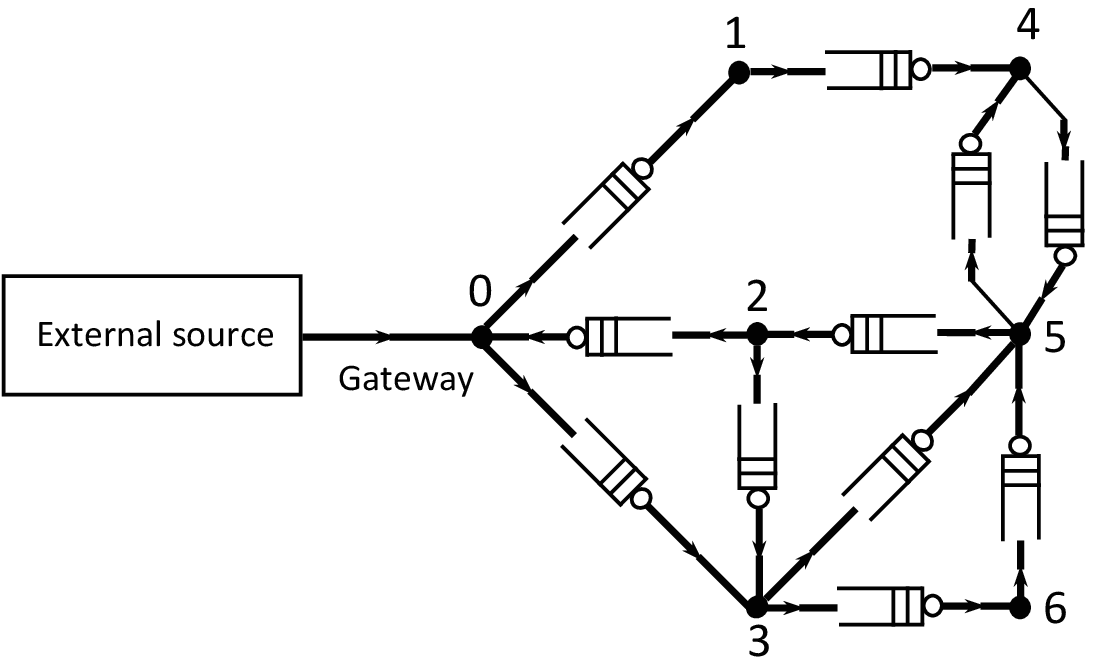}
      \label{model_a}
      }
 \subfigure[Multiple-gateway model]{
  \includegraphics[scale=0.65]{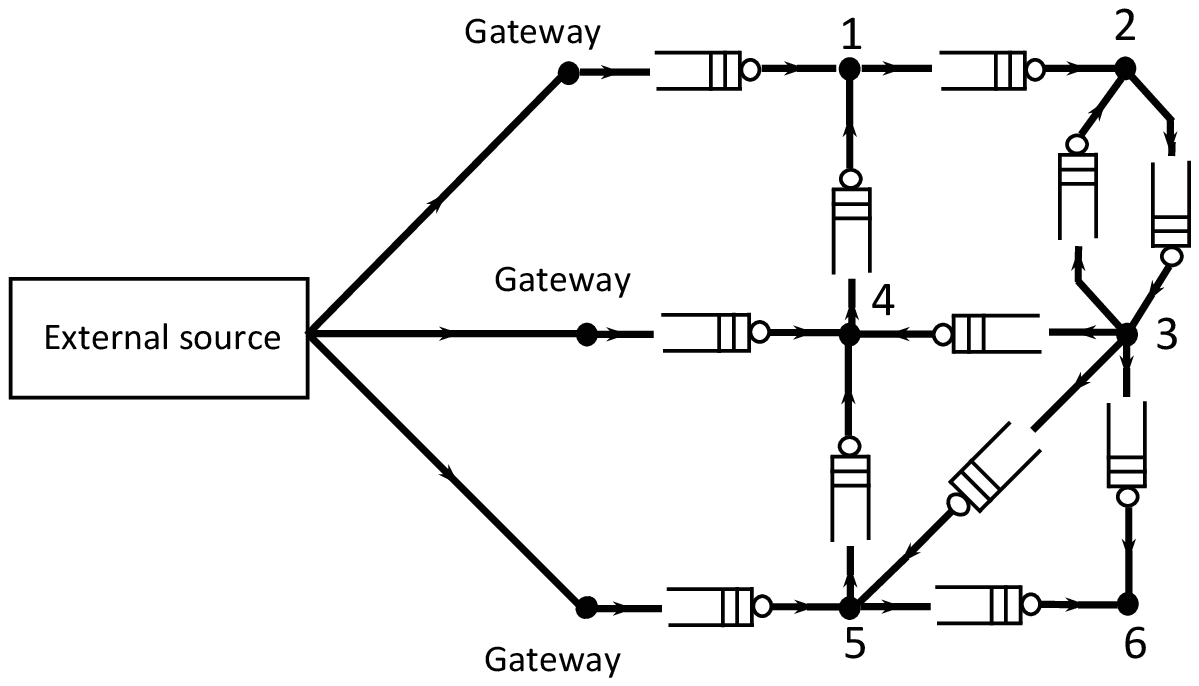}
\label{model_b}   
   }
\captionsetup{justification=justified, font={onehalfspacing}}
\caption{Information updates in multihop networks.}
\end{figure*}\label{Fig:sysMod}

There has been a growing interest in applications that require real-time information updates, such as news, weather reports, email notifications, stock quotes, social updates, mobile ads, etc. The freshness of information is also crucial in other systems, e.g., monitoring systems that obtain information from environmental sensors and wireless communication systems that need rapid updates of channel state information. 

As a metric of data freshness, the \emph{age of information}, or simply \emph{age}, was defined in \cite{adelberg1995applying,cho2000synchronizing,golab2009scheduling,KaulYatesGruteser-Infocom2012}. At time $t$, if the freshest update at the destination was generated at time $U(t)$, the age $\Delta(t)$  is defined as $\Delta(t)=t-U(t)$. Hence, age is the time elapsed since the freshest packet was generated.

The demand for real-time information updates in multihop networks, such as the IoT, intelligent transportation systems, and sensor networks, has gained increasing attention recently. In intelligent transportation systems \cite{Papadimitratos_vehicular,Kaul_vehicul,roy_vehicul}, for example, a vehicle shares its information related to traffic congestion and road conditions to avoid collisions and reduce congestion. Thus, in such applications, maintaining the age at a low level at all network nodes is a crucial requirement. In some other information update applications, such as emergency alerts and sensor networks, critical information is needed to report in a timely manner, and the energy consumption of the sensor nodes must be sufficiently low to support a long battery life up to 10-15 years \cite{timmons2004analysis}. Because of the low traffic load in these systems, wireless interference is not the limiting factor, but rather battery life through energy consumption is. Furthermore, information updates over the Internet, cloud systems, and social networks are of significant importance. These systems are built on wireline networks or implemented based on transport layer APIs. Motivated by these applications, we investigate information updates over multihop networks that can be modeled as multihop queueing systems.

It has been observed in early studies on age of information analysis \cite{CostaCodreanuEphremides2014ISIT,CostaCodreanuEphremides_TIT,Icc2015Pappas,PAoI_in_error,Gamma_dist} that Last-Come, First-Serve (LCFS)-type of scheduling policies can achieve a lower age than other policies. The optimality of the LCFS policy, or more generally the Last-Generated, First-Served (LGFS) policy, for minimizing the age of information in single-hop networks was first established in \cite{age_optimality_multi_server, Bedewy_NBU_journal}. However, age-optimal scheduling in multihop networks remains an important open question.

\begin{table*}[!tbp]
\footnotesize
 \centering
\begin{tabular}{|l|l|l|l|l|l|l|}
\hline
\textbf{Theorem \#} & \textbf{Preemption type} & \textbf{\begin{tabular}[c]{@{}l@{}}Transmission time\\ distribution\end{tabular}} & \textbf{Network topology} & \textbf{Policy space} & \textbf{Proposed policy} & \textbf{Optimality result} \\ \hline
~~~~~~\ref{thm1} & \begin{tabular}[c]{@{}l@{}}Preemption is \\ allowed\end{tabular} & Exponential & General & Causal policies & Preemptive LGFS & Age-optimal \\ \hline
~~~~~~\ref{thmnbu_gab} & \begin{tabular}[c]{@{}l@{}}Preemption is \\ allowed\end{tabular} & \begin{tabular}[c]{@{}l@{}}New-Better-\\ than-Used\end{tabular} & \begin{tabular}[c]{@{}l@{}}Each node has only\\ one incoming link\end{tabular} & Causal policies & \begin{tabular}[c]{@{}l@{}}Non-preemptive \\ LGFS\end{tabular} & Near age-optimal \\ \hline
~~~~~~\ref{thm2}  & \begin{tabular}[c]{@{}l@{}}Preemption is \\ not allowed\end{tabular} & Arbitrary & General & \begin{tabular}[c]{@{}l@{}}Work-conserving\\ causal policies\end{tabular} & \begin{tabular}[c]{@{}l@{}}Non-preemptive \\ LGFS\end{tabular} & Age-optimal \\ \hline
\end{tabular}
\caption{Summary of age-optimality results.}
\label{table1}
\end{table*}


In this paper, we consider a multihop network represented by a directed graph, as shown in Fig. \ref{Fig:sysMod}, where the update packets are generated at an external source and are then dispersed throughout the network via one or multiple gateway nodes. The case of multiple gateway nodes is motivated by news spreading in social media where news is usually posted by multiple social accounts or webpages. Moreover, we suppose that the packet generation times at the external source and the packet arrival times at the gateway node (gateway nodes) are arbitrary. This is because, in some applications, such as sensor and environment monitoring networks, the arrival process is not necessarily Poisson. For example, if a sensor observes an environmental phenomenon and sends an update packet whenever a change occurs, the arrival process of these update packets does not follow a Poisson process. The packet transmission times are independent but not necessarily identically distributed across the links, and
i.i.d. across time. Interestingly, we
find that some low-complexity scheduling policies can achieve (near) age-optimal performance in this setting. The main results in this paper are summarized in Table \ref{table1}.


%

%

\subsection{Our Contributions}
We develop scheduling policies that can achieve age-optimality or near age-optimality in a multihop network with a single information flow. 
The following summarizes our main contributions in this paper: 
\begin{itemize}
\item  If preemption is allowed and the packet transmission times over the network links are exponentially distributed, we prove that the preemptive LGFS policy minimizes the age processes at all nodes in the network among all causal policies in a stochastic ordering sense (Theorem \ref{thm1}). In other words, the preemptive LGFS policy minimizes any \emph{non-decreasing functional of the age processes at all nodes}  in a stochastic ordering sense. Note that the non-decreasing functional of the age processes at all nodes represents a very general class of age metrics in that it includes many age penalty metrics studied in the literature, such as the time-average age \cite{KaulYatesGruteser-Infocom2012,2012ISIT-YatesKaul,CostaCodreanuEphremides2014ISIT,CostaCodreanuEphremides_TIT,Icc2015Pappas,BacinogCeranUysal_Biyikoglu2015ITA,2015ISITYates,RYatesTIT16,Gamma_dist}, average peak age \cite{2015ISITHuangModiano,FCFSGG1,CostaCodreanuEphremides2014ISIT,BacinogCeranUysal_Biyikoglu2015ITA,Gamma_dist,PAoI_in_error},  non-linear age functions \cite{generat_at_will,SunJournal2016}, and age penalty functional at single-hop network \cite{age_optimality_multi_server, Bedewy_NBU_journal}.


\item  Although the preemptive LGFS policy can achieve age-optimality for exponential transmission times, it does not always minimize the age processes for non-exponential transmission times. When preemption is allowed, we investigate an important class of packet transmission time distributions called New-Better-than-Used (NBU) distributions, which are more general than exponential. The network topology we consider here is more restrictive in the sense that each node has one incoming link only. We show that the non-preemptive LGFS policy is within a constant age gap from the optimum average age, and that the gap is independent of the packet generation and arrival times, and buffer sizes (Theorem \ref{thmnbu_gab}).  Our numerical result (Fig. \ref{avg_age3}) shows that the preemptive LGFS policy can be very far from the optimum for non-exponential transmission times, while the non-preemptive LGFS policy is near age-optimal.

\item If preemption is not allowed, then for arbitrary distributions of packet transmission times, we prove that the non-preemptive LGFS policy minimizes the age processes at all nodes among all work-conserving policies in the sense of stochastic ordering (Theorem \ref{thm2}).  Age-optimality here can be achieved even if the transmission time distribution differs from one link to another, i.e., the transmission time distributions are heterogeneous.

\end{itemize}

To the best of our knowledge, these are the first optimal results on minimizing the age of information in multihop queueing networks with arbitrary packet generation and arrival processes.

\section{Related Work}\label{RW}
There exist a number of studies focusing on the analysis of the age and figuring out ways to reduce it in single-hop networks \cite{KaulYatesGruteser-Infocom2012,2012ISIT-YatesKaul,2015ISITHuangModiano,FCFSGG1,CostaCodreanuEphremides2014ISIT,CostaCodreanuEphremides_TIT,Icc2015Pappas,BacinogCeranUysal_Biyikoglu2015ITA,2015ISITYates,RYatesTIT16,PAoI_in_error,Gamma_dist}. In \cite{KaulYatesGruteser-Infocom2012,2012ISIT-YatesKaul}, the update frequency was optimized to minimize the age in First-Come, First-Served (FCFS) queueing systems with exponential service times. It was found that this frequency differs from those that minimize the delay or maximize the throughput. Extending the analysis to multi-class FCFS M/G/1 queue was considered in \cite{2015ISITHuangModiano}.  In \cite{FCFSGG1}, the stationary distributions of the age and peak age in FCFS GI/GI/1 queue was obtained. In \cite{CostaCodreanuEphremides2014ISIT,CostaCodreanuEphremides_TIT,Icc2015Pappas}, it was shown that the age can be reduced by discarding old packets waiting in the queue when a new sample arrives. The age of information under energy replenishment constraints was analyzed in \cite{BacinogCeranUysal_Biyikoglu2015ITA,2015ISITYates}. The time-average age was characterized for multiple sources LCFS information-update systems with and without preemption in  
\cite{RYatesTIT16}. In this study, the authors found that sharing service facility among Poisson sources improves the total age. The work in \cite{PAoI_in_error} analyzed the age in the presence of errors when the service times are exponentially
distributed. Gamma-distributed service times was considered in \cite{Gamma_dist}. The studies
in \cite{PAoI_in_error}, \cite{Gamma_dist} were carried out for LCFS queueing systems with and without preemption. 


It should be noted that in our study, the  packet generation and arrival times are exogenous, i.e., they are not controllable by the scheduler. On the other hand,  the generation times of update packets was optimized for single-hop networks in \cite{BacinogCeranUysal_Biyikoglu2015ITA,2015ISITYates,generat_at_will,SunJournal2016,multi_source_bedewy}. A general class of non-negative, non-decreasing age penalty functions was minimized for single source systems in \cite{generat_at_will,SunJournal2016}. Extending the study to multi-source systems was considered in \cite{multi_source_bedewy}, where sampling and scheduling strategies are jointly optimized to minimize the age. A real-time sampling problem of the Wiener process was solved in \cite{Sun_reportISIT17}: If the sampling times are independent of the observed Wiener process, the optimal sampling problem in \cite{Sun_reportISIT17} reduces to an age of information optimization problem; otherwise, the optimal sampling policy can use knowledge of the Wiener process to achieve better performance than age of information optimization. 

There have also been a few recent studies on the age of information in multihop networks \cite{selen2013age,twohop_energyharvesting,shreedhar2018acp,yates2018age,yates2018age_moments,Atilla_multihop,multihop_madiano,Farazi_multihop_bdcasting}.  The age is analyzed for specific network topologies, e.g., line
or star networks, in \cite{selen2013age}. In \cite{twohop_energyharvesting}, an offline optimal sampling policy was developed to minimize the age in two-hop networks with an energy-harvesting source. A congestion control mechanism that enables timely delivery of the update packets over IP networks was considered in \cite{shreedhar2018acp}. In \cite{yates2018age}, the author analyzed the average age in a multihop line network with Poisson arrival process and exponential service times. This analysis was later extended in \cite{yates2018age_moments} to include age moments and distributions. This paper and \cite{yates2018age,yates2018age_moments} complement each other in the following sense: Our results (i.e., Theorem \ref{thm1}) show that the LCFS policy with preemption in service is age-optimal. However, we do not characterize the achieved optimal age, which was evaluated in \cite{yates2018age,yates2018age_moments}. The authors of \cite{Atilla_multihop} addressed the problem of scheduling in wireless multihop networks with general interference model and multiple flows, assuming that all network queues are adopting an FCFS policy. A similar network model was considered in \cite{multihop_madiano}, where the optimal update policy was obtained for the ``active sources scenario''.  In this scenario, each source can generate a packet at any time, and hence, each source always has a fresh packet to send. The active sources scenario in multihop networks was also considered in \cite{Farazi_multihop_bdcasting}, where nodes take turns broadcasting their updates, and hence each node can act either as a source or a relay. In contrast to our study, the works in \cite{Atilla_multihop,multihop_madiano,Farazi_multihop_bdcasting} considered a time-slotted system, where a packet is transmitted from one node to another in one time slot.

\section{Model and Formulation}\label{sysmod}
\subsection{Notations and Definitions}
For any random variable $Z$ and an event $A$, let $[Z\vert A]$ denote a random variable with the conditional distribution of $Z$ for given $A$, and $\mathbb{E}[Z\vert A]$ denote the conditional expectation of $Z$ for given $A$.

Let $\mathbf{x}=(x_1,x_2,\ldots,x_n)$ and $\mathbf{y}=(y_1,y_2,\ldots,y_n)$ be two vectors in $\mathbb{R}^n$, then we denote $\mathbf{x}\leq\mathbf{y}$ if $x_i\leq y_i$ for $i=1,2,\ldots,n$. A set $U\subseteq \mathbb{R}^n$ is called upper if $\mathbf{y}\in U$ whenever $\mathbf{y}\geq\mathbf{x}$ and $\mathbf{x}\in U$. We will need the following definitions: 
\begin{definition} \textbf{ Univariate Stochastic Ordering:} \cite{shaked2007stochastic} Let $X$ and $Y$ be two random variables. Then, $X$ is said to be stochastically smaller than $Y$ (denoted as $X\leq_{\text{st}}Y$), if
\begin{equation*}
\begin{split}
\mathbb{P}\{X>x\}\leq \mathbb{P}\{Y>x\}, \quad \forall  x\in \mathbb{R}.
 \end{split}
\end{equation*}
\end{definition}
\begin{definition}\label{def_2} \textbf{Multivariate Stochastic Ordering:} \cite{shaked2007stochastic} 
Let $\mathbf{X}$ and $\mathbf{Y}$ be two random vectors. Then, $\mathbf{X}$ is said to be stochastically smaller than $\mathbf{Y}$ (denoted as $\mathbf{X}\leq_\text{st}\mathbf{Y}$), if
\begin{equation*}
\begin{split}
\mathbb{P}\{\mathbf{X}\in U\}\leq \mathbb{P}\{\mathbf{Y}\in U\}, \quad \text{for all upper sets} \quad U\subseteq \mathbb{R}^n.
 \end{split}
\end{equation*}
\end{definition}
\begin{definition} \textbf{ Stochastic Ordering of Stochastic Processes:} \cite{shaked2007stochastic} Let $\{X(t), t\in [0,\infty)\}$ and $\{Y(t), t\in[0,\infty)\}$ be two stochastic processes. Then, $\{X(t), t\in [0,\infty)\}$ is said to be stochastically smaller than $\{Y(t), t\in [0,\infty)\}$ (denoted by $\{X(t), t\in [0,\infty)\}\leq_\text{st}\{Y(t), t\in [0,\infty)\}$), if, for all choices of an integer $n$ and $t_1<t_2<\ldots<t_n$ in $[0,\infty)$, it holds that
\begin{align}\label{law9'}
\!\!\!(X(t_1),X(t_2),\ldots,X(t_n))\!\leq_\text{st}\!(Y(t_1),Y(t_2),\ldots,Y(t_n)),\!\!
\end{align}
where the multivariate stochastic ordering in \eqref{law9'} was defined in Definition \ref{def_2}.
\end{definition}

\subsection{Network Model}
We consider a multihop network represented by a directed graph $\mathcal{G(\mathcal{V},\mathcal{L})}$,  where $\mathcal{V}$ is the set of nodes and $\mathcal{L}$ is the set of links, as shown in Fig. \ref{Fig:sysMod} \footnote{For the simplicity of presentation, we focus on the network model with a single gateway node in the rest of the paper. However, it is not hard to see that our results also hold for networks with multiple gateway nodes.}. The number of nodes in the network is $\vert\mathcal{V}\vert=N$. The nodes are indexed from 0 to $N-1$, where node 0 acts as a gateway node. Define $(i, j)\in\mathcal{L}$ as a link from node $i$ to node $j$, where $i$ is the origin node and $j$ is the destination node. We assume that the links in the network can be active simultaneously, which holds in the applications mentioned in Section \ref{Int}. The packet transmission times are independent 
but not necessarily identically distributed across the links, and \emph{i.i.d.} across time. As will be clear later on, we consider the following transmission time distributions: Exponential distribution, NBU distributions, and arbitrary distribution. In addition, we consider two types of network topology: general network topology and special network topology in which each node has one incoming link. We note that this special network topology is an extension of tandem queues. These different network settings are summarized in Table \ref{table1}.



The system starts to operate at time $t=0$. The update packets are generated at an external source, and are firstly forwarded to node 0, from which they are dispersed throughout the network. Thus, the update packets may arrive at node 0 some time after they
are generated. The $l$-th update packet, called packet $l$, is generated at time $s_l$, arrives at node 0 at time $a_{l0}$, and is delivered to any other node $j$ at time $a_{lj}$ such that  $0\leq s_1\leq s_2\leq \ldots$ and  $s_l \leq a_{l0} \leq a_{lj}$ for all $j = 1, \ldots, N-1$. Note that in this paper, the sequences $\{s_1,s_2,\ldots\}$ and $\{a_{10}, a_{20},\ldots\}$ are \emph{arbitrary}. Hence, the update packets may not arrive at node $0$ in the order of their generation times. For example, packet $l+1$ may arrive at node 0 earlier than packet $l$ such that $s_l \leq s_{l+1}$ but $a_{l0} \geq a_{(l+1)0}$. We suppose that once a packet arrives at node $i$, it is immediately available to all the outgoing links from node $i$. Moreover, the update packets are time-stamped with their generation times such that each node knows the generation times of its received packets. Each link $(i,j)$ has a queue of buffer size $B_{ij}$ to store the incoming packets, which can be infinite, finite, or even zero. If a link has a finite queue buffer size, then the packet that arrives to a full buffer either is dropped or replaces another packet in the queue. 
%
%
%
%
\subsection{Scheduling Policy}\label{Schpolicy} 
 We let $\pi$ denote a scheduling policy that determines the following (at each link): i) Packet assignments to the server, ii) packet preemption if preemption is allowed, iii) packet droppings and replacements when the queue buffer is full. The sequences of packet generation times $\{s_1, s_2, \ldots\}$ and packet arrival times $\{a_{10}, a_{20}, \ldots\}$ at node 0 do not change according to the scheduling policy, while the packet arrival times at other nodes (i.e., $a_{lj}$ for all $l$ and $j=1,\ldots,N-1$) are functions of the scheduling policy $\pi$. 
 We suppose that the packet transmission times over the links are invariant of the scheduling policy and the realization of a packet transmission time at any link is unknown until its
transmission over this link is completed (unless the transmission time is deterministic).

 Let $\Pi$  denote the set of all \emph{causal} policies, in which
scheduling decisions are made based on the history and current information of the system (system information includes the location, arrival times, and generation times of all the packets in the system, and the idle/busy state of all the servers).
  we define several types of policies in $\Pi$: 
  
  A policy is said to be \textbf{preemptive}, if a link can switch to send another packet at any time; the preempted packets can be stored back into the queue if there is enough buffer space and sent out at a later time when the link is available again. In contrast, in a \textbf{non-preemptive} policy, a link must complete sending the current packet before starting to send another packet.
A policy is said to be \textbf{work-conserving}, if each link is busy whenever there are packets waiting in the link's queue.
\subsection{Age Performance Metric}
\begin{figure}
\includegraphics[scale=0.4]{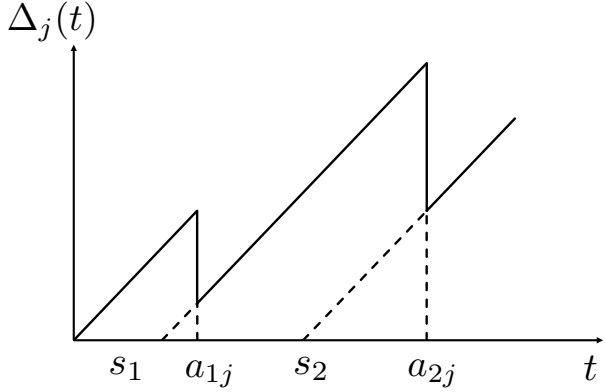}
\centering
\caption{A sample path of the age process $\Delta_j(t)$ at node $j$.}\label{Fig:Age}
\vspace{-0.3cm}
\end{figure}
Let $U_{j}(t)=\max\{s_l : a_{lj}\leq t\}$ be the generation time of the freshest packet arrived at node $j$ before time $t$. The \emph{age of information}, or simply the \emph{age}, at node $j$ is defined as \cite{adelberg1995applying,cho2000synchronizing,golab2009scheduling,KaulYatesGruteser-Infocom2012}
\begin{equation}\label{age}
\begin{split}
\Delta_{j}(t)=t-U_{j}(t).
\end{split}
\end{equation} 
The process of $\Delta_{j}(t)$ is given by $\Delta_{j}=\{\Delta_{j}(t), t\in [0,\infty)\}$. The initial state of $U_j(t)$ at time $t=0^-$ is invariant of the scheduling policy $\pi\in\Pi$, where we assume that $U_j(0^-)=0=s_0$ for all $j\in\mathcal{V}$. As shown in Fig. \ref{Fig:Age}, the age increases linearly with $t$ but is reset to a smaller value with the arrival of a fresher packet. The age vector of all the network nodes at time $t$ is
\begin{align}
\mathbf{\Delta}(t)=\!\!(\Delta_{0}(t), \Delta_{1}(t), \ldots, \Delta_{N-1}(t)). 
\end{align}
The age process of all the network nodes is given by 
\begin{align}
\mathbf{\Delta}=\{\mathbf{\Delta}(t), t\in [0,\infty)\}.
\end{align}

In this paper, we introduce a general \emph{age penalty functional} $g(\mathbf{\Delta})$ to represent the level of dissatisfaction for data staleness at all the network nodes.

\begin{definition}\label{Def_func}  \textbf{Age Penalty Functional:}
Let $\mathbf{V}$ be the set of $n$-dimensional functions, i.e.,
\begin{align}
\mathbf{V} = \{f : [0,\infty)^n \mapsto \mathbb{R}\}.\nonumber
\end{align}
A functional $g:\mathbf{V}\mapsto\mathbb{R}$ is said to be an \emph{age penalty functional} if $g$ is \emph{non-decreasing} in the following sense:  
\begin{equation}
\!\!g(\mathbf{\Delta}_1) \leq g(\bm{\Delta}_2),~\text{whenever}~ \bm \Delta_{1}(t)\leq \bm \Delta_{2}(t), \forall t\in [0,\infty). \!\!
\end{equation}
\end{definition}
The age penalty functionals used in prior studies include:
\begin{itemize}
\item \emph{Time-average age \cite{KaulYatesGruteser-Infocom2012,2012ISIT-YatesKaul,CostaCodreanuEphremides2014ISIT,CostaCodreanuEphremides_TIT,Icc2015Pappas,BacinogCeranUysal_Biyikoglu2015ITA,2015ISITYates,RYatesTIT16,Gamma_dist}:} The time-average age of node $j$ is defined as
\begin{equation}\label{functional1}
g_1(\bm{\Delta})=\frac{1}{T}\int_{0}^{T} \Delta_{j}(t) dt,
\end{equation}
\item \emph{Average peak age \cite{2015ISITHuangModiano,FCFSGG1,CostaCodreanuEphremides2014ISIT,BacinogCeranUysal_Biyikoglu2015ITA,Gamma_dist,PAoI_in_error}:} The average peak age of node $j$ is defined as 
\begin{equation}\label{functional2}
g_2(\bm{\Delta})=\frac{1}{K}\sum_{k=1}^{K} A_{kj},
\end{equation}
where $A_{kj}$ denotes the $k$-th peak value of $\Delta_{j}(t)$ since time $t=0$. 
\item \emph{Non-linear age functions \cite{generat_at_will,SunJournal2016}:} The non-linear age function of node $j$ is in the following form
\begin{equation}\label{functional3}
g_3(\bm{\Delta})= \frac{1}{T}\int_{0}^{T} h(\Delta_{j}(t)) dt,
\end{equation}
where $h$ : $[0,\infty)\to [0,\infty)$ can be any non-negative and non-decreasing function. As pointed out in  \cite{SunJournal2016}, a stair-shape function $h(x)=\lfloor x\rfloor$ can be used to characterize the dissatisfaction of data staleness when the information of interest is checked periodically, and an exponential function $h(x)=e^{x}$ is appropriate for online learning and control applications where the desire for data refreshing grows quickly with respect to the age. Also, an indicator function $h(x)=\mathds{1}(x>d)$ can be used to characterize the dissatisfaction when a given age limit $d$ is violated. 
\item \emph{Age penalty functional in single-hop networks \cite{age_optimality_multi_server, Bedewy_NBU_journal}:} The age penalty functional in \cite{age_optimality_multi_server, Bedewy_NBU_journal} is a non-decreasing functional of the age process at one node, which is a special case of that defined in Definition \ref{Def_func} with $n = 1$.
\end{itemize}

\section{Main Results}\label{GS}
In this section, we present our (near) age-optimality results for multihop networks. We prove our results using stochastic ordering.
\subsection{Exponential Transmission Times, Preemption is Allowed}\label{exp_trans_preemption_on}
 \begin{algorithm}[!t]
\SetKwData{NULL}{NULL}
\SetCommentSty{small} 
$\alpha_{ij}:=0$\tcp*[r]{$\alpha_{ij}$ is the generation time of the packet being transmitted on the link $(i,j)$}
\While{the system is ON} {
\If{a new packet with generation time $s$ arrives to node $i$}{ 
\uIf{the link $(i,j)$ is busy}{
\uIf{ $s\leq\alpha_{ij}$}
{Store the packet in the queue\;}
\Else(~~~~~~~~~~~~~~~~\tcp*[h]{The packet carries fresher information than the packet being transmitted.}){
Send the packet over the link by preempting the packet being transmitted\; 
The preempted packet is stored back to the queue\;
 $\alpha_{ij}=s$\;
}}
\Else(~~~~~~~~~~~~~~~~~~~~~\tcp*[h]{The link is idle.})
{
The new packet is sent over the link\;
} 

}
\If{a packet is delivered to node $j$}{
 \If{ the queue is not empty}{
The freshest packet in the queue is sent over the link\;
 }
}
}
\caption{Preemptive Last-Generated, First-Served policy at the link $(i,j)$.}\label{alg1}
\end{algorithm}

We study age-optimal packet scheduling for networks that allow for preemption and the packet transmission times are exponentially distributed, \emph{independent} across the links and \emph{i.i.d.} across time\footnote{Although we consider exponential transmission times, packet transmission
time distributions are not necessarily identical over the network links, i.e.,
different links may have different mean transmission times.}.  We consider a LGFS scheduling principle which is defined as follows.
\begin{definition}
A scheduling policy is said to follow the \textbf{Last-Generated, First-Served} discipline, if the last generated packet is sent first among all packets in the queue.  
\end{definition}
We consider a preemptive LGFS (prmp-LGFS) policy at each link $(i,j)\in\mathcal{L}$. 
The implementation details of this policy are depicted in Algorithm \ref{alg1}\footnote{The decision related to packet droppings and replacements in full buffer case (at any link) doesn’t affect the age performance of prmp-LGFS policy. Hence, we don’t specify this decision under the prmp-LGFS policy.}. 


Define a set of parameters $\mathcal{I}=\{\mathcal{G}(\mathcal{V}, \mathcal{L}), (B_{ij}, (i,j)\in\mathcal{L}),$  $s_l, a_{l0},l=1,2,\ldots\}$, where $\mathcal{G}(\mathcal{V}, \mathcal{L})$ is the network graph, $B_{ij}$ is the queue buffer size of link $(i,j)$, $s_l$ is the generation time of packet $l$, and $a_{l0}$ is the arrival time of packet $l$ to node $0$. Let $\mathbf{\Delta_{\pi}}$ be the age processes of all nodes in the network under policy $\pi$. The age optimality of prmp-LGFS policy is provided in the following theorem.
\begin{theorem}\label{thm1}
If the packet transmission times are exponentially distributed, \emph{independent} across links and \emph{i.i.d.} across time, then for all $\mathcal{I}$ and $\pi\in\Pi$ 
\begin{align}
[\mathbf{\Delta_{\text{prmp-LGFS}}}\vert\mathcal{I}]\leq_{\text{st}}\!\! [\mathbf{\Delta_{\pi}}\vert\mathcal{I}],
\end{align}
or equivalently, for all $\mathcal{I}$ and non-decreasing functional $g$
 \begin{equation}\label{thm1eq2}
\begin{split}
\mathbb{E}[g(\bm{\Delta}_{\text{prmp-LGFS}})\vert\mathcal{I}]= \min_{\pi\in\Pi} \mathbb{E}[g(\bm{\Delta}_\pi)\vert\mathcal{I}], 
\end{split}
\end{equation}
provided the expectations in \eqref{thm1eq2} exist.
\end{theorem}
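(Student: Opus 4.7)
The plan is to prove Theorem \ref{thm1} by constructing a sample-path stochastic coupling of prmp-LGFS and an arbitrary causal policy $\pi$, and then establishing a pathwise freshness dominance. Since $\Delta_j(t)=t-U_j(t)$, stochastic ordering of the age processes (jointly over all nodes and all times) is equivalent to the reverse stochastic ordering of the ``freshness processes'' $\{U_j(t)\}$: each coordinate of the process is a decreasing affine function of $U_j(t)$, and such coordinatewise decreasing transformations reverse multivariate stochastic order on upper sets (Definition~\ref{def_2}). It therefore suffices to exhibit a coupling under which $U_j^{\text{prmp-LGFS}}(t)\ge U_j^{\pi}(t)$ holds almost surely, simultaneously for every node $j$ and every time $t$.

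The coupling I would use exploits the memoryless property of the exponential distribution. In both systems the exogenous data $(s_l,a_{l0})_{l\ge 1}$ are identical, and on each link $(i,j)$ of service rate $\mu_{ij}$ I attach a single shared Poisson point process $\mathcal{T}_{ij}$ of rate $\mu_{ij}$. At each epoch of $\mathcal{T}_{ij}$, if the server of link $(i,j)$ is currently busy, its packet in service is delivered to node $j$; otherwise the epoch is ignored. Memorylessness makes this mechanism consistent with any preemption pattern chosen by the scheduler, so the marginal laws of both coupled systems are the correct ones.

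On this common probability space I would prove, by forward induction on the (countable) sequence of event times, the pathwise inequality
\begin{equation*}
N_j^{\text{prmp-LGFS}}(s,t) \ \ge\ N_j^{\pi}(s,t) \quad \text{for all } s\in\mathbb{R},\ t\ge 0,\ j\in\mathcal{V},
\end{equation*}
where $N_j^{\pi}(s,t)$ is the number of packets of generation time at least $s$ delivered to node $j$ by time $t$ under $\pi$. Specializing $s=U_j^{\pi}(t)$ then yields $U_j^{\text{prmp-LGFS}}(t)\ge U_j^{\pi}(t)$ for every $j$ and $t$. The base case $t=0^-$ is trivial; exogenous arrivals at node $0$ update both sides identically; and at a Poisson epoch on link $(i,j)$ prmp-LGFS always transmits the freshest packet currently residing at $i$, while $\pi$ either sends some packet at $i$ or is idle. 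To make this step go through, the induction actually carries a stronger node-level invariant: the multiset of generation times of packets accessible to each node's outgoing links under prmp-LGFS dominates the corresponding multiset under $\pi$ in the sense that, for every $s$, the count of items $\ge s$ is at least as large on the prmp-LGFS side. Preemption is exactly what keeps this invariant alive, because otherwise a link of prmp-LGFS could be locked into transmitting a stale packet while a fresh arrival at the upstream node lets $\pi$ temporarily dominate on that link.

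The main obstacle will be the careful book-keeping of this node-level inventory invariant in the multihop setting, because a packet may reside in the queues of several outgoing links simultaneously, $\pi$ may perform arbitrary drops or replacements when buffers are full, and fresh packets can preempt stale ones only at the link server rather than further upstream. My strategy for handling these issues is (i) a swap argument that reduces to policies $\pi$ which never discard a packet that prmp-LGFS still retains, since such discards can only hurt $\pi$; and (ii) a relabelling argument that, at each Poisson epoch, pairs the packet actually transmitted by $\pi$ with a packet of at least equal generation time in prmp-LGFS's queue at the same upstream node, using the inductive hypothesis applied at that node. Once the sample-path dominance of the freshness vectors is established, the multivariate stochastic ordering of $\mathbf{\Delta}_{\text{prmp-LGFS}}$ and the equivalent functional form \eqref{thm1eq2} follow from Definition~\ref{def_2} and Definition~\ref{Def_func}, respectively.
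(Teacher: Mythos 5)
Your high-level strategy—couple the delivery epochs on each link through the memoryless property and then run a forward induction on a freshness-dominance invariant—is the same one the paper uses. But the specific invariant you choose, namely the per-threshold delivery-count inequality
\begin{equation*}
N_j^{\text{prmp-LGFS}}(s,t)\ \ge\ N_j^{\pi}(s,t) \quad\text{for all }s,\ t,\ j,
\end{equation*}
together with the companion node-level ``multiset of accessible generation times dominates for every threshold'' invariant, is strictly stronger than the theorem needs and is in fact \emph{false} for the preemptive LGFS policy as the paper defines it. Algorithm~\ref{alg1} deliberately leaves the drop/replacement rule in the full-buffer case unspecified (see the footnote attached to it), since that decision does not affect the age. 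It does, however, affect your counting invariant. Consider a single link $(0,1)$ with $B_{01}=1$ and three packets with generation times $s_1=1$, $s_2=2$, $s_3=3$ and arrival times $a_{30}=0$, $a_{10}=0.1$, $a_{20}=0.2$; couple the delivery epochs on $(0,1)$ at $t=1,2,3$. Under prmp-LGFS, packet~3 enters service at $t=0$; packet~1 is queued; when packet~2 arrives the buffer is full, and if LGFS happens to discard packet~2 (a choice the paper explicitly permits), it delivers packets $3$ then $1$. A competing policy $\pi$ that instead replaces packet~1 by packet~2 delivers $3$ then $2$. Then $N_1^{\text{prmp-LGFS}}(s,2)=1<2=N_1^{\pi}(s,2)$ for all $s\in(1,2]$, so your invariant fails—yet $U_1^{\text{prmp-LGFS}}(t)=U_1^{\pi}(t)=3$ for $t\ge1$, so the age ordering is unharmed.

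This is why the paper tracks only the scalar state $U_{j,\pi}(t)$, the generation time of the freshest packet that has reached node $j$, rather than a full counting function. The vector invariant $\mathbf{U}_P(t)\ge\mathbf{U}_\pi(t)$ is exactly what the age order requires, and it is robust to the arbitrary full-buffer drop decisions of Algorithm~\ref{alg1}. The inductive step that replaces your relabelling argument is then quite short (the paper's Lemma~\ref{lem3}, Case~2): if at a coupled epoch on $(i,j)$ LGFS sends a packet with generation time $s_P$ strictly smaller than the $s_\pi$ sent by $\pi$, then since $s_\pi\le U_{i,\pi}\le U_{i,P}$, node~$i$ under LGFS has already received a packet at least as fresh as $s_\pi$; but preemptive LGFS would only be transmitting a staler packet if every fresher one had already been delivered to $j$, so $U_{j,P}\ge s_\pi$ already, and the update $U'_{j,P}=\max\{U_{j,P},s_P\}\ge s_\pi=U'_{j,\pi}$ preserves the invariant. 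You would need to either restrict to a canonical ``keep-the-freshest'' drop rule (and separately argue that all drop rules yield the same age under prmp-LGFS), or replace your counting and inventory invariants with the weaker maximum-generation-time invariant; as written, the induction cannot close.

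Separately, be careful with the phrase ``prmp-LGFS always transmits the freshest packet currently residing at $i$'': it transmits the freshest packet \emph{currently in the link's queue}, which excludes packets already delivered on that link and packets dropped due to the finite buffer. That distinction is precisely where your relabelling step would break.
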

\begin{proof}[Proof sketch]
 We use a coupling and forward induction to prove it. We first consider the comparison between the preemptive LGFS policy and any arbitrary policy $\pi$. We couple the packet departure processes at each link of the network such that they are identical under both policies. Then, we use the forward induction over the packet delivery events at each link (using Lemma \ref{lem3}) and the packet arrival events at node 0 (using Lemma \ref{lem4}) to show that the generation times of the freshest packets at each node of the network are maximized under the preemptive LGFS policy. By this, the preemptive LGFS policy is age-optimal among all causal policies. For more details, see Appendix~\ref{Appendix_A}.
\end{proof}


Theorem \ref{thm1} tells us that for arbitrary sequence of packet generation times $\{s_1, s_2, \ldots\}$, sequence of arrival times $\{a_{10}, a_{20}, \ldots\}$ at node $0$, network topology $\mathcal{G(V, L)}$, and buffer sizes $(B_{ij},(i,j)\in\mathcal{L})$, the prmp-LGFS policy achieves optimality of the joint distribution of the age processes at the network nodes within the policy space $\Pi$. In addition, \eqref{thm1eq2} tells us that the prmp-LGFS policy minimizes any non-decreasing age penalty functional $g$, including the time-average age \eqref{functional1}, average peak age \eqref{functional2}, and non-linear age functions \eqref{functional3}. 

As we mentioned before, the result of Theorem \ref{thm1} still holds for the multiple-gateway model shown in Fig. \ref{model_b}. In particular, Lemma \ref{lem4} can be applied to each packet arrival event at each gateway, and hence the result follows. It is also worth pointing out that the arrival processes at the gateway nodes may be heterogeneous, and they do not change according to the scheduling policy.  A weaker version of Theorem \ref{thm1} can be obtained as follows.

\begin{corollary}
If the conditions of Theorem \ref{thm1} hold, then for any arbitrary packet generation and arrival processes at the external source and node 0, respectively, and for all $\pi\in\Pi$
\begin{align}
\mathbf{\Delta_{\text{prmp-LGFS}}}\leq_{\text{st}} \mathbf{\Delta_{\pi}}.
\end{align}
\end{corollary}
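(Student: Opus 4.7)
The plan is to deduce this Corollary from Theorem~\ref{thm1} by a standard unmixing argument. Theorem~\ref{thm1} already establishes the stochastic dominance \emph{conditional} on every deterministic realisation of $\mathcal{I}=\{\mathcal{G}(\mathcal{V},\mathcal{L}),(B_{ij})_{(i,j)\in\mathcal{L}},(s_l,a_{l0})_{l\ge 1}\}$. The Corollary merely removes that conditioning and permits $\{s_l\}$ and $\{a_{l0}\}$ to be arbitrary (in particular, random) input processes, so the task reduces to lifting conditional stochastic dominance to unconditional stochastic dominance.

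First I would place both policies on a single probability space. Put the joint law of $\{s_l\}$ and $\{a_{l0}\}$ on a factor $\Omega_1$, an i.i.d.\ family of exponential service clocks (one independent family per link) on $\Omega_2$, and any internal randomisation used by $\pi$ on $\Omega_3$; let $\Omega=\Omega_1\times\Omega_2\times\Omega_3$ carry the product law. Because both prmp-LGFS and $\pi$ are causal, $\mathbf{\Delta}_{\text{prmp-LGFS}}$ and $\mathbf{\Delta}_{\pi}$ are measurable functionals of $\omega\in\Omega$ that share the same $\Omega_1$-marginal $\mathcal{I}$; in particular, the common exogenous input $\mathcal{I}$ couples the two processes. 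By the definition of stochastic ordering of stochastic processes recalled in the excerpt, it then suffices to show, for every finite tuple $t_1<\cdots<t_n$ and every upper set $U\subseteq\mathbb{R}^{nN}$, the inequality
\begin{align*}
&\mathbb{P}\{(\mathbf{\Delta}_{\text{prmp-LGFS}}(t_i))_{i=1}^{n}\in U\} \\
&\qquad \leq\; \mathbb{P}\{(\mathbf{\Delta}_{\pi}(t_i))_{i=1}^{n}\in U\}.
\end{align*}
Theorem~\ref{thm1} delivers the conditional version of this inequality almost surely in $\mathcal{I}$; taking $\mathbb{E}[\,\cdot\,]$ on both sides and invoking the tower property then yields the unconditional inequality, which is exactly the claim of the Corollary.

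I do not foresee a serious technical obstacle here: the whole argument is a one-line application of total probability to the conditional ordering already proved in Theorem~\ref{thm1}. The only care required is in the setup of the common probability space, where one must verify that the causality of both policies renders the entire age processes measurable functionals on $\Omega$, so that $\sigma(\mathcal{I})$ is a legitimate sub-$\sigma$-field on which to condition and Theorem~\ref{thm1} applies realisation-wise. Once that measurability is in place, no further work is needed, and the same integration trick can be used verbatim to upgrade the expectation-level statement \eqref{thm1eq2} into an unconditional version for non-decreasing $g$.
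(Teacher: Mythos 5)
Your proof is correct and follows the same route as the paper: the paper invokes Theorem~6.B.16.(e) of Shaked and Shanthikumar (closure of stochastic ordering under mixtures) applied to the conditional ordering from Theorem~\ref{thm1}, and your argument via the tower property over realizations of $\mathcal{I}$ is precisely a re-derivation of that closure result. The extra care you devote to setting up the product probability space and verifying measurability is implicit in the paper's citation but does not change the substance of the argument.
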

\begin{proof}
We consider a mixture over the realizations of packet generation and arrival processes (arrival process at node 0) to prove the result. In particular, by using the result of Theorem \ref{thm1} and Theorem 6.B.16.(e) in \cite{shaked2007stochastic}, the corollary follows.
\end{proof}

\subsection{New-Better-than-Used Transmission Times, Preemption is Allowed}\label{General_transmission_preemption_on}
Although the preemptive LGFS policy can achieve age-optimality when the transmission times are exponentially distributed, it does not always, as we will observe later, minimize the age for non-exponential transmission times. We aim to answer the question
of whether for an important class of distributions that are more general than exponential, optimality or near-optimality can be achieved while preemption is allowed. We here consider the classes of New-Better-than-Used (NBU) packet transmission time distributions, which are defined as follows.
\begin{definition}  \textbf{New-Better-than-Used distributions \cite{shaked2007stochastic}:} Consider a non-negative random variable $X$ with complementary cumulative distribution function (CCDF) $\bar{F}(x)=\mathbb{P}[X>x]$. Then, $X$ is \textbf{New-Better-than-Used (NBU)} if for all $t,\tau\geq0$
\begin{equation}\label{NBU_Inequality}
\bar{F}(\tau +t)\leq \bar{F}(\tau)\bar{F}(t).
\end{equation} 
\end{definition}
Examples of NBU \footnote{The word \textbf{better} in the terminology \textbf{New-Better-than-Used} refers to that a random variable with a long lifetime is better than that with a shorter lifetime \cite{shaked2007stochastic}. In our case, the random variable is the transmission time, and longer transmission time is worse in terms of the age. Thus, the word \textbf{better} here does not imply an improvement in the age performance.} distributions include constant transmission time, (shifted) exponential distribution, geometric distribution, Erlang distribution, negative binomial distribution, etc. Recently,  age was analyzed in single hop networks for exponential transmission times with transmission error in \cite{PAoI_in_error}, and for Gamma-distributed transmission times in \cite{Gamma_dist}. These studies did not answer the question of which policy can be (near) age-optimal for non-exponential transmission times in single hop networks. We provided a unified answer to identify the policy that is near age-optimal in single hop networks in \cite{age_optimality_multi_server, Bedewy_NBU_journal}. Since the question has remained open for multihop networks, we here extend our investigation to answer this question in multihop networks and  identify the near age-optimal policy for a more general class of transmission time distributions.


We propose a non-preemptive LGFS (non-prmp-LGFS) policy. It is important to note that under non-prmp-LGFS policy, the fresh packet replaces the oldest packet in a link's queue when the queue is already at its maximum buffer level (i.e., the queue is already full). 
The implementation details of non-prmp-LGFS policy are depicted in Algorithm \ref{alg2}.
\begin{algorithm}[!t]
\SetKwData{NULL}{NULL}
\SetCommentSty{small} 
$\delta_{ij}:=0$\tcp*[r]{$\delta_{ij}$ is the smallest generation time of the packets in the queue ($B_{ij}$)}
\While{the system is ON} {
\If{a new packet $p_i$ with generation time $s$ arrives to node $i$}{ 
\uIf{the link $(i,j)$ is busy}{
\uIf{ Buffer $(B_{ij})$ is full}
{\uIf{$s>\delta_{ij}$}{
Packet $p_i$ replaces the packet with generation time $\delta_{ij}$ in the queue\;}
\Else{
Drop packet $p_i$\;
}
Set $\delta_{ij}$ to the smallest generation time of the packets in the queue ($B_{ij}$)\;}
\Else{
Store packet $p_i$ in the queue\; 
Set $\delta_{ij}$ to the smallest generation time of the packets in the queue ($B_{ij}$)\;
}}
\Else(~~~~~~~~~~~~~~~~~~~~~\tcp*[h]{The link is idle.})
{
The new packet is sent over the link\;
} 

}
\If{a packet is delivered to node $j$}{
 \If{ the queue is not empty}{
The freshest packet in the queue is sent over the link\;
 }
}
}
\caption{Non-preemptive Last-Generated, First-Served policy at the link $(i,j)$.}\label{alg2}
\end{algorithm}

\begin{figure}
\includegraphics[scale=0.6]{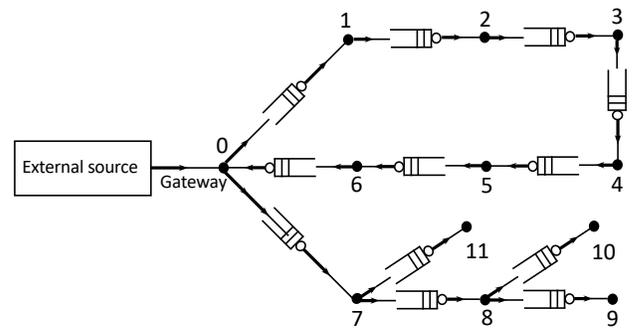}
\centering
\caption{ Information updates over a multihop network, where each node in the network (except the gateway) is restricted to receive data from only one node.}\label{Fig:sysMod2}
\vspace{-0.3cm}
\end{figure}

While we are able to consider a more general class of transmission time distributions, we are able to prove this result for a somewhat more restrictive network than the general topology $\mathcal{G(V,L)}$. The network here is represented by a directed graph $\mathcal{G'(V,L)}$, in which each node $j\in\mathcal{V}\backslash \{0\}$ has one incoming link. An example of this network topology is shown in Fig. \ref{Fig:sysMod2}. We show that the non-prmp-LGFS policy can come close to age-optimal into two steps: i) we construct an infeasible policy which provides the age lower bound, ii) we then show the near age-optimality result by identifying the gap between the constructed lower bound and our proposed policy non-prmp-LGFS. The construction of the the infeasible policy and the lemma that explains the age lower bound are presented in Appendix \ref{Appendix_A'}.


We can now proceed to characterize the age performance of policy non-prmp-LGFS among the policies in $\Pi$. Define a set of the parameters $\mathcal{I}'=\{\mathcal{G'}(\mathcal{V}, \mathcal{L}), (B_{ij}, (i,j)\in\mathcal{L}),$  $s_l, a_{l0},l=1,2,\ldots\}$, where  $\mathcal{G'}(\mathcal{V}, \mathcal{L})$ is the network graph with the new restriction, $B_{ij}$ is the queue buffer size of the link $(i,j)$, $s_l$ is the generation time of packet $l$, and $a_{l0}$ is the arrival time of packet $l$ to node $0$. Define $\mathcal{H}_k$ as the set of nodes in the $k$-th hop, i.e., $\mathcal{H}_k$ is the set of nodes that are separated by $k$ links from node $0$ \footnote{Node $0$ is in $\mathcal{H}_0$.}. Let  $i_{j,k}$ represent the index of the node in $\mathcal{H}_k$ that is in the path to the node $j$ (for example, in Fig. \ref{Fig:sysMod2}, $i_{11,1}=7$ and $i_{10,2}=8$). Define $X_{j}$ as the packet transmission time over the incoming link to node $j$. We use Lemma \ref{lem_lower_bound_all} in Appendix \ref{Appendix_A'} to prove the following theorem.

\begin{theorem}\label{thmnbu_gab}
Suppose that the packet transmission times are NBU, \emph{independent} across links, and \emph{i.i.d.} across time, then for all $\mathcal{I}'$ satisfying $B_{ij}\geq 1$ for each $(i,j)\in\mathcal{L}$
\begin{equation}\label{gap_main1}
\begin{split}
&\min_{\pi\in\Pi}[\bar{\Delta}_{j,\pi}\vert\mathcal{I}']\leq [\bar{\Delta}_{j, \text{non-prmp-LGFS}}\vert\mathcal{I}']\leq\\&
\min_{\pi\in\Pi}[\bar{\Delta}_{j,\pi}\vert\mathcal{I}'] \!+\! \mathbb{E}[X_{i_{j,1}}]\!+\!2\!\sum_{m=2}^k\mathbb{E}[X_{i_{j,m}}],\forall j\in\mathcal{H}_k, \forall k\geq 1,\!\!\!\!\!\!\!\!\!\!\!
\end{split}
\end{equation}
where $\bar{\Delta}_{j,\pi}=\limsup_{T\rightarrow\infty}\frac{\mathbb{E}[\int_{0}^{T}\Delta_{j,\pi}(t)dt]}{T}$ is the average age at node $j$ under policy $\pi$.
\end{theorem}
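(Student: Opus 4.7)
The left inequality is immediate since non-prmp-LGFS is itself a causal policy in $\Pi$, so $\min_{\pi\in\Pi}[\bar{\Delta}_{j,\pi}\vert\mathcal{I}'] \leq [\bar{\Delta}_{j,\text{non-prmp-LGFS}}\vert\mathcal{I}']$. All the work is in the upper bound. My plan is to compare non-prmp-LGFS against the infeasible lower-bound policy supplied by Lemma~\ref{lem_lower_bound_all} along the unique path $0 \to i_{j,1} \to i_{j,2} \to \cdots \to i_{j,k-1} \to j$ from the gateway to node $j$ (which is well defined because every non-gateway node of $\mathcal{G}'$ has only one incoming link), and to show that at each hop the extra delay incurred by non-prmp-LGFS is controlled by at most one or two expected transmission times.

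The key analytical tool is the NBU inequality~\eqref{NBU_Inequality}. Its pivotal consequence is that, whatever event has caused a packet to already have been in service for an elapsed time $\tau$ on link $(i_{j,m-1},i_{j,m})$, the residual transmission time is stochastically dominated by a fresh copy of $X_{i_{j,m}}$:
\begin{equation*}
\Pr\!\left(X_{i_{j,m}} - \tau > t \,\big\vert\, X_{i_{j,m}} > \tau\right) = \frac{\bar{F}(\tau + t)}{\bar{F}(\tau)} \leq \bar{F}(t),
\end{equation*}
so the expected residual is at most $\mathbb{E}[X_{i_{j,m}}]$. This lets me replace any awkward ``waiting for an ongoing transmission'' delay under non-prmp-LGFS by a clean, fresh $\mathbb{E}[X_{i_{j,m}}]$, independent of the upstream history of the schedule or of the buffer contents.

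With these ingredients in hand, my plan is an induction on the hop index $m = 1, 2, \ldots, k$ along the path to node $j$, tracking sample-pathwise the generation time $U_{i_{j,m},\text{non-prmp-LGFS}}(t)$ of the freshest packet that has reached the $m$-th node on the path and comparing it with the corresponding quantity under the infeasible policy of Lemma~\ref{lem_lower_bound_all}. At the first hop ($m=1$), the LGFS discipline at node $0$ guarantees that as soon as the ongoing transmission on $(0,i_{j,1})$ completes, the freshest currently-available packet is the one served next, and by NBU the residual-plus-transmission delay contributes an expected gap of at most $\mathbb{E}[X_{i_{j,1}}]$ against the lower bound. For the inductive step ($m \geq 2$), a freshest packet arriving at node $i_{j,m-1}$ must first wait for any ongoing transmission on $(i_{j,m-1},i_{j,m})$ to finish (expected residual $\leq \mathbb{E}[X_{i_{j,m}}]$ by NBU) and then requires a full transmission time (another $\mathbb{E}[X_{i_{j,m}}]$), with no newer packet able to leapfrog it by the LGFS rule; this yields the factor-of-two coefficient at each hop beyond the first. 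Summing along the path and integrating in time gives the claimed additive gap $\mathbb{E}[X_{i_{j,1}}] + 2\sum_{m=2}^k \mathbb{E}[X_{i_{j,m}}]$ on the average age.

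The main obstacle I anticipate is the bookkeeping at intermediate hops: the arrival process from the parent link is not renewal, its inter-arrival times depending on the upstream schedule and on the random transmission times of older packets, so stationary queueing identities are unavailable. The remedy is a genuinely sample-path argument---conditioning on the full history, invoking NBU to dominate each residual by an independent fresh copy of $X_{i_{j,m}}$, and only then taking expectations hop by hop. A secondary technical point is converting the per-packet delivery-time bound into a bound on the $\limsup$-defined time-average age $\bar{\Delta}_{j,\pi}$, which follows from the elementary fact that uniformly inflating each inter-delivery interval by a constant in expectation inflates the area under the age sawtooth by the same constant in expectation.
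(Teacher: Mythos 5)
Your overall plan is aligned with the paper's: compare non-prmp-LGFS against the infeasible policy hop by hop along the unique path, use NBU to dominate the residual service time at each hop by a fresh copy of $X_{i_{j,m}}$, and then convert the per-packet delay gap into a gap on the time-average age. But there is a genuine gap in your first-hop accounting that, as written, would give the wrong constant. You claim the hop-1 contribution is at most $\mathbb{E}[X_{i_{j,1}}]$ because ``the residual-plus-transmission delay contributes an expected gap of at most $\mathbb{E}[X_{i_{j,1}}]$.'' Taken literally, residual $\leq_{\text{st}} X_{i_{j,1}}$ plus a fresh transmission $X_{i_{j,1}}$ gives an expected delay of up to $2\mathbb{E}[X_{i_{j,1}}]$, the same as every later hop, not $\mathbb{E}[X_{i_{j,1}}]$. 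The reason hop one is special in the paper's proof is a \emph{cancellation}: the arrival stream at node $0$ is policy-invariant, both the infeasible policy and non-prmp-LGFS run work-conserving LGFS on link $(0,i_{j,1})$, and both keep that link busy for the same realized transmission duration, so the queueing delay $R_{l i_{j,1}}$ is \emph{identical} under the two policies and cancels in the difference $D_{lj}(P)-\Gamma_{lj}(IP)$; only the own-transmission time $X_{i_{j,1}}$ survives. At hops $m\ge 2$ the arrival streams at $i_{j,m-1}$ differ between the two policies, so there is no such cancellation, and one must bound $R_{l i_{j,m}}(P)$ by NBU (costing one $\mathbb{E}[X_{i_{j,m}}]$) and add the transmission (a second $\mathbb{E}[X_{i_{j,m}}]$) — that is where the factor of two comes from. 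Without explicitly invoking this first-hop cancellation, your induction would produce $2\sum_{m=1}^{k}\mathbb{E}[X_{i_{j,m}}]$, a strictly weaker bound than the claimed one.

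A secondary imprecision is in the conversion from a per-packet delay bound to a bound on $\bar{\Delta}_{j}$. The statement that ``uniformly inflating each inter-delivery interval by a constant in expectation inflates the area under the age sawtooth by the same constant in expectation'' is not the argument that works here: the inflation is not uniform, and the age sawtooth is reset at random, policy-dependent times. What the paper actually does is decompose the excess area between $\Delta_{j,P}$ and $\Delta^{\text{LB}}_{j,IP}$ into parallelograms $G_l = \tau_l z_l$, where $\tau_l = s_l - s_{l-1}$ is the inter-\emph{generation} time (not inter-delivery) and $z_l$ is the per-packet delay gap, so that $\int_0^T G_j(t)\,dt \le \sum_{l\le N(T)} \tau_l z_l$ while $T \ge \sum_{l \le N(T)}\tau_l$. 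The ratio is then bounded by conditioning on the generation process and $N(T)$ and using the crucial fact that the bounding variables $z_l'$ — which are built from transmission times and NBU-dominated residuals — are \emph{independent} of the generation process, so $\mathbb{E}[z_l'\mid \tau, N(T)] = \mathbb{E}[z_l']$. You should make both the parallelogram decomposition and this independence explicit; the elementary fact you state does not by itself supply them.
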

\begin{proof}[Proof sketch]
We use the infeasible policy and the lower bound process that are constructed in Appendix \ref{Appendix_A'} to prove Theorem \ref{thmnbu_gab} into three steps:

\emph{Step 1:} We derive an upper bound on the time differences between the arrival times (at each node) of the fresh packets under the infeasible policy and those under policy non-prmp-LGFS.

\emph{Step 2:} We use the upper bound derived in Step 1 to derive an upper bound on the average gap between the constructed infeasible policy in Appendix \ref{Appendix_A'} and the non-prmp-LGFS policy.

 \emph{Step 3:} Finally, we use the upper bound on the average gap together with Lemma \ref{lem_lower_bound_all}  in Appendix \ref{Appendix_A'} to prove \eqref{gap_main1}. For the full proof, see Appendix~\ref{Appendix_D}.
\end{proof}
Theorem \ref{thmnbu_gab} tells us that for arbitrary sequence of packet generation times $\{s_1, s_2, \ldots\}$, sequence of arrival times $\{a_{10}, a_{20}, \ldots\}$ at node $0$, and buffer sizes $(B_{ij}\geq 1,(i,j)\in\mathcal{L})$, the non-prmp-LGFS policy is within a constant age gap from the optimum average age among all policies in $\Pi$. Similar to Theorem \ref{thm1}, we can show that the result of Theorem \ref{thmnbu_gab} still holds for the multiple-gateway model shown in Fig. \ref{model_b}.

\begin{remark}
The reason behind considering the restrictive network topology $\mathcal{G'(V, L)}$ is as follows: In the general network topology $\mathcal{G(V,L)}$, a node can receive update packets from multiple paths. As a result, the arrival time of a fresh packet at this node depends on the fastest path that delivers this packet to this node. This fastest path may differ from one packet to another on sample-path. Thus, it becomes challenging to establish an upper bound that is very close to the age lower bound (Steps 1 and 2 in the proof of Theorem \ref{thmnbu_gab}) using sample-path and coupling techniques, in this case.
\end{remark}

\subsection{General Transmission Times, Preemption is Not Allowed}\label{general_no_preemption}
Finally, we study age-optimal packet scheduling for networks that do not allow for preemption and for which the packet transmission times are \emph{arbitrarily} distributed, \emph{independent} across the links and \emph{i.i.d.} across time.  Since preemption is not allowed, we are restricted to non-preemptive policies within $\Pi$. Moreover, we consider work-conserving policies. We use $\Pi_{npwc}\subset\Pi$ to denote the set of non-preemptive work-conserving policies.


%
We consider the non-prmp-LGFS policy, where we show that it is age-optimal among the policies in $\Pi_{npwc}$ in the following theorem.
 
\begin{theorem}\label{thm2}
If the packet transmission times are \emph{independent} across the links and \emph{i.i.d.} across time, then for all $\mathcal{I}$ and $\pi\in\Pi_{npwc}$ 
\begin{align}
\!\!\!\![\mathbf{\Delta_{\text{non-prmp-LGFS}}}\vert\mathcal{I}]
 \!\!\leq_{\text{st}}\!\! [\mathbf{\Delta_{\pi}}\vert\mathcal{I}],\!\!\!
\end{align}
or equivalently, for all $\mathcal{I}$ and non-decreasing functional $g$
 \begin{equation}\label{thm2eq2}
\begin{split}
\mathbb{E}[g(\bm{\Delta}_{\text{non-prmp-LGFS}})\vert\mathcal{I}]= \min_{\pi\in\Pi_{npwc}} \mathbb{E}[g(\bm{\Delta}_\pi)\vert\mathcal{I}], 
\end{split}
\end{equation}
provided the expectations in \eqref{thm2eq2} exist.
\end{theorem}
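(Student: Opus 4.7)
The plan is to follow the coupling-and-forward-induction blueprint used for Theorem \ref{thm1}, adapted to the non-preemptive, arbitrary-distribution setting. Fix an arbitrary work-conserving non-preemptive policy $\pi\in\Pi_{npwc}$. For each link $(i,j)$ and each positive integer $k$, let $X_{ij,k}$ denote the duration of the $k$-th transmission started on that link. Because these durations are i.i.d.\ across time and independent of the scheduling decisions, I would couple non-prmp-LGFS and $\pi$ on a common probability space so that both policies use the same realization $X_{ij,k}$. Under this coupling it suffices to prove the sample-path inequality $U_{j,\text{non-prmp-LGFS}}(t)\geq U_{j,\pi}(t)$ for every node $j$ and every $t\geq 0$, from which the multivariate process ordering of $\mathbf{\Delta}$ and equation \eqref{thm2eq2} then follow by the same standard ordering/mixture arguments invoked in the corollary to Theorem \ref{thm1}.

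The induction proceeds chronologically over the system events, which are of two kinds: exogenous arrivals at gateway node $0$, and packet-delivery events on the individual links. Arrivals at node $0$ are unaffected by the scheduling policy, so the analogue of Lemma \ref{lem4} applies essentially unchanged, and the invariant on $U_j$ is trivially preserved at those events. The heart of the argument is a non-preemptive analogue of Lemma \ref{lem3} for delivery events: I would show that whenever non-prmp-LGFS completes a transmission on link $(i,j)$, the packet it delivers is the freshest one that was in that link's queue at the moment the transmission began; combined with the inductive hypothesis at node $i$, this implies that the packet delivered under non-prmp-LGFS has a generation time at least as large as that of the packet $\pi$ delivers on its matching transmission, thereby propagating the freshness dominance from node $i$ to node $j$.

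The principal obstacle, which does not arise in the preemptive-exponential setting of Theorem \ref{thm1}, is that once a non-preemptive server has committed to a packet it cannot switch to a fresher one that arrives mid-transmission, so the $k$-th delivery on link $(i,j)$ under non-prmp-LGFS and under $\pi$ may occur at genuinely different times, and the coupling does not by itself produce a clean temporal pairing of delivery events. To overcome this I would pair events by chronological order rather than by index and carry an auxiliary invariant asserting that, for every link $(i,j)$ and every time $t$, the supremum of the generation times of all packets that have ever entered that link's buffer under non-prmp-LGFS is at least the corresponding supremum under $\pi$. Together with the work-conserving property, which forces every idle link with a nonempty queue to begin its next transmission immediately, and the coupled transmission durations, this auxiliary invariant keeps the per-link ``freshness frontier'' dominated hop by hop, closing the induction and yielding the desired sample-path inequality for $U_j$ at every node.
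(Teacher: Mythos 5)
Your overall strategy---coupling the per-link transmission durations, establishing the pathwise dominance $\mathbf{U}_{\text{non-prmp-LGFS}}(t)\geq\mathbf{U}_{\pi}(t)$ by forward induction over arrival and delivery events, and then passing to stochastic ordering of $\mathbf{\Delta}$ and to the functional form via the standard machinery---is exactly the route the paper takes in Appendix~\ref{Appendix_B}, with Lemma~\ref{lem3np} and Lemma~\ref{lem4np} playing the roles of your two inductive steps.

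Where you diverge is in the ``principal obstacle'' you raise, which is in fact a non-obstacle: under the coupling in which the $k$-th transmission on link $(i,j)$ lasts the same duration $X_{ij,k}$ under both policies, the delivery \emph{times} on every link coincide exactly between the non-preemptive LGFS policy and any $\pi\in\Pi_{npwc}$. The reason is that both policies are non-preemptive and work-conserving, so each link's busy-period structure (when a service starts, when it ends, when the next one starts) is determined entirely by the arrival times at that link's source node and by the coupled sequence of per-index durations---\emph{not} by which packet the scheduler chooses to serve. Since arrival times at node~$0$ are exogenous and identical, an induction down the graph shows arrival times at every node, and hence every link's completion epochs, agree under both policies. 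Thus chronological pairing and index pairing of delivery events are the same pairing, and your auxiliary invariant about the ``per-link freshness frontier'' is subsumed by (indeed equivalent to) the main invariant $U_{i,\text{non-prmp-LGFS}}(t)\geq U_{i,\pi}(t)$, since a packet arrived at node~$i$ is by assumption immediately visible to every outgoing link of~$i$. The paper's Lemma~\ref{lem3np} handles exactly the interesting case you flag---non-prmp-LGFS delivering a \emph{staler} packet than $\pi$ at the matched delivery time---by arguing from the LGFS discipline that the fresher packet must already have been delivered earlier, so $U_{j}$ remains dominated. Your proof is therefore correct in substance but carries redundant bookkeeping that the paper's cleaner observation about synchronized busy periods renders unnecessary.
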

\begin{proof}
The proof of Theorem \ref{thm2} is similar to that of Theorem \ref{thm1}. The difference is that preemption is not allowed here. See Appendix~\ref{Appendix_B} for more details.
\end{proof}

It is interesting to note from Theorem \ref{thm2} that, age-optimality can be achieved for arbitrary transmission time distributions, even if the transmission time distribution differs from a link to another. General service time distributions have been considered in some recent age analysis on single-hop networks \cite{2015ISITHuangModiano,FCFSGG1}. Theorem \ref{thm2} explains the age-optimal policies in these scenarios. Moreover, similar to Theorem \ref{thm1}, the result of Theorem \ref{thm2} still holds
for the multiple-gateway model shown in Fig. \ref{model_b}.


\begin{remark}
It is worth observing that the results in Theorem \ref{thm1}, Theorem \ref{thmnbu_gab}, and Theorem \ref{thm2} hold for any link buffer sizes $B_{ij}$'s. Hence, the buffer sizes can be chosen according to the application. In particular, in some applications, such as  news and social updates, users are interested in not just the latest updates, but also past news. Thus, in such application, we may need to have queues with buffer sizes greater than one to store old packets and send them later whenever links become idle. On the other hand, there are some other applications, in which old packets become useless when the fresher packets exist. Thus, in these applications, buffer sizes can be chosen to be zero (one) when we follow the prmp-LGFS (non-prmp-LGFS) scheduling policy. 
\end{remark}

\section{Numerical Results}\label{Simulations}
\begin{figure}
\includegraphics[scale=0.6]{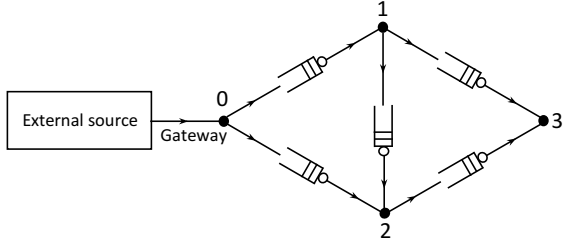}
\centering
\caption{ A  multihop network.}\label{Fig:simulation}
\vspace{-0.3cm}
\end{figure}

We now present numerical results that validate our theoretical findings. The inter-generation times at all setups are \emph{i.i.d.} Erlang-2 distribution with mean $1/\lambda$.


\begin{figure}[t]
\centering
\includegraphics[scale=0.4]{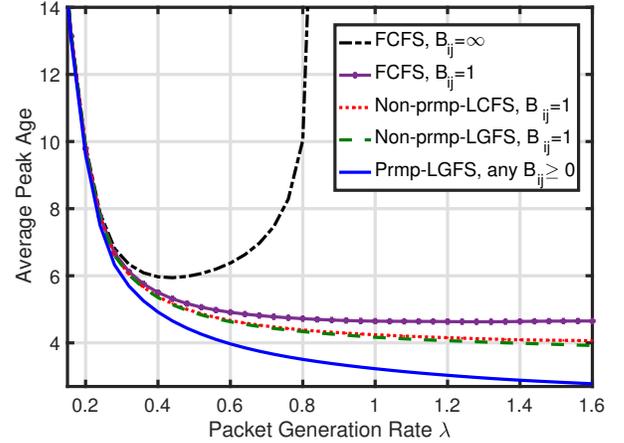}
\caption{Average peak age at node 2 versus packets generation rate $\lambda$ for exponential packet transmission times.}
\vspace{-.0in}
\label{avg_age1}
\end{figure}

We use Figure \ref{avg_age1} to validate the result in Section \ref{exp_trans_preemption_on}. We consider the network in Fig. \ref{Fig:simulation}. The time difference between packet generation and arrival to node $0$, i.e., $a_{i0}-s_i$, is modeled to be either  1 or 100, with equal probability. This means that the update packets may arrive to node $0$ out of order of their generation time. Figure \ref{avg_age1}
 illustrates the average peak age at node 2 versus the packet generation rate $\lambda$ for the multihop network in Fig. \ref{Fig:simulation}. The packet transmission times are exponentially distributed with mean 1 at links $(0,1)$ and $(1,2)$, and mean 0.5 at link $(0,2)$.
Note that the age performance of the preemptive LGFS policy is not affected by the buffer sizes. This is because, in the case of the preemptive LGFS policy, queues are only used to store the old packets, while a fresh packet can start service as soon as it arrives at a queue. Hence, the preemptive LGFS policy has the same performance for different buffer sizes.
One can observe that the preemptive LGFS policy achieves a better (smaller) peak age at node $2$ than the non-preemptive LGFS policy, non-preemptive LCFS policy, and FCFS policy, where the buffer sizes are either 1 or infinity. It is important to emphasize that the peak age is minimized by preemptive LGFS policy for out of order packet receptions at node $0$, and general network topology. This numerical result shows agreement with Theorem \ref{thm1}.

\begin{figure}[t]
\centering
\includegraphics[scale=0.4]{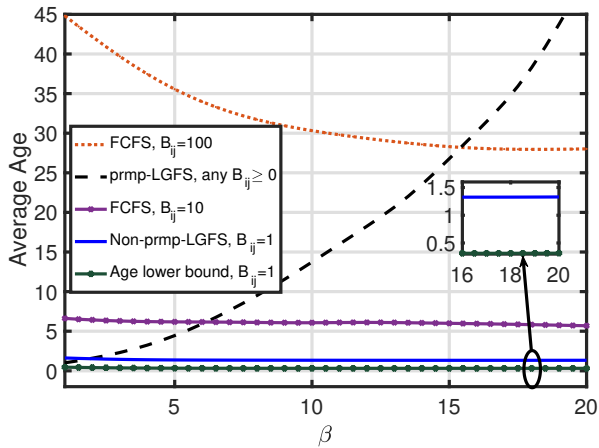}
\caption{Average age at node 5 under gamma transmission time distributions at each link with different shape parameter $\beta$.}
\vspace{-.0in}
\label{avg_age3}
\end{figure}

We use Figure \ref{avg_age3} to validate the results in Section \ref{General_transmission_preemption_on}. We consider the network in Fig. \ref{Fig:sysMod2}.
Figure \ref{avg_age3} illustrates the average age at node 5 under gamma transmission time distributions at each link with different shape parameter $\beta$, where the buffer sizes are either 1, 10, or 100. The mean of the gamma transmission time distributions at each link is normalized to 0.2. The time difference ($a_{i0}-s_i$) between packet generation and arrival to node $0$ is Zero. Note that the average age of the FCFS policy with infinite buffer sizes is extremely high in this case and hence is not plotted in this figure. The ``Age lower bound'' curve is generated by using $\frac{\int_{0}^{T}\Delta_{5,IP}^{LB}}{T}$ when the buffer sizes are 1 which, according to Lemma \ref{lem_lower_bound_all}, is a lower bound of the optimum average age at node 5. We can observe that the gap between the ``Age lower bound'' curve and the average age of the non-prmp-LGFS policy at node 5 is no larger than $9E[X] = 1.8$, which agrees with Theorem \ref{thmnbu_gab}. In addition, we can observe that prmp-LGFS policy achieves the best age performance among all
plotted policies when $\beta=1$. This is because a gamma distribution with shape parameter $\beta=1$ is an exponential distribution. Thus, age-optimality can be achieved in this case by policy prmp-LGFS as stated in Theorem \ref{thm1}. However, as can be seen in the figure, the average age at node 5 of the prmp-LGFS policy blows up as the shape parameter $\beta$ increases and the non-prmp-LGFS policy achieves the best age performance among all plotted policies when $\beta>2$. The reason for this phenomenon is as follows: As $\beta$ increases, the variance (variability) of normalized gamma distribution decreases. Hence, when a packet is preempted, the service time of a new packet is probably longer than the remaining service time of the preempted packet. Because the generation rate is high, packet preemption happens frequently, which leads to infrequent packet delivery and increases the age. This phenomenon occurs heavily at the first link (link $(0,1)$) which, in turn, affects the age at the subsequent nodes.

\begin{figure}[t]
\centering
\includegraphics[scale=0.4]{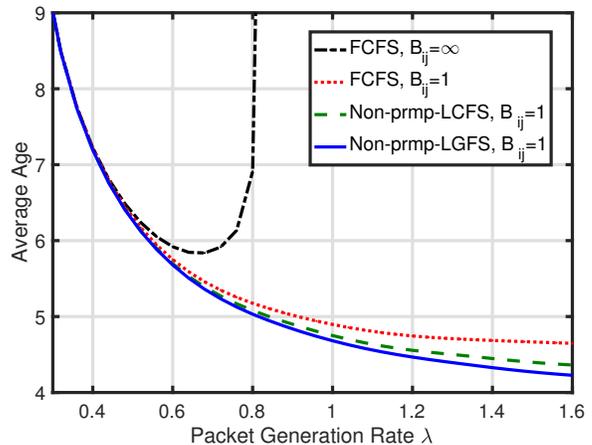}
\caption{Average age at node 3 versus packets generation rate $\lambda$ for general packet transmission time distributions.}
\vspace{-.0in}
\label{avg_age2}
\end{figure}
We use Figure \ref{avg_age2} to validate the result in Section \ref{general_no_preemption}.
We consider the network in Fig. \ref{Fig:simulation}. The time difference between packet generation and arrival to node $0$, i.e., $a_{i0}-s_i$, is modeled to be either  1 or 100, with equal probability. Figure \ref{avg_age2} plots  the time-average age at node 3 versus the packets generation rate $\lambda$ for the multihop network in Fig. \ref{Fig:simulation}. The plotted policies are FCFS policy, non-preemptive LCFS, and non-preemptive LGFS policy, where the buffer sizes are either 1 or infinity. The packet transmission times at links $(0,1)$ and $(1,3)$ follow a gamma distribution with mean 1. The packet transmission times at links $(0, 2)$, $(1, 2)$, and $(2, 3)$ are distributed as the sum of a constant with value 0.5 and a value drawn from an exponential distribution with mean 0.5. We find that the non-preemptive LGFS policy achieves the best age performance among all plotted policies. By comparing the age performance of the non-preemptive LGFS  and non-preemptive LCFS policies, we observe that the LGFS scheduling principle improves the age performance when the update packets arrive to node 0 out of the order of their generation times. It is important to note that the non-preemptive LGFS policy minimizes the age among the non-preemptive work-conserving policies even if the packet transmission time distributions are heterogeneous across the links. This observation agrees with Theorem \ref{thm2}. We also observe that the average age of FCFS policy with $B_{ij}=\infty$ blows up when the traffic intensity is high. This is due to the increased congestion in the network which leads to a delivery of stale packets. Moreover, in case of the FCFS policy with $B_{ij}=1$, the average age is finite at high traffic intensity, since the fresh packet has a better opportunity to be delivered in a relatively short period compared with FCFS policy with $B_{ij}=\infty$. 

\section{Conclusion}\label{Concl}
In this paper, we studied the age minimization problem in interference-free multihop networks. We considered general system settings including arbitrary network topology, packet generation and arrival times at node 0, and queue buffer sizes. A number of scheduling policies were developed and proven to be (near) age-optimal in  a  stochastic  ordering  sense for minimizing any non-decreasing functional of the age processes. In particular, we showed that age-optimality can be achieved when: i) preemption is allowed and the packet transmission times are exponentially distributed, ii) preemption is not allowed and the packet transmission times are arbitrarily distributed (among work-conserving policies). Moreover, for networks that allow for preemption and the packet transmission times are NBU, we showed that the non-preemptive LGFS policy is near age-optimal in a somewhat more restrictive network topology.

\appendices
\section{Proof of Theorem \ref{thm1}}\label{Appendix_A}
Let us define the system state of a policy $\pi$:

 \definition  At any time $t$, the \emph{system state} of policy $\pi$ is specified by  $\mathbf{U}_\pi(t)=( U_{0,\pi}(t), U_{2,\pi}(t),$ $\ldots,U_{N-1,\pi}(t)) $, where $U_{j,\pi}(t)$ is the generation time of the freshest packet that arrived at node $j$ by time $t$.
Let $\{\mathbf{U}_\pi(t), t\in[0,\infty)\}$ be the state process of policy $\pi$, which is assumed to be right-continuous. For notational simplicity, let policy $P$ represent the preemptive LGFS policy.

The key step in the proof of Theorem \ref{thm1} is the following lemma, where we compare policy $P$ with any work-conserving policy $\pi$.

 \begin{lemma}\label{lem2}
 Suppose that $\mathbf{U}_{P}(0^-)=\mathbf{U}_{\pi}(0^-)$ for all work conserving policies $\pi$, then for all $\mathcal{I}$,
\begin{equation}\label{law9}
\begin{split}
[\{\mathbf{U}_{P}(t),  t\in[0,\infty)\}\vert\mathcal{I}]\geq_{\text{st}}[\{\mathbf{U}_{\pi}(t), t\in[0,\infty)\}\vert\mathcal{I}].
 \end{split}
\end{equation}
\end{lemma}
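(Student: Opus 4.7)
The strategy is a coupling-plus-forward-induction argument, following the template used for single-server systems in \cite{age_optimality_multi_server, Bedewy_NBU_journal} but propagating the comparison through the multihop topology with the aid of Lemmas \ref{lem3} and \ref{lem4}. First I would construct a common probability space on which the packet generation times $\{s_l\}$, the gateway arrival times $\{a_{l0}\}$, and the service-completion epochs at every link are identical under the preemptive LGFS policy $P$ and under any fixed work-conserving policy $\pi$. The memoryless property of the exponential distribution justifies this coupling: for each link $(i,j)$ one may introduce an independent Poisson clock of rate $\mu_{ij}$ that triggers a completion only when the link happens to be busy, delivering whichever packet the scheduler is then serving; this gives the same law as the original service process under any non-anticipative policy. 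Since both systems are driven by the same exogenous data on the coupled space, it suffices to establish the pathwise inequality $\mathbf{U}_P(t) \geq \mathbf{U}_\pi(t)$ for all $t \geq 0$.

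The next step is a forward induction over the (a.s.\ countable) sequence of event epochs---gateway arrivals and per-link completion ticks---maintaining a strengthened invariant: (a) $U_{j,P}(t) \geq U_{j,\pi}(t)$ for every $j \in \mathcal{V}$; and (b) for every link $(i,j) \in \mathcal{L}$, whenever link $(i,j)$ is busy under $\pi$ it is also busy under $P$ with an in-service packet whose generation time dominates that of $\pi$'s in-service packet. The base case is immediate from $\mathbf{U}_P(0^-) = \mathbf{U}_\pi(0^-)$. For the inductive step I would treat each event type separately: gateway arrivals are handled by Lemma \ref{lem4}---the new packet enters both systems and the prmp-LGFS rule preempts whenever the arriving packet is fresher than the one in service, so both (a) and (b) survive---and service completions on link $(i,j)$ are handled by Lemma \ref{lem3}---the packet delivered under $P$ is at least as fresh as the packet delivered under $\pi$, hence $U_{j,P}$ updates to a value no smaller than $U_{j,\pi}$, and after reloading from the queue the LGFS discipline restores (b) on link $(i,j)$.

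Finally, the pathwise inequality produced by the coupling yields, for any $t_1 < \cdots < t_n$, the almost sure componentwise inequality of the vectors $(\mathbf{U}_P(t_1), \ldots, \mathbf{U}_P(t_n))$ and $(\mathbf{U}_\pi(t_1), \ldots, \mathbf{U}_\pi(t_n))$; by the standard criterion for multivariate (and hence process-level) stochastic ordering, this gives the conclusion \eqref{law9}. The main obstacle is keeping invariant (b) alive at intermediate nodes: a completion on an upstream link $(i',i)$ is effectively an internal arrival at node $i$, and the resulting queue states on each outgoing link of $i$ under $P$ must continue to dominate those under $\pi$ for every possible downstream configuration. It is precisely here that enforcing the LGFS rule at \emph{every} link (rather than only at the gateway) is essential, and it is this bookkeeping---covering arbitrary topology, heterogeneous exponential rates, finite or infinite per-link buffers, and possibly out-of-order gateway arrivals---that Lemmas \ref{lem3}-\ref{lem4} are designed to encapsulate.
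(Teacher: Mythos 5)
Your overall skeleton—couple the two systems so that link completions and gateway arrivals occur on a common clock, then argue by forward induction over those events using Lemmas \ref{lem3} and \ref{lem4}, and finally invoke the pathwise criterion (Theorem 6.B.30 of \cite{shaked2007stochastic}) to pass from a.s.\ dominance to $\geq_{\text{st}}$—is exactly the route the paper takes, and the Poisson-clock realization of the coupling is a legitimate way to make the paper's one-line memorylessness remark concrete.

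However, your ``strengthened invariant (b)'' is not correct, and the gloss you give of Lemma \ref{lem3} mischaracterizes what it actually shows. You claim that whenever link $(i,j)$ is busy under $\pi$, it must also be busy under $P$ with an in-service packet whose generation time dominates $\pi$'s, and you summarize Lemma \ref{lem3} as ``the packet delivered under $P$ is at least as fresh as the packet delivered under $\pi$.'' Neither statement holds. The paper's Lemma \ref{lem3} explicitly treats the case $s_P < s_\pi$ (its Case 2), where the packet $P$ delivers is \emph{staler} than the one $\pi$ delivers; the inequality $\mathbf{U}'_P \geq \mathbf{U}'_\pi$ is nevertheless preserved because, under prmp-LGFS, $P$ can only be serving a stale packet after every fresher packet at node $i$ has already been delivered to node $j$, so $U_{j,P} \geq s_\pi$ is already built into the pre-event state. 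Likewise, (b) fails on a plain sample path: if $P$ finishes the freshest packet and drains its queue on link $(i,j)$ while $\pi$ is still grinding through an older one, then $\pi$ is busy but $P$ is idle. Maintaining only invariant (a), i.e.\ $\mathbf{U}_P(t) \geq \mathbf{U}_\pi(t)$, is what the paper does, and it is both necessary and sufficient: Lemmas \ref{lem3} and \ref{lem4} take (a) at an event epoch as their sole hypothesis and return (a) after the event. If you tried to carry (b) through the induction, you would get stuck precisely at the drained-queue and stale-in-service situations that the paper's Case 2 is designed to absorb.
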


 We use coupling and forward induction to prove Lemma \ref{lem2}.
For any work-conserving policy $\pi$, suppose that stochastic processes $\widetilde{\mathbf{U}}_{P}(t)$ and $\widetilde{\mathbf{U}}_{\pi}(t)$ have the same distributions with $\mathbf{U}_{P}(t)$ and $\mathbf{U}_{\pi}(t)$, respectively. 
The state processes $\widetilde{\mathbf{U}}_{{P}}(t)$ and $\widetilde{\mathbf{U}}_{\pi}(t)$ are coupled in the following manner: If a packet is delivered from node $i$ to node $j$ at time $t$ as $\widetilde{\mathbf{U}}_{{P}}(t)$ evolves in policy $P$,  then there exists a packet delivery from node $i$ to node $j$ at time $t$ as $\widetilde{\mathbf{U}}_{\pi}(t)$ evolves in policy $\pi$. 
Such a coupling is valid since the transmission time is exponentially distributed and thus memoryless. Moreover, policy ${P}$ and policy $\pi$ have identical packet generation times $(s_1, s_2, \ldots, s_n)$ at the external source and packet arrival times $(a_{10}, a_{20}, \ldots, a_{n0})$ to node 0. According to Theorem 6.B.30 in \cite{shaked2007stochastic}, if we can show 
\begin{equation}\label{main_eq}
\begin{split}
\mathbb{P}[\widetilde{\mathbf{U}}_{P}(t)\geq\widetilde{\mathbf{U}}_{\pi}(t), t\in[0,\infty)\vert\mathcal{I}]=1,
\end{split}
\end{equation}
then \eqref{law9} is proven. 

To ease the notational burden, we will omit the tildes in this proof on the coupled versions and just use $\mathbf{U}_{P}(t)$ and $\mathbf{U}_{\pi}(t)$. Next, we use the following lemmas to prove \eqref{main_eq}:

\begin{lemma}\label{lem3}
Suppose that under policy $P$, $\mathbf{U'}_{P}$ is obtained by a packet delivery over the link $(i,j)$ in the system whose state is $\mathbf{U}_{P}$. Further, suppose that under policy $\pi$, $\mathbf{U'}_{\pi}$ is obtained by a packet delivery over the link $(i,j)$ in the system whose state is $\mathbf{U}_\pi$. If
\begin{equation}\label{hyp1}
 \mathbf{U}_{P} \geq \mathbf{U}_\pi,
\end{equation}
then,
\begin{equation}\label{law6}
\mathbf{U'}_{P} \geq \mathbf{U'}_{\pi}.
\end{equation}
\end{lemma}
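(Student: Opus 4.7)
The intuition is that under the preemptive LGFS policy $P$, the packet traveling on any active link $(i,j)$ is always the freshest one present at node $i$, so at the instant of delivery the generation time of the packet arriving at $j$ is exactly $U_{i,P}$. Under an arbitrary work-conserving policy $\pi$, the packet delivered on $(i,j)$ has some generation time $s\le U_{i,\pi}$. Combining the induction hypothesis $\mathbf{U}_P\ge\mathbf{U}_\pi$ componentwise with the update rule $U'_{j}=\max(U_j,s_{\text{delivered}})$ should give the conclusion directly, and the entries for $k\ne j$ are trivially preserved.

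\textbf{Step 1: identify the generation time of the delivered packet under each policy.} Under $P$, I would argue that a packet in service on link $(i,j)$ at time $t^-$ cannot have been strictly older than the freshest packet at node $i$, because preemption combined with the LGFS rule would have replaced it. Hence the packet that completes transmission at time $t$ carries generation time equal to $U_{i,P}(t^-)$. Under $\pi$, no such guarantee holds, so the delivered packet has generation time $s\le U_{i,\pi}(t^-)$, where the inequality follows because $U_{i,\pi}$ is by definition the supremum of generation times of packets ever reaching node~$i$.

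\textbf{Step 2: perform the state update and verify componentwise dominance.} Using the recursion $U'_{j,P}=\max\bigl(U_{j,P},\,U_{i,P}\bigr)$ and $U'_{j,\pi}=\max\bigl(U_{j,\pi},\,s\bigr)$, together with the hypothesis $U_{j,P}\ge U_{j,\pi}$ and $U_{i,P}\ge U_{i,\pi}\ge s$, I get
\begin{equation*}
U'_{j,P}=\max(U_{j,P},U_{i,P})\ge \max(U_{j,\pi},U_{i,\pi})\ge \max(U_{j,\pi},s)=U'_{j,\pi}.
\end{equation*}
For every other node $k\ne j$, no delivery occurs, so $U'_{k,P}=U_{k,P}\ge U_{k,\pi}=U'_{k,\pi}$ by the hypothesis \eqref{hyp1}. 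Combining these two cases yields $\mathbf{U}'_P\ge \mathbf{U}'_\pi$, which is exactly \eqref{law6}.

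\textbf{Main obstacle.} The only nontrivial part is Step~1, namely justifying that under the coupled preemptive LGFS dynamics the packet being delivered on link $(i,j)$ really does carry generation time $U_{i,P}(t^-)$. This relies on three facts that are built into the model: once a packet arrives at node $i$ it is immediately available to every outgoing link, LGFS always selects the freshest waiting packet, and preemption ensures that any fresher arrival during an ongoing transmission takes over the server. I expect the write-up to spend most of its ink on a careful case split according to whether $U_{i,P}(t^-)$ corresponds to the packet originally started on the link or to a preempting arrival; in either case the packet whose transmission completes at $t$ has generation time $U_{i,P}(t^-)$. Once this is pinned down, the rest of the argument is a one-line max manipulation.
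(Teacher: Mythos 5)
Your Step~1 contains a genuine error: the claim that the packet delivered over link $(i,j)$ under the preemptive LGFS policy always carries generation time $U_{i,P}(t^-)$ is false. Preemption only guarantees that a \emph{fresher} packet \emph{currently available in the queue} takes over the server; once the freshest packet has already been delivered to node $j$, it is gone, and the link may well be transmitting an older packet still waiting in the queue. Concretely: packet $a$ (generation time $5$) starts on $(i,j)$; packet $b$ (generation time $10$) arrives, preempts $a$, and is delivered; now $U_{i,P}=10$, but the LGFS rule next dispatches packet $a$, so the packet completing transmission carries generation time $5<U_{i,P}$. Hence the identity $U'_{j,P}=\max(U_{j,P},\,U_{i,P})$ that you use in Step~2 is not the actual update rule; the correct one is $U'_{j,P}=\max(U_{j,P},\,s_P)$ with $s_P$ possibly strictly less than $U_{i,P}$, and your chain of inequalities breaks at the first step.

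The paper's proof handles exactly this situation with a case split on whether $s_P\ge s_\pi$ or $s_P<s_\pi$. In the nontrivial case $s_P<s_\pi$, one first shows $s_P<s_\pi\le U_{i,\pi}\le U_{i,P}$, so policy $P$ is indeed sending a stale packet; the crucial structural fact about preemptive LGFS is then invoked: the server transmits a stale packet \emph{only if} every packet at node $i$ generated after $s_P$ has already been delivered to node $j$. Since node $i$ under $P$ holds a packet generated no earlier than $s_\pi$ (because $s_\pi\le U_{i,P}$) and $s_\pi>s_P$, that packet has already reached $j$, giving $s_\pi\le U_{j,P}$; combined with $U_{j,\pi}\le U_{j,P}$ this yields $U'_{j,P}\ge U'_{j,\pi}$. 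Your write-up omits this ``stale-packet implies fresher-already-delivered'' property entirely, and it is precisely the ingredient needed to close the argument. Your treatment of the other components $k\ne j$ and the trivial case is fine.
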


\begin{proof}\ifreport
Let $s_{P}$ and $s_\pi$ denote the generation times of the packets that are delivered over the link $(i,j)$ under policy $P$ and policy $\pi$, respectively. From the definition of the system state, we can deduce that
\begin{equation}\label{Def1}
\begin{split}
U_{j,P}'&=\max\{U_{j,P},s_{P}\},\\
U_{j,\pi}'&=\max\{U_{j,\pi},s_{\pi}\}.
\end{split}
\end{equation}
Hence, we have two cases:

Case 1: If $s_{P}\geq s_\pi$. From \eqref{hyp1}, we have
\begin{equation}\label{pfp21}
U_{j,P}\geq U_{j,\pi}.
\end{equation}
Also, $s_{P}\geq s_\pi$, together with  \eqref{Def1} and \eqref{pfp21} imply
\begin{equation}
U'_{j,P}\geq U'_{j,\pi}.
\end{equation}
Since there is no packet delivery under other links, we get
\begin{equation}
U'_{k,P}=U_{k,P}\geq U_{k,\pi}=U'_{k,\pi}, \quad \forall k\neq j.
\end{equation}
Hence, we have 
\begin{equation}
\mathbf{U'}_{P} \geq \mathbf{U'}_{\pi}.
\end{equation}

Case 2: If $s_{P}<s_\pi$. By the definition of the system state, $s_{P} \leq U_{i,P}$ and $s_\pi \leq U_{i,\pi}$. Then, using $U_{i, P}\geq U_{i,\pi}$, we obtain
\begin{equation}
s_{P} < s_\pi \leq U_{i,\pi} \leq U_{i, P}.
\end{equation}
Because $s_{P}<U_{i, P}$, policy $P$ is sending a stale packet on link $(i,j)$. By the definition of policy $P$, this happens only when all packets that are generated after $s_P$ in the queue of the link $(i,j)$ have been delivered to node $j$. Since $s_\pi \leq U_{i, P}$, node $i$ has already received a packet (say packet $w$) generated no earlier than $s_\pi$ in policy $P$. Because $s_{P}<s_\pi$, packet $w$ is generated after $s_{P}$. Hence, packet $w$ must have been delivered to node $j$ in policy $P$ such that
\begin{equation}\label{eqprmp1}
 s_\pi  \leq U_{j, P}.
\end{equation}
Also, from \eqref{hyp1}, we have
\begin{equation}\label{eqprmp2}
 U_{j,\pi} \leq U_{j, P}.
\end{equation}
Combining \eqref{eqprmp1} and \eqref{eqprmp2} with \eqref{Def1}, we obtain
\begin{equation}
U'_{j,P}\geq U'_{j,\pi}.
\end{equation}
Since there is no packet delivery under other links, we get
\begin{equation}
U'_{k,P}=U_{k,P}\geq U_{k,\pi}=U'_{k,\pi}, \quad \forall k\neq j.
\end{equation}
Hence, we have 
\begin{equation}
\mathbf{U'}_{P} \geq \mathbf{U'}_{\pi},
\end{equation}
which complete the proof. 
\else
See our technical report \cite{Technical_report}.\fi
\end{proof}

\begin{lemma}\label{lem4}
Suppose that under policy $P$, $\mathbf{U'}_{P}$ is obtained by the arrival of a new packet to node $0$ in the system whose state is $\mathbf{U}_{P}$. Further, suppose that under policy $\pi$, $\mathbf{U'}_{\pi}$ is obtained by the arrival of a new packet to node $0$ in the system whose state is $\mathbf{U}_\pi$. If
\begin{equation}\label{hyp2}
 \mathbf{U}_{P} \geq \mathbf{U}_\pi,
\end{equation}
then,
\begin{equation}
\mathbf{U'}_{P} \geq \mathbf{U'}_{\pi}.
\end{equation}
\end{lemma}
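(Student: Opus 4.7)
The plan is to observe that Lemma~\ref{lem4} is almost immediate once one exploits the coupling assumption: the arrival process at node~$0$ (the generation time $s$ and arrival time $a$ of the new packet) is identical under both policies, since these sequences are exogenous and do not depend on the scheduling policy. So the same packet, carrying the same generation time $s$, arrives at node~$0$ in both systems at the same instant.

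First, I would update the coordinate corresponding to node~$0$. By the definition of the state, the arrival of a packet with generation time $s$ transforms $U_{0,P}$ into $U'_{0,P} = \max\{U_{0,P}, s\}$, and similarly $U'_{0,\pi} = \max\{U_{0,\pi}, s\}$. The hypothesis $\mathbf{U}_P \geq \mathbf{U}_\pi$ gives $U_{0,P} \geq U_{0,\pi}$, and since $\max$ is coordinate-wise non-decreasing in its first argument,
\begin{equation*}
U'_{0,P} = \max\{U_{0,P}, s\} \geq \max\{U_{0,\pi}, s\} = U'_{0,\pi}.
\end{equation*}

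Second, for every other node $k \neq 0$, nothing changes at this event instant because the packet has only just reached node~$0$ and has not yet been forwarded along any outgoing link. Thus $U'_{k,P} = U_{k,P} \geq U_{k,\pi} = U'_{k,\pi}$, using the hypothesis again. Combining the node-$0$ inequality with these componentwise inequalities yields $\mathbf{U}'_P \geq \mathbf{U}'_\pi$, which is exactly the claim.

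I do not anticipate any real obstacle here: unlike Lemma~\ref{lem3}, no case split is needed, because arrivals at node~$0$ are common to both policies under the coupling and max is monotone. The only subtle point worth flagging explicitly is that the sequence $\{a_{l0}\}$ of arrival times at node~$0$ is not affected by the scheduler (as stated in Section~\ref{Schpolicy}), so the ``arrival of a new packet'' event really is simultaneous and identical in the two coupled systems.
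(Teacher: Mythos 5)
Your proof is correct and follows essentially the same route as the paper's: update the node-$0$ coordinate via $\max\{\,\cdot\,, s\}$ using monotonicity of $\max$, and note that all other coordinates are unchanged. The remark about the arrival time at node $0$ being policy-invariant (and hence the event being identical under the coupling) is a sound and worthwhile clarification, though the paper's proof leaves it implicit.
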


\begin{proof}\ifreport
Let $s$ denote the generation time of the new arrived packet. From the definition of the system state, we can deduce that
\begin{equation}\label{Def2}
\begin{split}
U_{0,P}'&=\max\{U_{0,P},s\},\\
U_{0,\pi}'&=\max\{U_{0,\pi},s\}.
\end{split}
\end{equation}
Combining this with \eqref{hyp2}, we obtain
\begin{equation}
U'_{0,P}\geq U'_{0,\pi}.
\end{equation}
Since there is no packet delivery under other links, we get
\begin{equation}
U'_{k,P}=U_{k,P}\geq U_{k,\pi}=U'_{k,\pi}, \quad \forall k\neq 0.
\end{equation}
Hence, we have 
\begin{equation}
\mathbf{U'}_{P} \geq \mathbf{U'}_{\pi},
\end{equation}
which complete the proof.
\else
See our technical report \cite{Technical_report}.\fi 
\end{proof}

\begin{proof}[Proof of Lemma \ref{lem2}]
For any sample path, we have that $\mathbf{U}_{P}(0^-) = \mathbf{U}_{\pi}(0^-)$. This, together with Lemma \ref{lem3} and Lemma \ref{lem4},  implies that  
\begin{equation}
\begin{split}
[\mathbf{U}_{P}(t)\vert\mathcal{I}] \geq [\mathbf{U}_{\pi}(t)\vert\mathcal{I}],\nonumber
\end{split}
\end{equation}
holds for all $t\in[0,\infty)$. Hence, \eqref{main_eq} holds which implies \eqref{law9} by Theorem 6.B.30 in \cite{shaked2007stochastic}.
This completes the proof.
\end{proof}

\begin{proof}[Proof of Theorem \ref{thm1}]
According to Lemma \ref{lem2}, we have
\begin{equation*}
\begin{split}
[\{\mathbf{U}_{P}(t),  t\in[0,\infty)\}\vert\mathcal{I}]\geq_{\text{st}} [\{\mathbf{U}_{\pi}(t), t\in[0,\infty)\}\vert\mathcal{I}],
 \end{split}
\end{equation*}
holds for all work-conserving policies $\pi$, which implies
\begin{equation*}
\begin{split}
[\{\mathbf{\Delta}_{P}(t), t\in[0,\infty)\}\vert\mathcal{I}]\!\!\leq_{\text{st}} \!\![\{\mathbf{\Delta}_{\pi}(t), t\in[0,\infty)\}\vert\mathcal{I}],
 \end{split}
\end{equation*}
holds for all work-conserving policies $\pi$.

Finally, transmission idling only postpones the delivery of fresh packets. Therefore, the age under non-work-conserving policies will be greater. As a result,
\begin{equation*}
\begin{split}
[\{\mathbf{\Delta}_{P}(t), t\in[0,\infty)\}\vert\mathcal{I}]\!\!\leq_{\text{st}} \!\![\{\mathbf{\Delta}_{\pi}(t), t\in[0,\infty)\}\vert\mathcal{I}],
 \end{split}
\end{equation*}
holds for all $\pi\in\Pi$. This completes the proof.
\end{proof}

\section{Lower bound construction}\label{Appendix_A'}
Let $v_{lj}(\pi)$ denote the transmission starting time of packet $l$ over the incoming link to node $j$ under policy $\pi$. We construct an infeasible policy which provides the age lower bound as follows:

\begin{itemize}
\item[1-] The infeasible policy ($IP$) is constructed as follows. At each link $(i,j)$, the packets are served by following a work-conserving LGFS principle. A packet $l$ is deemed delivered from node $i$ to node $j$ once the transmission of packet $l$ starts over the link $(i,j)$ (this step is infeasible). After the transmission of packet $l$ starts over the link $(i,j)$, the link $(i,j)$ will be busy for a time duration equal to the actual transmission time of packet $l$ over the link $(i,j)$. Hence, the next packet cannot start its transmission over the link $(i,j)$ until the end of this time duration. We use $v_{lj}(IP)$ to denote the transmission starting time of packet $l$ over the incoming link to node $j$ under the infeasible policy ($IP$) constructed above.

 \begin{figure}[!tbp]
 \centering
 \subfigure[Two-hop tandem network.]{
  \includegraphics[scale=0.9]{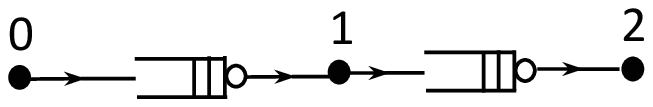}
   \label{f_a}
   }
 \subfigure[A sample path of the packet arrival processes.]{
  \includegraphics[scale=0.36]{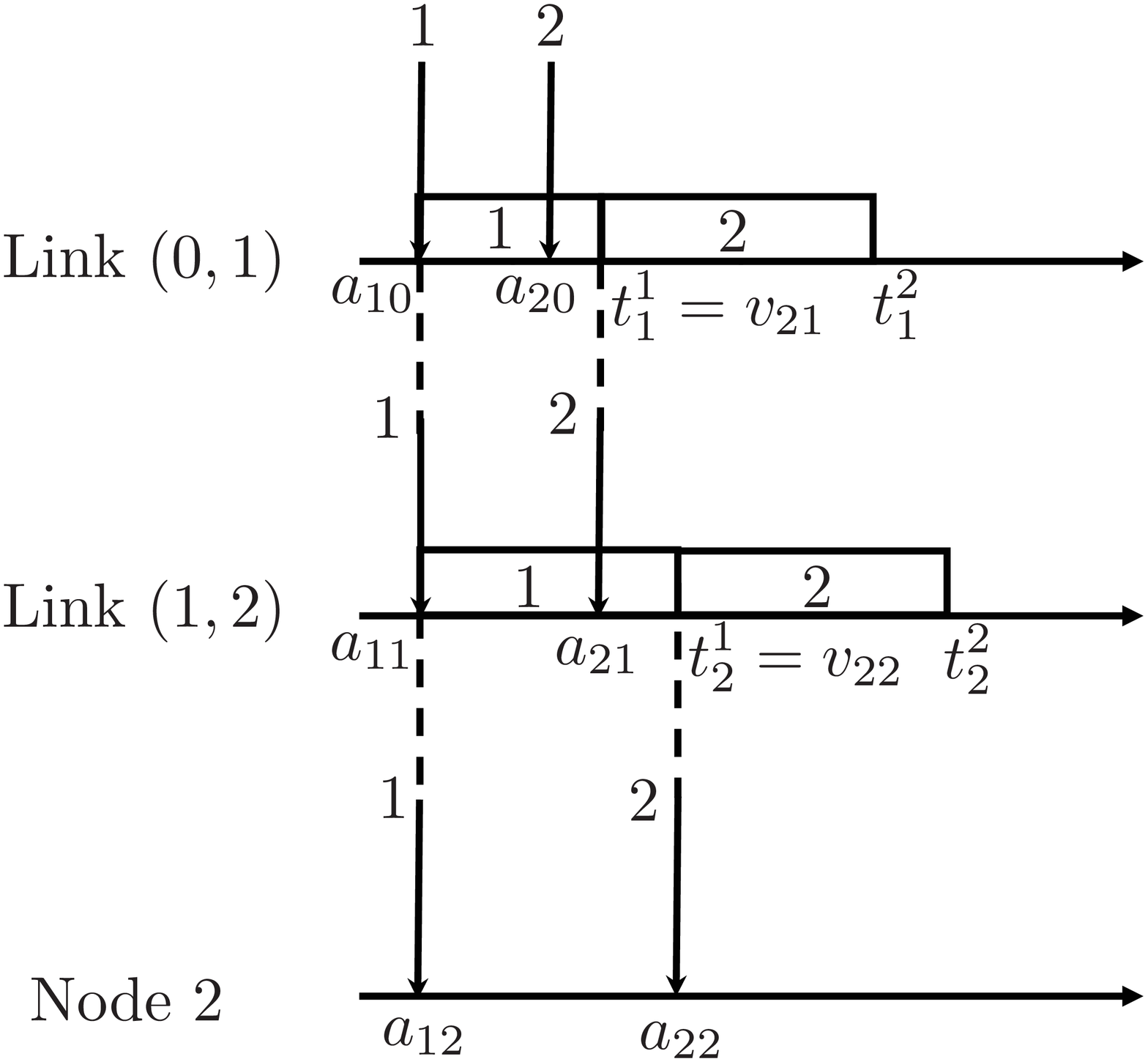}
   \label{f_b}
   }
\caption{An illustration of the infeasible policy in a Two-hop network.} \label{fig:arrival_sample_1}
\end{figure}
One example of the infeasible policy IP is illustrated in Fig. \ref{fig:arrival_sample_1}, where we consider two hops of tandem queues. We use $t^l_j$ to denote the time by which the incoming link to node $j$ becomes idle again after the transmission of packet $l$ starts.  Since all links are idle at the beginning, packet 1 arrives to all nodes once it arrives to node 0 at time $a_{10}$ (this is because each packet is deemed delivered to the next node once its transmission starts). However, each link is kept busy for a time duration equal to the actual transmission time of packet 1 over each link. Then, packet 2 arrives to node 0 at time $a_{20}$ and finds the link $(0,1)$ busy. Therefore, packet 2 cannot start its transmission until link $(0,1)$ becomes idle again at time $t^1_1$ $(v_{21}(IP)=t^1_1)$. Once packet 2 starts its transmission at time $t^1_1$ over the link $(0,1)$, it is deemed delivered to node 1 $(a_{21}(IP)=v_{21}(IP)=t^1_1)$ and link $(0,1)$ is kept busy until time $t^2_1$. At time $t^1_1$, link $(1,2)$ is busy. Thus, packet 2 cannot start its transmission over the link $(1,2)$ until it becomes idle again at time $t^1_2$. Once packet 2 starts its transmission over the link $(1,2)$ at time $t^1_2$, it is deemed delivered to node 2 $(a_{22}(IP)=v_{22}(IP)=t^1_2)$.

\item[2-] The age lower bound is constructed as follows. For each node $j\in\mathcal{V}$, define a function $\Delta_{j,IP}^{\text{LB}}(t)$ as
\begin{equation}\label{lower_bound_def_1}
\Delta_{j,IP}^{\text{LB}}(t)=t-\max\{s_l : v_{lj}(IP)\leq t\}.
\end{equation}
The definition of the $\Delta_{j,IP}^{\text{LB}}(t)$ is similar to that of the age in \eqref{age} except that the packets arrival times to node $j$ are replaced by their transmission starting times over the incoming link to node $j$ in the infeasible policy. In this case, $\Delta_{j,IP}^{\text{LB}}(t)$ increases linearly with $t$ but is reset to a smaller value with the transmission start of a fresher packet over the incoming link to node $j$, as shown in Fig. \ref{Fig:age_lower_at_node_one}. The process of $\Delta_{j,IP}^{\text{LB}}(t)$ is given by $\Delta_{j,IP}^{\text{LB}}=\{\Delta_{j,IP}^{\text{LB}}(t), t\in [0,\infty)\}$ for each $j\in\mathcal{V}$. The age lower bound vector of all the network nodes is 
\begin{align}
\mathbf{\Delta}^{\text{LB}}_{IP}(t)=\!\!(\Delta_{0,IP}^{\text{LB}}(t), \Delta_{1,IP}^{\text{LB}}(t), \ldots, \Delta_{N-1,IP}^{\text{LB}}(t)). 
\end{align}
The age lower bound process of all the network nodes is given by
\begin{align}
\mathbf{\Delta}_{IP}^{\text{LB}}=\{\mathbf{\Delta}^{\text{LB}}_{IP}(t), t\in [0,\infty)\}.
\end{align}
\begin{figure}
\centering \includegraphics[scale=0.38]{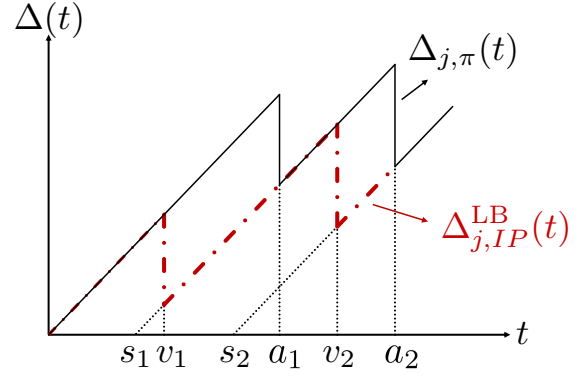}
\centering
\captionsetup{justification=justified, font={onehalfspacing}}
\caption{The evolution of $\Delta_{j,IP}^{\text{LB}}(t)$ and $\Delta_{j,\pi}(t)$ at node $j\in\mathcal{H}_1$. For figure clarity, we use $v_1$ and $v_2$ to denote $v_{1j}(IP)$ and $v_{2j}(IP)$, respectively. Also, we use $a_1$ and $a_2$ to denote $a_{1j}(\pi)$ and $a_{2j}(\pi)$, respectively. We suppose that $a_{10}>s_1$ and $a_{20}>a_1>s_2$, such that $a_{10}=v_1$ and $a_{20}=v_2$. }\label{Fig:age_lower_at_node_one}
\vspace{-0.3cm}
\end{figure}
\end{itemize}

The next Lemma tells us that the process $\mathbf{\Delta}_{IP}^{\text{LB}}$ is an age lower bound of all policies in $\Pi$ in the following sense. 
\begin{lemma}\label{lem_lower_bound_all}
Suppose that the packet transmission times are NBU, \emph{independent} across links, and \emph{i.i.d.} across time, then for all $\mathcal{I}'$ satisfying $B_{ij}\geq 1$ for each $(i,j)\in\mathcal{L}$, and $\pi\in\Pi$
\begin{align}\label{lower_bound_equation}
[\mathbf{\Delta}^{\text{LB}}_{IP}\vert\mathcal{I}']\leq_{\text{st}} [\mathbf{\Delta_{\pi}}\vert\mathcal{I}'].
\end{align}
\end{lemma}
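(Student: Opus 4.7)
\medskip

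The plan is to prove the lemma by a coupling argument combined with induction on hop distance from node~$0$, using the NBU property to handle preemption that may occur under a general causal policy $\pi$. For each link $(i,j)\in\mathcal{L}$, I would couple the sequence of i.i.d.\ transmission times so that the $n$-th completed transmission on link $(i,j)$ under $IP$ uses the same realization as the $n$-th completed transmission on that link under $\pi$. If $\pi$ preempts an ongoing transmission, I would invoke the standard NBU-coupling construction (analogous to the one used in \cite{age_optimality_multi_server, Bedewy_NBU_journal}): by the inequality $\bar F(\tau+t)\le \bar F(\tau)\bar F(t)$, the remaining transmission time after preemption is stochastically no smaller than a fresh transmission time, so the time until the \emph{next} delivery on that link under $\pi$ stochastically dominates the time until the next completion under $IP$ started at the same instant. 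The upshot is that, on a common sample path, the sequence of link-$(i,j)$ busy periods under $IP$ is no longer than the sequence of link-$(i,j)$ busy periods under $\pi$, term by term.

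Next I would induct on the hop index $k$ of node $j$. For the base case $k=0$, we have $v_{l0}(IP)=a_{l0}=a_{l0}(\pi)$ for every packet $l$, so $\Delta_{0,IP}^{\text{LB}}(t)=\Delta_{0,\pi}(t)$ trivially, and in particular the freshest generation time at node~$0$ by any time $t$ is the same under both policies. For the inductive step at hop $k\ge1$, let $i=i_{j,k-1}$ be the unique predecessor of $j$ in the restricted topology $\mathcal{G}'(\mathcal{V},\mathcal{L})$; uniqueness of the incoming link is what makes the argument tractable. Using the notation $M_{j,IP}^{\text{LB}}(t)=\max\{s_l:v_{lj}(IP)\le t\}$ and $M_{j,\pi}(t)=\max\{s_l:a_{lj}(\pi)\le t\}$, the inductive hypothesis is $M_{i,IP}^{\text{LB}}(t)\ge M_{i,\pi}(t)$ for all $t$, and I want to promote it to node $j$.

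The core of the inductive step is a sample-path comparison of link-$(i,j)$ activity. Let $t^*$ be the time at which link $(i,j)$ becomes idle under $IP$ just before transmitting some packet $l^*$; because $IP$ is work-conserving and LGFS with buffer size at least one, the packet it picks up at that instant has generation time equal to $M_{i,IP}^{\text{LB}}(t^*)\ge M_{i,\pi}(t^*)$ by the inductive hypothesis, and this packet is deemed delivered to $j$ at time $v_{l^*j}(IP)=t^*$ in the definition of the lower-bound process. On the other hand, any packet that actually arrives at $j$ under $\pi$ by time $t$ must first have been present at $i$ by some earlier time and then occupied link $(i,j)$ for at least one full transmission duration (possibly across several preempted attempts). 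The link-busy-period coupling from the first paragraph shows that, by any time $t$, the number of transmission \emph{starts} on link $(i,j)$ under $IP$ is at least the number of transmission \emph{completions} on that link under $\pi$, and each such start picks up a packet at least as fresh as the freshest packet available at $i$ under $\pi$ at the same instant. Combining these two facts gives $M_{j,IP}^{\text{LB}}(t)\ge M_{j,\pi}(t)$ for every $t$, hence $\Delta_{j,IP}^{\text{LB}}(t)\le \Delta_{j,\pi}(t)$ pathwise, which then lifts to the joint stochastic ordering in \eqref{lower_bound_equation} via Theorem~6.B.30 of \cite{shaked2007stochastic}, completing the induction.

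The main obstacle I expect is making the NBU-based link-busy-period coupling rigorous in the presence of the full flexibility of $\Pi$: packets can be preempted, dropped, replaced, or reordered, and buffer sizes are arbitrary (subject to $B_{ij}\ge1$). The hop-by-hop induction, however, is cleanly enabled by the restriction that every non-gateway node has a unique incoming link in $\mathcal{G}'$, which eliminates the ambiguity that would otherwise arise from multiple feasible paths into~$j$ and lets the induction propagate along a single chain of links.
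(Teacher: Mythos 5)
Your high-level plan (hop-by-hop induction using the unique-incoming-link structure, and invoking NBU to control what happens on each link) matches the paper's strategy, but the mechanism you use to cross each link is not sound, and this is where the paper does something genuinely different.

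The central gap is that you claim a \emph{pathwise} ordering $\Delta_{j,IP}^{\text{LB}}(t)\le\Delta_{j,\pi}(t)$ on a coupled sample space and then appeal to Theorem~6.B.30. That theorem is appropriate in Appendix~A for the exponential case, because memorylessness lets you couple the delivery events of the two policies exactly. For NBU distributions that coupling is unavailable, and the NBU inequality gives you only a \emph{stochastic} comparison at each comparison instant: if under $IP$ the link has been busy for $\chi$ seconds when $\pi$ begins sending packet $p$, the residual busy time $R_{IP}$ satisfies $R_{IP}\le_{\text{st}}X_\pi$, not $R_{IP}\le X_\pi$ almost surely under any coupling you have constructed. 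Your proposed coupling (``the $n$-th completed transmission on link $(i,j)$ uses the same realization under both policies'') does not repair this: $\pi$ may preempt, drop, or reorder, so the indexing of completions need not align across the two policies, and even when it does, you are comparing a \emph{residual} under $IP$ against a \emph{fresh} draw under $\pi$, which is precisely the place where only a stochastic, not almost-sure, bound is available. Your claim that ``the sequence of link-$(i,j)$ busy periods under $IP$ is no longer than under $\pi$, term by term'' on a common sample path does not follow.

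What the paper does instead (Appendices~C and~E) is work entirely with conditional stochastic orderings. It reduces the lemma, via Theorem~6.B.16.(a), to showing $[\mathbf{\Gamma}_1(IP),\ldots,\mathbf{\Gamma}_{N-1}(IP)\vert\mathcal{I}']\le_{\text{st}}[\mathbf{D}_1(\pi),\ldots,\mathbf{D}_{N-1}(\pi)\vert\mathcal{I}']$, and builds this up hop by hop using Theorem~6.B.3: at each link it proves a conditional stochastic ordering (Lemma~\ref{lem_na2}) given deterministic realizations of the earlier hops' vectors, with the NBU step appearing exactly in the form $R_{IP}\le_{\text{st}}X_\pi$ and never being upgraded to an almost-sure statement. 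If you want to salvage your outline, the inductive step should prove a \emph{conditional} stochastic ordering of the $\mathbf{\Gamma}_j$ and $\mathbf{D}_j$ vectors given the previous hop, and the conclusion should be assembled with Theorem~6.B.3 and 6.B.16, not with a claimed pathwise coupling and Theorem~6.B.30.
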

\begin{proof}
Condition \eqref{NBU_Inequality} is very crucial in proving Lemma \ref{lem_lower_bound_all}.  In particular,  \eqref{NBU_Inequality} implies that for NBU service time distributions, the remaining service time of a packet that has already spent $\tau$ seconds in service is probably shorter than the service time of a new packet (i.e., $\mathbb{P}[X>t+\tau\vert X>\tau]\leq \mathbb{P}[X>t]$, where $X$ represents the service time). This is used to show that the transmission starting times of the fresh packets under policy $IP$ are stochastically smaller than their corresponding delivery times under policy $\pi$, and hence \eqref{lower_bound_equation} follows. For more details, see Appendix~\ref{Appendix_C}.
\end{proof}

\section{Proof of Lemma \ref{lem_lower_bound_all}}\label{Appendix_C}

For notation simplicity, let policy $IP$ represent the infeasible policy. We need to define the following parameters: Recall that $v_{lj}$ denotes the transmission starting time of packet $l$ over the incoming link to node $j$ and $a_{lj}$ denotes the arrival time of packet $l$ to node $j$. We define $\Gamma_{lj}$ and $D_{lj}$ as
\begin{equation}\label{Def_na1_1}
\begin{split}
\Gamma_{lj}&=\min_{q\geq l}\{v_{qj}\},
\end{split}
\end{equation}
\begin{equation}\label{Def_na1_2}
D_{lj}=\min_{q\geq l}\{a_{qj}\},
\end{equation}
\begin{figure}
\centering
\includegraphics[scale=0.27]{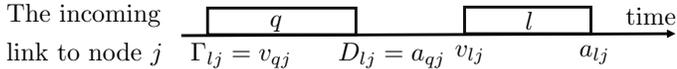}
\centering
\captionsetup{justification=justified, font={small,onehalfspacing}}
\caption{ An illustration of $v_{lj}$, $a_{lj}$, $\Gamma_{lj}$ and $D_{lj}$. We consider the incoming link to node $j$, and $s_q>s_l$. The transmission starting time over this link and the arrival time to node $j$ of packet $q$ are earlier than those of packet $l$. Thus, we have $\Gamma_{lj}=v_{qj}$ and $D_{lj}=a_{qj}$.}\label{Fig:parameter}
\vspace{-0.3cm}
\end{figure}
where $\Gamma_{lj}$  and $D_{lj}$ are the smallest transmission starting time over the incoming link to node $j$ and arrival time to node $j$, respectively, of all packets that are fresher than the packet $l$. An illustration of these parameters is provided in Fig. \ref{Fig:parameter}. Suppose that there are $n$ update packets, where $n$ is an arbitrary positive integer, no matter finite or infinite. Define the vectors $\mathbf{\Gamma}_j=(\Gamma_{1j}, \ldots, \Gamma_{nj})$, and $\mathbf{D}_j=(D_{1j}, \ldots, D_{nj})$. Also, a packet $l$ is said to be an informative packet at node $i$, if all packets that arrive to node $i$ before packet $l$ are staler than packet $l$, i.e., $s_{l'}\leq s_{l}$ for all packets $l'$ satisfying $a_{l'i}\leq a_{li}$. All these quantities are functions of the scheduling policy $\pi$ (except the packet arrival times $(a_{10}, a_{20}, \ldots, a_{n0})$ to node 0 which are invariant of the scheduling policy). 

We can deduce from \eqref{age} that the age process $\mathbf{\Delta_{\pi}}$ under any policy $\pi$ is  increasing in $[\mathbf{D}_1(\pi), \ldots, \mathbf{D}_{N-1}(\pi)]$. Moreover, we can deduce from \eqref{lower_bound_def_1} that the process $\mathbf{\Delta}_{IP}^{\text{LB}}$ is increasing in $[\mathbf{\Gamma}_1(IP), \ldots, \mathbf{\Gamma}_{N-1}(IP)]$. According to Theorem 6.B.16.(a) of \cite{shaked2007stochastic}, if we can show 
\begin{equation}\label{Lower_bound_eq1}
[\mathbf{\Gamma}_1(IP), \ldots, \mathbf{\Gamma}_{N-1}(IP)\vert\mathcal{I}']\leq_{\text{st}} [\mathbf{D}_1(\pi), \ldots, \mathbf{D}_{N-1}(\pi)\vert\mathcal{I}'],
\end{equation}
holds for all $\pi\in\Pi$, then \eqref{lower_bound_equation} is proven. Hence, \eqref{Lower_bound_eq1} is what we need to show. We pick an arbitrary policy $\pi\in\Pi$ and prove \eqref{Lower_bound_eq1} into two steps:

\emph{Step 1}: We first show that, at any link $(i,j)$, if the arrival times of the informative packets at node $i$ under policy $IP$ are earlier than those of the informative packets at node $i$ under policy $\pi$, then the arrival times of the informative packets at node $j$ under policy $IP$ are earlier than those of the informative packets at node $j$ under policy $\pi$. Observe that the vector $\mathbf{\Gamma}_{i}(IP)$ represents the arrival times of the informative packets at node $i$ under policy $IP$ (recall the construction of the infeasible policy $IP$ and its age evolution in  \eqref{lower_bound_def_1}), while the vector $\mathbf{D}_{i}(\pi)$ represents the arrival times of the informative packets at node $i$ under policy $\pi$. Then, the previous statement is manifested in the following lemma. 
\begin{lemma}\label{lem_na2}
For any link $(i,j)\in\mathcal{L}$, if (i) the packet transmission times are NBU, and (ii) $\mathbf{\Gamma}_{i}(IP)\leq \mathbf{D}_{i}(\pi)$, then
\begin{equation}\label{NS1}
[\mathbf{\Gamma}_j(IP)\vert\mathcal{I}'] \leq_{\text{st}} [\mathbf{D}_j(\pi)\vert\mathcal{I}'],
\end{equation}
holds for all $\mathcal{I}'$ satisfying $B_{ij}\geq 1$.
\end{lemma}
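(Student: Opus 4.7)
The plan is to prove Lemma \ref{lem_na2} by combining a sample-path coupling at link $(i,j)$ with forward induction over the events that modify the vectors $\mathbf{\Gamma}_j(IP)$ and $\mathbf{D}_j(\pi)$. By Theorem 6.B.30 in \cite{shaked2007stochastic}, it is enough to exhibit coupled processes $\widetilde{\mathbf{\Gamma}}_j(IP)$ and $\widetilde{\mathbf{D}}_j(\pi)$, preserving the respective marginal distributions, for which $\widetilde{\mathbf{\Gamma}}_j(IP) \leq \widetilde{\mathbf{D}}_j(\pi)$ almost surely. The NBU hypothesis is what makes such a sample-path coupling possible, since it lets us stochastically dominate any residual service time after an interruption by a fresh service time.

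First, I would set up the coupling as follows. Whenever link $(i,j)$ starts a fresh transmission, under either policy, draw a fresh independent sample from the transmission-time law $F$. Whenever policy $\pi$ aborts, preempts, or idles on a transmission that has been in service for $\tau > 0$, the NBU inequality \eqref{NBU_Inequality} gives $[X - \tau \,|\, X > \tau] \leq_{\text{st}} X'$ for a fresh $X' \sim F$, so in the coupled construction I may replace the unfinished $\pi$-side residual by an independent fresh sample used on the $IP$-side. This is the classical NBU-coupling device that powers sample-path arguments for non-exponential service.

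Second, I would perform forward induction over the events that alter the minima in \eqref{Def_na1_1}--\eqref{Def_na1_2}: the arrival of a fresh (informative) packet at node $i$, and a transmission completion on link $(i,j)$. By hypothesis (ii) and the coupling, every informative packet arrival at node $i$ under $IP$ occurs no later than the matching informative arrival under $\pi$. Under $IP$, link $(i,j)$ is work-conserving and LGFS, and packet $l$ is ``delivered'' at $j$ the instant its transmission begins, i.e.\ $a_{lj}(IP) = v_{lj}(IP)$. Under $\pi$, the freshest packet available at $i$ may be forced to wait behind a stale in-service packet, may be preempted, or may not be scheduled at all, and its arrival at $j$ only registers at the end of transmission. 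I would maintain the invariant that, at each coupled time $t$, for every index $l$ there is an informative packet $l'$ with $s_{l'} \geq s_l$ whose $IP$-arrival at $j$ satisfies $\Gamma_{l'j}(IP) \leq D_{lj}(\pi)$. The buffer assumption $B_{ij}\geq 1$ is used here so that, under $IP$, a fresh packet arriving while $(i,j)$ is busy is retained and starts transmission as soon as the link frees, thus keeping the LGFS work-conserving dynamics feasible.

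The main obstacle will be accommodating arbitrary $\pi$: policies that are non-LGFS, non-work-conserving, or that drop/replace packets can produce interleavings in which $\pi$ begins an informative transmission earlier than $IP$ does. The NBU coupling is precisely what defeats this: any earlier but longer $\pi$-transmission can be stochastically replaced in the coupled construction by a fresh transmission on the $IP$-side that completes no later, while the LGFS structure of $IP$ guarantees that no fresher packet available at $i$ is ever ignored. Once the invariant is established inductively at every event, we obtain $\mathbf{\Gamma}_j(IP) \leq \mathbf{D}_j(\pi)$ on the coupled sample path, and Theorem 6.B.30 in \cite{shaked2007stochastic} yields the stochastic inequality \eqref{NS1}.
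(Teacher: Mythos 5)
Your high-level intuition is right — NBU dominates the residual transmission time by a fresh one, the $B_{ij}\geq 1$ buffer keeps the fresh packet available at node $i$ under $IP$, and the work-conserving LGFS structure at $IP$ then guarantees the fresh packet is dispatched as soon as the link frees — but the proof route you propose has a genuine gap, and it is also not the route the paper takes.

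The paper does not construct a global sample-path coupling of the state processes. That device works in Theorem~\ref{thm1} only because exponential service is memoryless, so packet departures can be made to coincide under the two policies. Here the key inequality, $R_{IP}=[X-\chi\,|\,X>\chi]\leq_{\text{st}}X_\pi$, is purely distributional, and the moment you try to realize $R_{IP}\leq X_\pi$ pathwise at one event, you perturb the realized transmission time at the $IP$ side, which propagates to all subsequent $IP$ events with no matching perturbation on the $\pi$ side to compare against. The two trajectories are not in lockstep — there is no bijection between ``events'' of $\pi$ and ``events'' of $IP$ — so ``forward induction over the events that modify $\mathbf{\Gamma}_j$ and $\mathbf{D}_j$'' is underdetermined: you never specify which $IP$ event is coupled to which $\pi$ event, nor how the marginal laws of all later $IP$ transmission times are preserved once you have injected a residual-replacement. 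The paper avoids this entirely by arguing \emph{packet by packet} through Theorem~6.B.3 of \cite{shaked2007stochastic}: it first shows the deterministic inequality $\Gamma_{1j}(IP)\leq D_{1j}(\pi)$, then proves the \emph{conditional} stochastic ordering $[\Gamma_{pj}(IP)\,|\,\Gamma^{p-1}]\leq_{\text{st}}[D_{pj}(\pi)\,|\,D^{p-1}]$ whenever $\gamma_l\leq d_l$, applying NBU exactly once per coordinate to the residual of the packet $h^*$ in service at $IP$ at the instant $\pi$ begins sending packet $p$. That iterative conditional argument is precisely what a rigorous version of your coupling would have to reduce to; as written, you have not supplied it.

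A second, smaller point: your description of the NBU device points the inequality the wrong way. You speak of ``any earlier but longer $\pi$-transmission'' being ``stochastically replaced ... by a fresh transmission on the $IP$-side.'' In the paper's argument $IP$ is the one that started earlier (packet $h^*$ entered service before $v_{pj}(\pi)$), so $IP$ has the residual and $\pi$ has the fresh sample, and NBU says the $IP$ residual is stochastically shorter than $\pi$'s fresh transmission. Related to this, you propose applying the NBU replacement at moments when ``policy $\pi$ aborts, preempts, or idles,'' but the paper's case analysis is anchored on the moment $\pi$ \emph{begins} sending packet $p$ and on its \emph{completion} at $a_{pj}(\pi)$; preemption under $\pi$ is handled only as an afterthought (the case that a fresher $y$ arrives at $j$ before $p$). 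Finally, the tool you cite, Theorem~6.B.30, is a statement about a.s.~ordered constructions of \emph{stochastic processes}; the object here is a finite random vector, and the relevant construction theorem in the paper's argument is Theorem~6.B.3 (plus 6.B.16 for the independent-coordinate aggregation in Step~2 of Appendix~\ref{Appendix_C}).
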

\begin{proof}
See Appendix \ref{Appendix_E}
\end{proof}

\emph{Step 2}: We use Lemma \ref{lem_na2} to prove \eqref{Lower_bound_eq1}. Consider a node $j\in\mathcal{H}_k$. We prove \eqref{Lower_bound_eq1} using Theorem 6.B.3  and Theorem 6.B.16.(c) of \cite{shaked2007stochastic} into two steps:

\emph{Step A}: Consider node $i_{j,1}$. Observe that node $i_{j,1}$ receives update packets from node 0. Since the packet arrival times $(a_{10}, \ldots, a_{n0})$ to node 0 are invariant of the scheduling policy, both conditions of  Lemma \ref{lem_na2} are satisfied and we can apply it on the link $(0,i_{j,1})$ to obtain
\begin{equation}\label{eq_pr_1}
[\mathbf{\Gamma}_{i_{j,1}}(IP)\vert\mathcal{I}']\leq_{\text{st}} [\mathbf{D}_{i_{j,1}}(\pi)\vert\mathcal{I}'].
\end{equation}

\emph{Step B}: Consider node $i_{j,m}$, where $2\leq m\leq k$. We need to prove that 
\begin{equation}\label{eq_pr_2}
\begin{split}
&[\mathbf{\Gamma}_{i_{j,m}}(IP)\vert\mathcal{I}'\!, \mathbf{\Gamma}_{i_{j,1}}(IP)\!\!=\!\!\mathbf{\gamma}_{i_{j,1}},\ldots,\mathbf{\Gamma}_{i_{j,m-1}}(IP)\!\!=\!\!\mathbf{\gamma}_{i_{j,m-1}} ]\\&\leq_{\text{st}}\!\! [\mathbf{D}_{i_{j,m}}(\pi)\vert\mathcal{I}'\!,\mathbf{D}_{i_{j,1}}(\pi)\!\!=\!\!\mathbf{d}_{i_{j,1}},\ldots,\mathbf{D}_{i_{j,m-1}}(\pi)\!\!=\!\!\mathbf{d}_{i_{j,m-1}}],\\&
\text{whenever}\quad \mathbf{\gamma}_{i_{j,t}}\leq \mathbf{d}_{i_{j,t}},t=1, \ldots,m-1.
\end{split}
\end{equation}
Since node $i_{j,m}$ receives update packets from node $i_{j,m-1}$ and $\mathbf{\Gamma}_{i_{j,m-1}}(IP)\leq \mathbf{D}_{i_{j,m-1}}(\pi)$ in \eqref{eq_pr_2}, both conditions of Lemma \ref{lem_na2} are satisfied in this case as well (in particular, for the link $(i_{j,m-1},i_{j,m})$), and we can use it to prove \eqref{eq_pr_2}. By using \eqref{eq_pr_1} and \eqref{eq_pr_2} with Theorem 6.B.3 of \cite{shaked2007stochastic}, we can show
\begin{equation}\label{lower_bound_i_j_k}
[\mathbf{\Gamma}_{i_{j,1}}(IP), \ldots, \mathbf{\Gamma}_{i_{j,k}}(IP)\vert\mathcal{I}']\leq_{\text{st}} [\mathbf{D}_{i_{j,1}}(\pi), \ldots, \mathbf{D}_{i_{j,k}}(\pi)\vert\mathcal{I}'].
\end{equation}
 Following the previous argument, we can show that \eqref{lower_bound_i_j_k} holds for all $j\in\mathcal{V}$. Note that the transmission times are independent
across links. Using this with Theorem 6.B.3 and Theorem
6.B.16.(c) of \cite{shaked2007stochastic}, we prove \eqref{Lower_bound_eq1}. This completes the proof.

\section{Proof of Lemma \ref{lem_na2}}\label{Appendix_E}
The proof of Lemma \ref{lem_na2} is motivated by the proof idea of  \cite[Lemma 1]{sun2016delay} and \cite[Lemma 2]{sun2017near}. 
 \begin{figure*}[!tbp]
 \centering
 \subfigure[Case 1: Link $(i,j)$ sends packet $k$ after the time $\theta(IP)$.]{
  \includegraphics[scale=0.28]{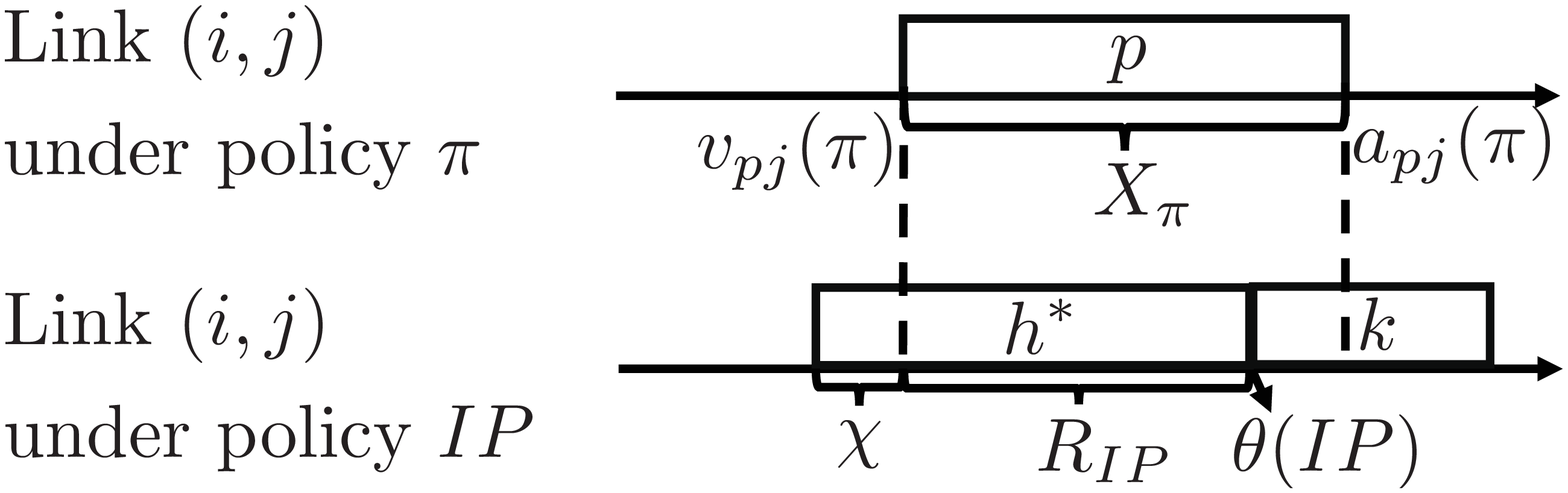}
   \label{a}
   }
 \subfigure[Case 2: Link $(i,j)$ sends packet $k$ before the time $IP$.]{
  \includegraphics[scale=0.28]{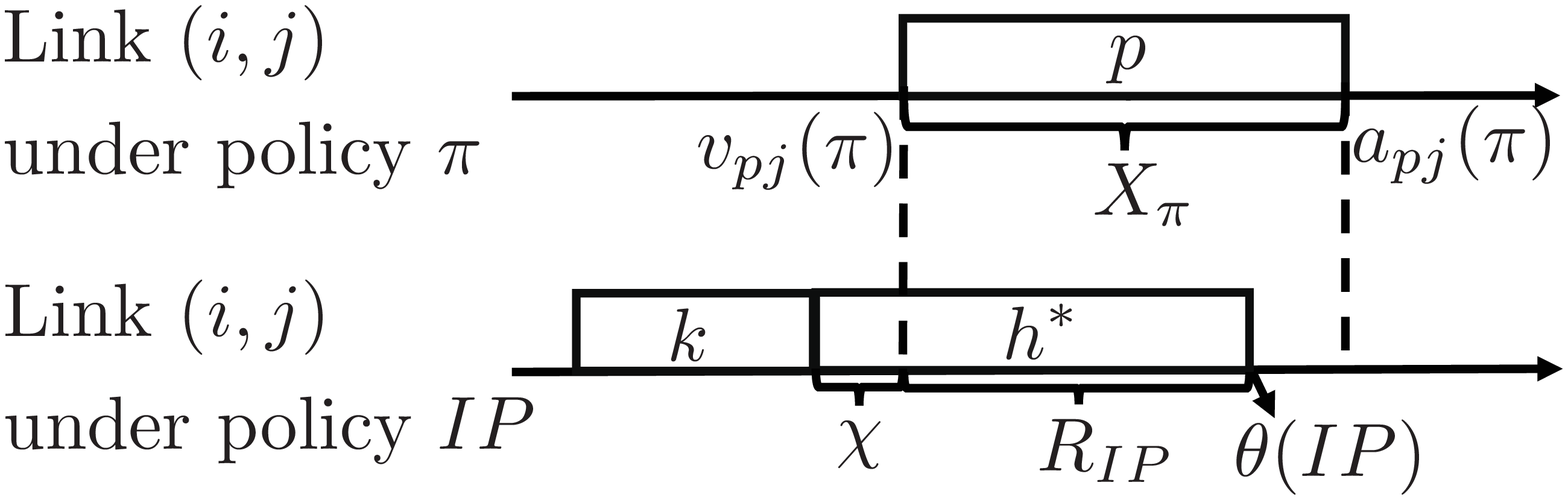}
   \label{b}
   }
\captionsetup{justification=justified, font={onehalfspacing}}
\caption{Illustration of packet transmissions under policy $\pi$ and policy $IP$. In policy $\pi$, link $(i,j)$ starts to send packet $p$ at time $v_{pj}(\pi)$ and will complete its transmission at time $a_{pj}(\pi)$. Hence, the transmission duration of packet $p$ is $[v_{pj}(\pi),a_{pj}(\pi)]$ in policy $\pi$. Under policy $IP$, link $(i,j)$ is kept busy before time $v_{pj}(\pi)$ for a time duration equal to the actual transmission time of packet $h^*$ and becomes available to send a new packet at the time $\theta(IP)<a_{pj}(\pi)$.} \label{fig:samp_path_NBU}
\end{figure*}
We prove \eqref{NS1} using Theorem 6.B.3 of  \cite{shaked2007stochastic} into two steps.

\emph{Step 1}: Consider packet 1. Note that packet 1 may not be the first packet to arrive at node $i$ under policy $IP$. Thus, we use $l^*$ to denote the index of the first arrived packet at node $i$ under policy $IP$, where $s_{l^*}\geq s_1$. From the construction of the policy $IP$ and \eqref{Def_na1_1}, $\Gamma_{1i}(IP)$ is the arrival time of the first arrived packet at node $i$ under policy $IP$. Since the link $(i,j)$ is idle before the arrival of the first arrived packet at node $i$, and policy $IP$ is a work-conserving policy, packet $l^*$ will start its transmission under policy $IP$ over the link $(i,j)$ once it arrives to node $i$ (at time $\Gamma_{1i}(IP)$). Thus, from \eqref{Def_na1_1}, we obtain
\begin{equation}\label{eq_11_1}
[\Gamma_{1j}(IP)\vert\mathcal{I}']=[v_{l^*j}(IP)\vert\mathcal{I}']=[\Gamma_{1i}(IP)\vert\mathcal{I}'].
\end{equation}
Observe that we have 
\begin{equation}\label{eq_11_2}
[\Gamma_{1i}(IP)\vert\mathcal{I}']\leq[D_{1i}(\pi)\vert\mathcal{I}'].
\end{equation}
 Also, we must have 
 \begin{equation}\label{eq_11_3}
  [D_{1i}(\pi)\vert\mathcal{I}']\leq[D_{1j}(\pi)\vert\mathcal{I}'],
 \end{equation}
 because a packet must spend a time over the link $(i,j)$ (its transmission time over the link $(i,j)$) before it is delivered from node $i$ to node $j$ under policy $\pi$. Combining \eqref{eq_11_1}, \eqref{eq_11_2}, and \eqref{eq_11_3}, we get 	 
\begin{equation}\label{prove1}
[\Gamma_{1j}(IP)\vert\mathcal{I}']=[\Gamma_{1i}(IP)\vert\mathcal{I}']\leq[D_{1i}(\pi)\vert\mathcal{I}']\leq[D_{1j}(\pi)\vert\mathcal{I}'].
\end{equation}

\emph{Step 2}: Consider a packet $p$, where $2\leq p \leq n$.  We suppose that no packet with generation time greater than $s_{p}$ has arrived to node $j$ before packet $p$ under policy $\pi$. We need to prove that
\begin{equation}\label{NBU_to_proof}
\begin{split}
[\Gamma_{pj}(IP)\vert\mathcal{I}',\Gamma_{1j}(IP)=\gamma_1, \ldots, \Gamma_{(p-1)j}(IP)=\gamma_{p-1}]\\
\leq_{\text{st}}[D_{pj}(\pi)\vert\mathcal{I}', D_{1j}(\pi)=d_1, \ldots, D_{(p-1)j}(\pi)=d_{p-1}],\\
\text{whenever}\quad \gamma_l\leq d_l, l=1, 2, \ldots, p-1. 
\end{split}
\end{equation}
For notation simplicity, define $\Gamma^{p-1}\triangleq \{\Gamma_{1j}(IP)=\gamma_1, \ldots, \Gamma_{(p-1)j}(IP)=\gamma_{p-1}\}$ and $D^{p-1}\triangleq \{ D_{1j}(\pi)=d_1, \ldots, D_{(p-1)j}(\pi)=d_{p-1}\}$. We will show that the link $(i,j)$ under policy $IP$ can send a new packet at a time that is stochastically smaller than the arrival time of packet $p$ at node $j$ under policy $\pi$.  At this time, there are two possible cases under policy $IP$. One of them is that the link $(i,j)$ sends a packet with generation time greater than $s_p$. The other one is that the link $(i,j)$ sends a packet with generation time less than $s_p$ or there is no packet to be sent. We will show that \eqref{NBU_to_proof} holds in either case.

As illustrated in Fig. \ref{fig:samp_path_NBU}, suppose that under policy $\pi$, link $(i,j)$ starts to send packet $p$ at time $v_{pj}(\pi)$ and will complete its transmission at time $a_{pj}(\pi)$. Under policy $IP$, define $h^*=\argmax_{h}\{v_{hj}(IP) : v_{hj}(IP)\leq v_{pj}(\pi)\}$ as the index of the last packet whose transmission starts over the link $(i,j)$ before time $v_{pj}(\pi)$. Note that the link $(i,j)$ under policy $IP$ is kept busy after time $v_{h^*j}(IP)$ for a time duration equal to the actual transmission time of packet $h^*$ over the link $(i,j)$.
 Suppose that under policy $IP$, link $(i,j)$ is kept busy for $\chi$ $(\chi\geq 0)$ seconds of the actual transmission time of packet $h^*$ before time $v_{pj}(\pi)$. Let $R_{IP}$ denote the remaining busy period of the link $(i,j)$ under policy $IP$ after time $v_{pj}(\pi)$ (this remaining busy period is due to the remaining transmission time of packet $h^*$ after time $v_{pj}(\pi)$). Hence, link $(i,j)$ becomes available to send a new packet at time $v_{pj}(\pi)+R_{IP}$. Let $X_{\pi}=a_{pj}(\pi)-v_{pj}(\pi)$ denote the transmission time of packet $p$ under policy $\pi$ and $X_{IP}=\chi+R_{\text{LB}}$ denote the actual transmission time of packet $h^*$. Then, the CCDF of $R_{IP}$ is given by
 \begin{equation}\label{stoch1}
\begin{split}
\mathbb{P}[R_{IP}> s]=\mathbb{P}[X_{IP}-\chi>s\vert X_{IP}>\chi].
\end{split}
\end{equation}
Because the packet transmission times are NBU, we can obtain that for all $s,\chi\geq 0$
 \begin{equation}\label{stoch2}
\begin{split}
\mathbb{P}[X_{IP}-\chi>s\vert X_{IP}>\chi]&=
\mathbb{P}[X_{\pi}-\chi>s\vert X_{\pi}>\chi]\\&\leq \mathbb{P}[X_{\pi}>s].
\end{split}
\end{equation}
By combining \eqref{stoch1} and \eqref{stoch2}, we obtain
 \begin{equation}
\begin{split}
R_{IP}\leq_{\text{st}}X_{\pi},
\end{split}
\end{equation}
which implies
\begin{equation}\label{R-X}
v_{pj}(\pi)+R_{IP}\leq_{\text{st}}v_{pj}(\pi)+X_{\pi}=a_{pj}(\pi).
\end{equation}
From \eqref{R-X}, we can deduce that link $(i,j)$ becomes available to send a new packet under policy $IP$ at a time that is stochastically smaller than the time $a_{pj}(\pi)$. Let $\theta(IP)$ denote the time that link $(i,j)$ becomes available to send a new packet under policy $IP$. According to \eqref{R-X}, we have
 \begin{equation}
\begin{split}
[\theta(IP)\vert\mathcal{I}', \Gamma^{p-1}]\leq_{\text{st}}[a_{pj}(\pi)\vert\mathcal{I}', D^{p-1}],\\ \text{whenever}\quad \gamma_l\leq d_l, l=1, 2, \ldots, p-1.
\end{split}
\end{equation}
It is important to note that, since we have  $[\Gamma_{pi}(IP)\vert\mathcal{I}']\leq[D_{pi}(\pi)\vert\mathcal{I}']$, there is a packet with generation time greater than $s_p$ is available to the link $(i,j)$ before time $v_{pj}(\pi)$ under policy $IP$. At the time $\theta(IP)$, we have two possible cases under policy $IP$:

Case 1: Link $(i,j)$ starts to send a fresh packet $k$ with $k\geq p$ at the time $\theta(IP)$ under policy $IP$, as shown in Fig. \ref{a}. Hence we obtain 
 \begin{equation}\label{case1_1}
\begin{split}
[v_{kj}(IP)\vert\mathcal{I}', \Gamma^{p-1}]=[\theta(IP)\vert\mathcal{I}', \Gamma^{p-1}]\leq_{\text{st}}[a_{pj}(\pi)\vert\mathcal{I}', D^{p-1}]\\ \text{whenever}\quad \gamma_l\leq d_l, l=1, 2, \ldots, p-1.
\end{split}
\end{equation}
Since $s_k\geq s_p$, \eqref{Def_na1_1} implies
\begin{equation}\label{case1_2}
[\Gamma_{pj}(IP)\vert\mathcal{I}', \Gamma^{p-1}]\leq [v_{kj}(IP)\vert\mathcal{I}', \Gamma^{p-1}].
\end{equation}
Since there is no packet with generation time greater than $s_p$ that has been arrived to node $j$ before packet $p$ under policy $\pi$, \eqref{Def_na1_2} implies 
\begin{equation}\label{case1_3}
[D_{pj}(\pi)\vert\mathcal{I}', D^{p-1}]=[a_{pj}(\pi)\vert\mathcal{I}', D^{p-1}].
\end{equation}
By combining \eqref{case1_1}, \eqref{case1_2}, and \eqref{case1_3}, \eqref{NBU_to_proof} follows.

Case 2: Link $(i,j)$ starts to send a stale packet (with generation time smaller than $s_p$) or there is no packet transmission over the link $(i,j)$ at the time $\theta(\text{LB})$ under policy $IP$. Since the packets are served by following a work-conserving LGFS principle under policy $IP$, and a packet with generation time greater than $s_p$ is available to the link $(i,j)$ before time $v_{pj}(\pi)$ under policy $IP$, the link $(i,j)$ must have sent a fresh packet $k$ with $k\geq p$ before time $\theta(IP)$, as shown in Fig. \ref{b}. Hence, we have
 \begin{equation}\label{case2_1}
\begin{split}
[v_{kj}(IP)\vert\mathcal{I}', \Gamma^{p-1}]\leq[\theta(IP)\vert\mathcal{I}', \Gamma^{p-1}]\leq_{\text{st}}[a_{pj}(\pi)\vert\mathcal{I}', D^{p-1}]\\ \text{whenever}\quad \gamma_l\leq d_l, l=1, 2, \ldots, p-1.
\end{split}
\end{equation}
Similar to Case 1, we can use \eqref{Def_na1_1}, \eqref{Def_na1_2}, and \eqref{case2_1} to show that \eqref{NBU_to_proof} holds in this case. 

Notice that if there is a fresher packet $y$ with $s_y>s_{p}$ and $a_{yj}(\pi)<a_{pj}(\pi)$ (this may occur if packet $y$ preempts the transmission of packet $p$ under policy $\pi$ or packet $y$ arrives to node $i$ before packet $p$ under policy $\pi$), then we replace packet $p$ by packet $y$ in the arguments and equations from \eqref{NBU_to_proof} to \eqref{case2_1} to obtain
 \begin{equation}\label{casey_1}
\begin{split}
[\Gamma_{yj}(IP)\vert\mathcal{I}', \Gamma^{p-1}]\leq[D_{yj}(\pi)\vert\mathcal{I}', D^{p-1}]\\ \text{whenever}\quad \gamma_l\leq d_l, l=1, 2, \ldots, p-1.
\end{split}
\end{equation}
Observing that $s_y>s_p$, \eqref{Def_na1_1} implies
\begin{equation}\label{casey_2}
[\Gamma_{pj}(IP)\vert\mathcal{I}', \Gamma^{p-1}]\leq[\Gamma_{yj}(IP)\vert\mathcal{I}', \Gamma^{p-1}].
\end{equation}
Since $a_{yj}(\pi)<a_{pj}(\pi)$ and $s_y>s_{p}$, \eqref{Def_na1_2} implies
\begin{equation}\label{casey_3}
[D_{pj}(\pi)\vert\mathcal{I}', D^{p-1}]=[D_{yj}(\pi)\vert\mathcal{I}', D^{p-1}].
\end{equation}
By combining \eqref{casey_1}, \eqref{casey_2}, and \eqref{casey_3}, we can prove \eqref{NBU_to_proof} in this case too. Finally, substitute \eqref{prove1} and \eqref{NBU_to_proof} into Theorem 6.B.3 of  \cite{shaked2007stochastic}, \eqref{NS1} is proven. 
%

\section{Proof of Theorem \ref{thmnbu_gab}}\label{Appendix_D}
\begin{figure}
\centering
\includegraphics[scale=0.35]{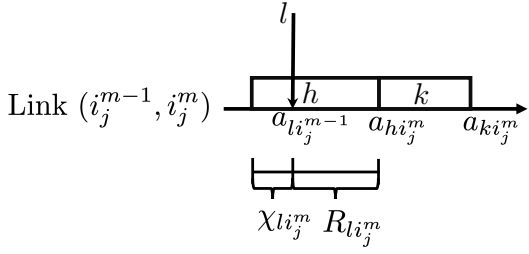}
\centering
\captionsetup{justification=justified, font={onehalfspacing}}
\caption{ An illustration of $R_{li_{j,m}}$ and $\chi_{li_{j,m}}$. Packet $l$ arrives to node $i_{j,m-1}$ at time $a_{li_{j,m-1}}$, while packet $h$ with $h<l$ is being transmitted over the link $(i_{j,m-1},i_{j,m})$. After the delivery of packet $h$ to node $i_{j,m}$ at time $a_{hi_{j,m}}$, packet $k$ with $k\geq l$ is transmitted over the link $(i_{j,m-1},i_{j,m})$. The duration $R_{li_{j,m}}$ is the waiting time of packet $l$ in the queue of the link $(i_{j,m-1},i_{j,m})$ until the packet $k$ starts its transmission. The duration $\chi_{li_{j,m}}$ is the time spent by the link $(i_{j,m-1},i_{j,m})$ on sending the packet $h$ before the time $a_{li_{j,m-1}}$.}\label{Fig:parameter2}
\vspace{-0.3cm}
\end{figure}

For notation simplicity, let policy $P$ represent the non-prmp-LGFS policy and policy $IP$ represent the infeasible policy (the construction of the infeasible policy and the age lower bound are provided in Appendix \ref{Appendix_A'}). We will need the definitions that are provided at the beginning of Appendix \ref{Appendix_C} throughout this proof. Consider a node $j\in\mathcal{H}_k$ with $k\geq 1$. We prove Theorem \ref{thmnbu_gab} into three steps:

\emph{Step 1:} We provide an upper bound on the time differences between the arrival times of the informative packets at node $j$ under policy $IP$ and those under policy $P$. To achieve that, we need the following definitions. For each link in the path to node $j$ (i.e., $(i_{j,m-1},i_{j,m})$ for all $1\leq m\leq k$), define $R_{li_{j,m}}=\Gamma_{li_{j,m}}-D_{li_{j,m-1}}$ as the time spent in the queue of the link $(i_{j,m-1},i_{j,m})$ by the packet that arrives at node $i_{j,m-1}$ at time $D_{li_{j,m-1}}$, until the first transmission starting time over the link $(i_{j,m-1},i_{j,m})$ of the packets with generation time greater than $s_l$. If there is a packet that is being transmitted over the link $(i_{j,m-1},i_{j,m})$ at time $D_{li_{j,m-1}}$, let $\chi_{li_{j,m}}$ $(\chi_{li_{j,m}}\geq 0)$ denote the amount of time that the link $(i_{j,m-1},i_{j,m})$ has spent on sending this packet by the time $D_{li_{j,m-1}}$. These parameters ($R_{li_{j,m}}$ and $\chi_{li_{j,m}}$) are functions of the scheduling policy $\pi$. An illustration of these parameters is provided in Fig. \ref{Fig:parameter2}. Note that policy $P$ is a LGFS work-conserving policy. Also, the packets under policy $IP$ are served by following a work-conserving LGFS principle. Thus, we can express $R_{li_{j,m}}$ under these policies as $R_{li_{j,m}}=[X_{i_{j,m}}-\chi_{li_{j,m}}\vert X_{i_{j,m}}>\chi_{li_{j,m}}]$. Because the packet transmission times are NBU and \emph{i.i.d.} across time, for all realization of $\chi_{li_{j,m}}$
\begin{equation}
[R_{li_{j,m}}\vert \chi_{li_{j,m}}]\leq_{\text{st}}X_{i_{j,m}},~ \text{for}~ m=1,\ldots,k,~\forall l,
\end{equation}
which implies that 
\begin{equation}\label{rem_les_eq}
\mathbb{E}[R_{li_{j,m}}\vert \chi_{li_{j,m}}]\leq\mathbb{E}[X_{i_{j,m}}], ~\text{for}~m=1,\ldots,k,~\forall l,
\end{equation}
holds for policy $P$ and policy $IP$. Define $z_l=D_{lj}(P)-\Gamma_{lj}(IP)$. Note that, $\Gamma_{lj}(IP)$ represents the arrival time at node $j$ of a packet $p$ with $s_p\geq s_l$ under policy $IP$, and $D_{lj}(P)$ represents the arrival time at node $j$ of a packet $h$ with $s_h\geq s_l$ under policy $P$. Therefor, $z_l$'s represent the time differences between the arrival times of the informative packets at node $j$  under policy $IP$ and those under policy $P$, as shown in Fig. \ref{Fig:Gap2}. By invoking the construction of policy $IP$, we have $D_{lj}(IP)=\Gamma_{lj}(IP)$ for all $l$. Using this with the definition of $R_{li_{j,m}}$, we can express $\Gamma_{lj}(IP)$ as
\begin{equation}\label{eq_st2_1}
\Gamma_{lj}(IP)=a_{l0}+\sum_{m=1}^k[R_{li_{j,m}}(IP)\vert\chi_{li_{j,m}}(IP)],
\end{equation}
 where $\Gamma_{lj}(IP)$ is considered as the arrival time at node $j$ of the first packet with generation time greater than $s_l$  under policy $IP$. Also, we can express $D_{lj}(P)$ as 
\begin{equation}\label{eq_st2_2}
D_{lj}(P)=a_{l0}+\sum_{m=1}^k[R_{li_{j,m}}(P)\vert\chi_{li_{j,m}}(P)]+\sum_{m=1}^kX_{i_{j,m}}.
\end{equation}
Observing that packet arrival times $(a_{10}, a_{20}, \ldots)$ at node $0$ and the packet transmission times are invariant of the scheduling policy $\pi$. Then, from the construction of policy $IP$, we have $[R_{li_{j,1}}(IP)\vert\chi_{li_{j,1}}(IP)]=[R_{li_{j,1}}(P)\vert\chi_{li_{j,1}}(P)]$ for all $l$ (because all nodes in $\mathcal{H}_1$ receive the update packets from node 0). Using this with \eqref{eq_st2_1} and \eqref{eq_st2_2}, we can obtain
\begin{equation}\label{z-eqn}
\begin{split}
\!\!\!\!\!\!\!\!z_l&=D_{lj}(P)-\Gamma_{lj}(IP)\\&=\sum_{m=2}^k[R_{li_{j,m}}(P)\vert\chi_{li_{j,m}}(P)]+\sum_{m=1}^kX_{i_{j,m}}\\&-\sum_{m=2}^k[R_{li_{j,m}}(IP)\vert\chi_{li_{j,m}}(IP)]\\&\leq \sum_{m=2}^k[R_{li_{j,m}}(P)\vert\chi_{li_{j,m}}(P)]+\sum_{m=1}^kX_{i_{j,m}}=z_l^{'}.
\end{split}
\end{equation}
Since the packet transmission times are independent of the packet generation process, we also have $z_l^{'}$'s are independent of the packet generation process. In addition, from \eqref{rem_les_eq}, we have
\begin{equation}
\mathbb{E}[z_l^{'}]\leq \mathbb{E}[X_{i_{j,1}}]+2\sum_{m=2}^k\mathbb{E}[X_{i_{j,m}}].
\end{equation}
\begin{figure}
\includegraphics[scale=0.4]{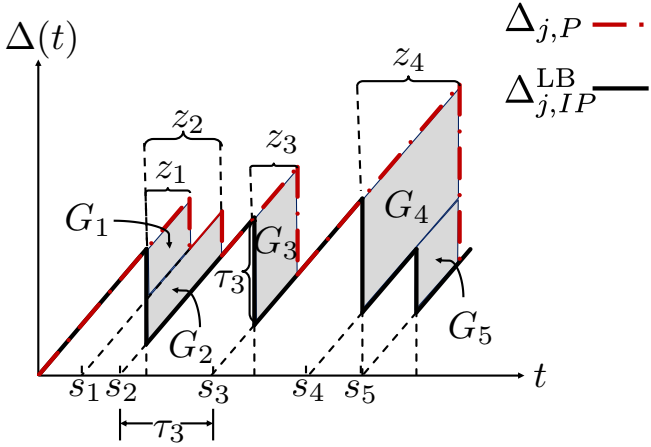}
\centering
\caption{The evolution $\Delta_{j,IP}^{\text{LB}}$ and $\Delta_{j,P}$.}\label{Fig:Gap2}
\vspace{-0.3cm}
\end{figure}

\emph{Step 2:} We use Step 1 to provide an upper bound on the average gap between $\Delta_{j,IP}^{\text{LB}}$ and $\Delta_{j,P}$. This gap process is denoted by $\{G_j(t), t\in[0,\infty)\}$. The average gap is given by 
\begin{equation}\label{gapeq1}
[\bar{G}_j\vert\mathcal{I}']=\limsup_{T\rightarrow\infty}\frac{\int_0^{T} G_j(t)dt}{T}.
\end{equation}
Let $\tau_l$ denote the inter-generation time between packet $l$ and packet $l-1$ (i.e., $\tau_l=s_{l}-s_{l-1}$), where $\tau=\{\tau_l, l\geq 1\}$. Define $N(T)=\max\{l : s_{l}\leq T\}$ as the number of generated packets by time $T$. Note that $[0,s_{N(T)}]\subseteq [0,T]$, where the length of the interval 	$[0,s_{N(T)}]$ is $\sum_{l=1}^{N(T)}\tau_l$. Thus, we have 
\begin{equation}\label{Igab_eq_1}
\sum_{l=1}^{N(T)}\tau_l\leq T.
\end{equation}
The area defined by the integral in \eqref{gapeq1} can be decomposed into a sum of disjoint geometric parts. Observing Fig. \ref{Fig:Gap2}, the area can be approximated by the concatenation of the parallelograms $G_{1}, G_{2},\ldots$ ($G_l$'s are highlighted in Fig. \ref{Fig:Gap2}). Note that the parallelogram $G_{l}$ results after the generation of packet $l$ (i.e., the gap that is corresponding to the packet $l$, occurs after its generation). Since the observing time $T$ is chosen arbitrary, when $T\geq s_{l}$, the total area of the parallelogram $G_{l}$ is accounted in the summation $\sum_{l=1}^{N(T)}G_{l}$, while it may not be accounted in the integral $\int_{0}^T G_f(t)dt$. This implies that
\begin{equation}\label{Igab_eq_2}
\sum_{l=1}^{N(T)}G_{l}\geq \int_{0}^T G_j(t)dt.
\end{equation}
Combining \eqref{Igab_eq_1} and \eqref{Igab_eq_2}, we get 
\begin{equation}\label{Igab_eq_3}
\frac{\int_0^{T} G_j(t)dt}{T}\leq\frac{\sum_{l=1}^{N(T)}G_{l}}{\sum_{l=1}^{N(T)}\tau_l}.
\end{equation}
Then, take conditional expectation given $\tau$ and $N(T)$ on both sides of \eqref{Igab_eq_3}, we obtain
\begin{equation}\label{Igab_eq_4}
\begin{split}
\frac{\mathbb{E}[\int_0^{T} G_j(t)dt\vert\tau, N(T)]}{T}&\leq\frac{\mathbb{E}[\sum_{l=1}^{N(T)}G_{l}\vert\tau, N(T)]}{\sum_{l=1}^{N(T)}\tau_l}\\&=\frac{\sum_{l=1}^{N(T)}\mathbb{E}[G_{l}\vert\tau, N(T)]}{\sum_{l=1}^{N(T)}\tau_l},
\end{split}
\end{equation}
where the second equality follows from the linearity of the expectation. From Fig. \ref{Fig:Gap2}, $G_{l}$ can be calculated as 
\begin{equation}\label{Igab_eq_5}
G_{l}=\tau_lz_{l}.
\end{equation}
substituting by \eqref{Igab_eq_5} into \eqref{Igab_eq_4}, yields
\begin{equation}
\begin{split}
\frac{\mathbb{E}[\int_0^{T} G_j(t)dt\vert\tau, N(T)]}{T}&\leq\frac{\sum_{l=1}^{N(T)}\mathbb{E}[\tau_lz_{l}\vert\tau, N(T)]}{\sum_{l=1}^{N(T)}\tau_l}\\&=\frac{\sum_{l=1}^{N(T)}\tau_l\mathbb{E}[z_{l}\vert\tau, N(T)]}{\sum_{l=1}^{N(T)}\tau_l}.
\end{split}
\end{equation}
Using \eqref{z-eqn}, we obtain
 \begin{equation}\label{Igab_eq_6}
 \begin{split}
\frac{\mathbb{E}[\int_0^{T} G_j(t)dt\vert\tau, N(T)]}{T}&\leq\frac{\sum_{l=1}^{N(T)}\tau_l\mathbb{E}[z_{l}\vert\tau, N(T)]}{\sum_{l=1}^{N(T)}\tau_l}\\&\leq\frac{\sum_{l=1}^{N(T)}\tau_l\mathbb{E}[z_{l}^{'}\vert\tau, N(T)]}{\sum_{l=1}^{N(T)}\tau_l}.
\end{split}
\end{equation}
Note that $z_{l}^{'}$'s are independent of the packet generation process. Thus, we have $\mathbb{E}[z_{l}^{'}\vert\tau, N(T)]=\mathbb{E}[z_{l}^{'}]\leq\mathbb{E}[X_{i_{j,1}}]+2\sum_{m=2}^k\mathbb{E}[X_{i_{j,m}}]$ for all $l$. Using this in \eqref{Igab_eq_6}, we get
\begin{equation*}
\begin{split}
\frac{\mathbb{E}[\int_0^{T} G_j(t)dt\vert\tau, N(T)]}{T}&\leq\frac{\sum_{l=1}^{N(T)}\tau_l(\mathbb{E}[X_{i_{j,1}}]+2\sum_{m=2}^k\mathbb{E}[X_{i_{j,m}}])}{\sum_{l=1}^{N(T)}\tau_l}\\&\leq \mathbb{E}[X_{i_{j,1}}]+2\sum_{m=2}^k\mathbb{E}[X_{i_{j,m}}],
\end{split}
\end{equation*}
by the law of iterated expectations, we have
\begin{equation}\label{Igab_eq_7}
\frac{\mathbb{E}[\int_0^{T} G_j(t)dt]}{T}\leq \mathbb{E}[X_{i_{j,1}}]+2\sum_{m=2}^k\mathbb{E}[X_{i_{j,m}}].
\end{equation}
Taking $\limsup$ of both sides of \eqref{Igab_eq_7} when $T\rightarrow\infty$, yields
\begin{equation}\label{Igab_eq_8}
\limsup_{T\rightarrow\infty}\frac{\mathbb{E}[\int_0^{T} G_j(t)dt]}{T}\leq \mathbb{E}[X_{i_{j,1}}]+2\sum_{m=2}^k\mathbb{E}[X_{i_{j,m}}].
\end{equation}
Equation \eqref{Igab_eq_8} tells us that the average gap between $\Delta_{f,IP}^{\text{LB}}$ and $\Delta_{f,P}$ is no larger than $\mathbb{E}[X_{i_{j,1}}]+2\sum_{m=2}^k\mathbb{E}[X_{i_{j,m}}]$.

\emph{Step 3:} We use the provided upper bound on the gap in Step
2 to prove \eqref{gap_main1}. Since $\Delta_{j,IP}^{\text{LB}}$ is a lower bound of $\Delta_{j,P}$, we obtain
\begin{equation}\label{Igab_eq_9}
\begin{split}
&[\bar{\Delta}_{j,IP}^{\text{LB}}\vert\mathcal{I}']\leq [\bar{\Delta}_{j,P}\vert\mathcal{I}']\leq\\& [\bar{\Delta}_{j,IP}^{\text{LB}}\vert\mathcal{I}']+\mathbb{E}[X_{i_{j,1}}]+2\sum_{m=2}^k\mathbb{E}[X_{i_{j,m}}],
\end{split}
\end{equation}
where $\bar{\Delta}_{j,IP}^{\text{LB}}=\limsup_{T\rightarrow\infty}\frac{\mathbb{E}[\int_0^{T} \Delta_{j,IP}^{\text{LB}}(t)dt]}{T}$. From Lemma \ref{lem_lower_bound_all} in Appendix \ref{Appendix_A'}, we have for all $\mathcal{I}'$ satisfying $B_{ij}\geq 1$, and $\pi\in\Pi$
\begin{equation}
[\Delta_{j,IP}^{\text{LB}}\vert\mathcal{I}']\leq_{\text{st}}[\Delta_{j,\pi}\vert\mathcal{I}'],
\end{equation}
which implies that
\begin{equation}
[\bar{\Delta}_{j,IP}^{\text{LB}}\vert\mathcal{I}']\leq[\bar{\Delta}_{j,\pi}\vert\mathcal{I}'],
\end{equation}
holds for all $\pi\in\Pi$. As a result, we get
\begin{equation}\label{Igab_eq_10}
[\bar{\Delta}_{j,IP}^{\text{LB}}\vert\mathcal{I}']\leq\min_{\pi\in\Pi}[\bar{\Delta}_{j,\pi}\vert\mathcal{I}'].
\end{equation}
Since policy $P$ is a feasible policy, we get
\begin{equation}\label{Igab_eq_11}
\min_{\pi\in\Pi}[\bar{\Delta}_{j,\pi}\vert\mathcal{I}']\leq [\bar{\Delta}_{j,P}\vert\mathcal{I}'].
\end{equation}
Combining \eqref{Igab_eq_9}, \eqref{Igab_eq_10}, and \eqref{Igab_eq_11}, we get
\begin{equation}\label{Igab_eq_12'}
\begin{split}
&\min_{\pi\in\Pi}[\bar{\Delta}_{j,\pi}\vert\mathcal{I}']\leq [\bar{\Delta}_{j,P}\vert\mathcal{I}']\leq\\& \min_{\pi\in\Pi}[\bar{\Delta}_{j,\pi}\vert\mathcal{I}']+\mathbb{E}[X_{i_{j,1}}]+2\sum_{m=2}^k\mathbb{E}[X_{i_{j,m}}].
\end{split}
\end{equation}
Following the previous argument, we can show that \eqref{Igab_eq_12'} holds for all $j\in\mathcal{V}\backslash \{0\}$. This proves \eqref{gap_main1}, which completes the proof.

\section{Proof of Theorem \ref{thm2}}\label{Appendix_B}
This proof is similar to that of Theorem \ref{thm1}. The difference between this proof and the proof of Theorem \ref{thm1} is that policy $\pi$ cannot be a preemptive policy here. We will use the same definition of the system state of policy $\pi$ used in Theorem \ref{thm1}. For notational simplicity, let policy $P$ represent the non-preemptive LGFS policy.

The key step in the proof of Theorem \ref{thm2} is the following lemma, where we compare policy $P$ with an arbitrary policy $\pi\in\Pi_{npwc}$.
 \begin{lemma}\label{lem2np}
 Suppose that $\mathbf{U}_{P}(0^-)=\mathbf{U}_{\pi}(0^-)$ for all $\pi\in\Pi_{npwc}$, then for all $\mathcal{I}$,
\begin{equation}\label{law9np}
\begin{split}
[\{\mathbf{U}_{P}(t),  t\in[0,\infty)\}\vert\mathcal{I}]\!\geq_{\text{st}}\! [\{\mathbf{U}_{\pi}(t), t\in[0,\infty)\}\vert\mathcal{I}].
 \end{split}
\end{equation}
\end{lemma}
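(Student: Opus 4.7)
The plan is to follow the template of Lemma~\ref{lem2}, replacing the memoryless coupling used there by one adapted to general transmission-time distributions and non-preemptive work-conserving scheduling. On each link $(i,j)$ I would draw an i.i.d. sequence $X^{(i,j)}_1,X^{(i,j)}_2,\ldots$ from the prescribed transmission-time distribution of that link, and couple the two systems so that, under either policy, the $k$-th service initiated on link $(i,j)$ has duration $X^{(i,j)}_k$. Since the arrival times $\{a_{l0}\}$ at node $0$ are fixed by $\mathcal{I}$ and both $P$ and $\pi$ lie in $\Pi_{npwc}$, this coupling is well-defined, and by Theorem~6.B.30 of \cite{shaked2007stochastic} the conclusion \eqref{law9np} reduces to the almost-sure componentwise inequality $\mathbf{U}_P(t)\ge\mathbf{U}_\pi(t)$ for all $t\ge 0$.

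The preparatory step is to verify that, under this coupling, the packet delivery epochs on \emph{every} link are identical under $P$ and under $\pi$. I would prove this by induction on the topological distance from node $0$. The essential observation is that at each link $(i,j)$ the total occupancy $N_{ij}(t)$ increases by $1$ on each arrival at node $i$ that does not find the link's buffer full, decreases by $1$ on each service completion, and is unchanged by an arrival to a full buffer -- regardless of whether the full-buffer rule drops the newcomer or replaces an existing packet. Thus, given a common arrival process at node $i$ and the coupled service durations on $(i,j)$, a Lindley-type recursion determines the busy-period structure, and hence the sequence of service-completion epochs, identically under both policies. Induction along the acyclic flow from node $0$ propagates this sameness to every link of the network.

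With the delivery epochs matched, the proof concludes by forward induction over the union of (a) packet arrival events at node $0$ and (b) packet delivery events on the links. Events of type (a) are handled verbatim by Lemma~\ref{lem4}. For events of type (b) I would prove a non-preemptive analogue of Lemma~\ref{lem3}: assuming $\mathbf{U}_P\ge\mathbf{U}_\pi$ just before a delivery on $(i,j)$, the same inequality holds just after. Writing $s_P$ and $s_\pi$ for the generation times of the two delivered packets, the case $s_P\ge s_\pi$ is identical to Case~1 of Lemma~\ref{lem3}. In the case $s_P<s_\pi$, let $t_s$ be the common transmission-start epoch and let $p^*$ be the packet with the largest generation time among all packets that have arrived at node $i$ under $P$ by time $t_s$, so that $s_{p^*}=U_{i,P}(t_s)\ge U_{i,\pi}(t_s)\ge s_\pi$. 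Because Algorithm~\ref{alg2} always retains the freshest packets in the buffer of $(i,j)$ and transmits the freshest packet at each service start, $p^*$ cannot reside in $P$'s buffer at $t_s^-$ (otherwise $P$ would have picked $p^*$, contradicting $s_P<s_{p^*}$), and $p^*$ cannot have been displaced from the buffer, since no arrived packet is strictly fresher than $p^*$. Hence $p^*$ must have been taken into service at some earlier service-start epoch and, being completed strictly before $t_s$, delivered to node $j$ before time $t_s$. This yields $U_{j,P}(t_s^-)\ge s_{p^*}\ge s_\pi$, and therefore $U_{j,P}'\ge s_\pi\ge U_{j,\pi}'$, completing the induction step.

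The main obstacle is the matched-delivery-epochs claim in the second paragraph: it must be argued carefully for all admissible buffer sizes (including $B_{ij}=0$) and for every drop/replace rule that belongs to $\Pi_{npwc}$. Once this is settled, the sample-path induction runs exactly as in the proof of Lemma~\ref{lem2}, and \eqref{law9np} follows by Theorem~6.B.30 of \cite{shaked2007stochastic}.
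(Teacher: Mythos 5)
Your proposal follows the same overall architecture as the paper's proof of Lemma~\ref{lem2np}: a sample-path coupling that makes the delivery epochs on each link coincide under $P$ and $\pi$, a reduction to an almost-sure componentwise comparison via Theorem~6.B.30 of \cite{shaked2007stochastic}, and a forward induction over external arrivals at node $0$ (Lemma~\ref{lem4np}) and link delivery events (Lemma~\ref{lem3np}). Your Case~2 argument for the delivery step, built around the freshest arrived packet $p^*$, is functionally the same as the paper's argument built around the packet $h$ with $s_h\geq s_\pi$.

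Two points need tightening. First, your proposed non-preemptive analogue of Lemma~\ref{lem3} assumes only that $\mathbf{U}_P\geq\mathbf{U}_\pi$ ``just before the delivery,'' but your argument immediately invokes $U_{i,P}(t_s)\geq U_{i,\pi}(t_s)$ at the service-start epoch $t_s$, which is strictly earlier than the delivery epoch $\nu$ and at which the system state may differ from the state at $\nu^-$. The paper's Lemma~\ref{lem3np} instead assumes $\mathbf{U}_P(t)\geq\mathbf{U}_\pi(t)$ for \emph{all} $t\in[0,\nu^-]$; that stronger hypothesis is what the forward induction actually furnishes, and it is needed precisely because the non-preemptive service interval $[\tau,\nu]$ has positive length. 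Second, your justification of the matched-delivery-epochs claim by ``induction on the topological distance from node $0$'' does not survive the general topology of Theorem~\ref{thm2}: a node at hop distance $k$ can have an incoming link from a node at larger hop distance (this can happen even in a DAG, e.g.\ $0\!\to\!1\!\to\!2\!\to\!3$ together with $0\!\to\!3$ puts node~$3$ at distance~$1$ with an in-neighbor at distance~$2$), so its arrival process is not determined by the earlier induction steps. Your occupancy/Lindley observation is the right ingredient, but it must be propagated by a simultaneous forward induction over the time-ordered sequence of events across all links (valid because each service completion is strictly later than its service start), not by hop-distance layers. The paper simply asserts that the coupling is valid without giving this argument, so here you are attempting to fill a real gap -- but the hop-distance induction as stated would not close it.
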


 We use coupling and forward induction to prove Lemma \ref{lem2np}.
For any work-conserving policy $\pi$, suppose that stochastic processes $\widetilde{\mathbf{U}}_{P}(t)$ and $\widetilde{\mathbf{U}}_{\pi}(t)$ have the same distributions with $\mathbf{U}_{P}(t)$ and $\mathbf{U}_{\pi}(t)$, respectively. 
The state processes $\widetilde{\mathbf{U}}_{{P}}(t)$ and $\widetilde{\mathbf{U}}_{\pi}(t)$ are coupled in the following manner: If a packet is delivered from node $i$ to node $j$ at time $t$ as $\widetilde{\mathbf{U}}_{{P}}(t)$ evolves in policy prmp-LGFS,  then there exists a packet delivery from node $i$ to node $j$ at time $t$ as $\widetilde{\mathbf{U}}_{\pi}(t)$ evolves in policy $\pi$.
Such a coupling is valid since the transmission time distribution at each link is identical under all policies. Moreover, policy $\pi$ can not be either preemptive or non-work-conserving policy, and both policies have the same packets generation times $(s_1, s_2, \ldots, s_n)$ at the exterenal source and packet arrival times $(a_{10}, a_{20}, \ldots, a_{n0})$ to node 0. According to Theorem 6.B.30 in \cite{shaked2007stochastic}, if we can show 
\begin{equation}\label{main_eqnp}
\begin{split}
\mathbb{P}[\widetilde{\mathbf{U}}_{P}(t)\geq\widetilde{\mathbf{U}}_{\pi}(t), t\in[0,\infty)\vert\mathcal{I}]=1,
\end{split}
\end{equation}
then \eqref{law9np} is proven.  

To ease the notational burden, we will omit the tildes henceforth on the coupled versions and just use $\mathbf{U}_{P}(t)$ and $\mathbf{U}_{\pi}(t)$.

Next, we use the following lemmas to prove \eqref{main_eqnp}:

\begin{lemma}\label{lem3np}
Suppose that under policy $P$, $\mathbf{U}_{P}(\nu)$ is obtained by a packet delivery over the link $(i,j)$ at time $\nu$ in the system whose state is $\mathbf{U}_{P}(\nu^{-})$. Further, suppose that under policy $\pi$, $\mathbf{U}_{\pi}(\nu)$ is obtained by a packet delivery over the link $(i,j)$ at time $\nu$ in the system whose state is $\mathbf{U}_\pi(\nu^{-})$. If
\begin{equation}\label{hyp1np}
\begin{split}
\mathbf{U}_{P}(t)& \geq \mathbf{U}_\pi(t),
\end{split}
\end{equation}
holds for all $t\in [0, \nu^{-}]$, then
\begin{equation}\label{law6np}
\mathbf{U}_{P}(\nu) \geq \mathbf{U}_{\pi}(\nu).
\end{equation}
\end{lemma}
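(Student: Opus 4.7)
The plan is to adapt the case analysis from the proof of Lemma~\ref{lem3} to the non-preemptive setting. Let $s_P$ and $s_\pi$ denote the generation times of the packets delivered over $(i,j)$ at time $\nu$ under policies $P$ and $\pi$, respectively. From the definition of the system state, a delivery over $(i,j)$ only modifies coordinate $j$, via $U_{j,P}(\nu)=\max\{U_{j,P}(\nu^-),s_P\}$ and $U_{j,\pi}(\nu)=\max\{U_{j,\pi}(\nu^-),s_\pi\}$, while all other coordinates carry over from $\nu^-$. The inductive hypothesis therefore reduces the task to showing $U_{j,P}(\nu)\geq U_{j,\pi}(\nu)$.

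The easy case is $s_P\geq s_\pi$: monotonicity combined with $U_{j,P}(\nu^-)\geq U_{j,\pi}(\nu^-)$ gives the conclusion immediately. The substantive case is $s_P<s_\pi$. Here I would let $v_P$ (resp.\ $v_\pi$) denote the transmission-start time on link $(i,j)$ of the packet delivered at $\nu$ under $P$ (resp.\ $\pi$), and exploit the coupling of delivery events on each link. Because the service times are i.i.d.\ across time and both policies are non-preemptive and work-conserving, the server's busy/idle pattern at each link depends only on the arrival process into its queue, and by induction on hop distance from node~$0$ (whose external arrivals are policy-invariant) this arrival process can be synchronized in the two policies, so $v_P=v_\pi$.

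With $v_P=v_\pi$, the heart of the argument runs as follows. Packet $s_\pi$ was at node $i$ under $\pi$ at $v_\pi^-$, so $s_\pi\leq U_{i,\pi}(v_\pi^-)\leq U_{i,P}(v_P^-)$ by the inductive hypothesis, implying that some packet $w$ with $s_w\geq s_\pi>s_P$ had already arrived at node $i$ under $P$ by time $v_P^-$. The non-preemptive LGFS rule selects at $v_P$ the freshest packet then sitting in the queue of $(i,j)$; since $w$ is strictly fresher than the chosen packet, $w$ cannot be in that queue at $v_P^-$, which leaves as the only possibility that $w$ had already been transmitted on $(i,j)$ and delivered to node $j$ before $v_P$. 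Therefore $U_{j,P}(v_P^-)\geq s_w\geq s_\pi$, and monotonicity of $U_{j,P}$ gives $U_{j,P}(\nu)\geq U_{j,P}(\nu^-)\geq s_\pi\geq U_{j,\pi}(\nu)$. Since all other coordinates are unchanged at $\nu$, we conclude $\mathbf{U}_P(\nu)\geq\mathbf{U}_\pi(\nu)$.

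The main obstacle I anticipate is the careful justification of the coupling $v_P=v_\pi$ in this non-memoryless setting: in the exponential case of Lemma~\ref{lem3} memorylessness made this automatic, but here one must verify that the per-link busy/idle patterns coincide in both policies even in the presence of policy-dependent dropping at finite buffers (the key observation being that buffer occupancy counts, as opposed to contents, evolve identically under the coupled arrivals). Once the coupling is in place, the analog of Lemma~\ref{lem4} for arrival events at node~$0$ is immediate, since such an event updates only $U_0$ by a common value $s$, and forward induction then completes the proof exactly as in Theorem~\ref{thm1}.
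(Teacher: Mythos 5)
Your proposal follows the paper's own argument closely: the same two-case split on $s_P$ versus $s_\pi$, with the substantive case ($s_P < s_\pi$) hinging on the observation that under non-preemptive LGFS, any packet fresher than the one chosen for transmission at the common start time must already have been delivered to node $j$, so that $s_\pi \le U_{j,P}(\nu^-)$. You correctly isolate the coupling fact $v_P = v_\pi$ that the paper's Appendix~\ref{Appendix_B} treats implicitly (it simply denotes the common transmission-start time by $\tau$ and proceeds via the arrival time $a_\pi \le \tau$, whereas you apply the inductive hypothesis directly at $v_P^-$); your proposed justification via policy-invariance of per-link busy/idle patterns and buffer occupancy \emph{counts} is exactly the reasoning needed to validate the coupling set up before Lemma~\ref{lem2np}, so the two proofs are substantively identical.
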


\begin{proof}
Let $s_{P}$ and $s_\pi$ denote the packet indexes and the generation times of the delivered packets over the link $(i,j)$ at time $\nu$ under policy $P$ and policy $\pi$, respectively.  From the definition of the system state, we can deduce that
\begin{equation}\label{Def1np}
\begin{split}
U_{j,P}(\nu)&=\max\{U_{j,P}(\nu^{-}),s_{P}\},\\
U_{j,\pi}(\nu)&=\max\{U_{j,\pi}(\nu^{-}),s_{\pi}\}.
\end{split}
\end{equation}
Hence, we have two cases:

Case 1: If $s_{P}\geq s_\pi$. From \eqref{hyp1np}, we have
\begin{equation}\label{pf21}
U_{j,P}(\nu^{-})\geq U_{j,\pi}(\nu^{-}).
\end{equation}
By $s_{P}\geq s_\pi$, \eqref{Def1np}, and \eqref{pf21}, we have
\begin{equation}
U_{j,P}(\nu)\geq U_{j,\pi}(\nu).
\end{equation}
Since there is no packet delivery under other links, we get
\begin{equation}
\begin{split}
U_{k,P}(\nu)=U_{k,P}(\nu^{-})\geq U_{k,\pi}(\nu^{-})=U_{k,\pi}(\nu), \quad\forall k\neq j.
\end{split}
\end{equation}
Hence, we have 
\begin{equation}
\mathbf{U}_{P}(\nu) \geq \mathbf{U}_{\pi}(\nu).
\end{equation}


Case 2: If $s_{P}<s_\pi$. Let $a_\pi$ represent the arrival time of packet $s_\pi$ to node $i$ under policy $\pi$. The transmission starting time of the delivered packets over the link $(i,j)$ is denoted by $\tau$ under both policies. Apparently, $a_\pi\leq\tau\leq\nu^{-}$. Since  packet $s_\pi$ arrived to node $i$  at time $a_\pi$ in policy $\pi$, we get
\begin{equation}\label{pf23}
s_\pi\leq U_{i,\pi}(a_\pi).
\end{equation}
From \eqref{hyp1np}, we obtain 
\begin{equation}\label{pf24}
U_{i,\pi}(a_\pi)\leq U_{i,P}(a_\pi).
\end{equation}
Combining \eqref{pf23} and \eqref{pf24}, yields
\begin{equation}\label{pf25}
s_\pi\leq U_{i,P}(a_\pi).
\end{equation}
Hence, in policy $P$, node $i$ has a packet with generation time no smaller than $s_\pi$ by the time $a_\pi$. Because the $U_{i,P}(t)$ is a non-decreasing function of $t$ and $a_\pi\leq \tau$, we have
\begin{equation}\label{pf26}
U_{i,P}(a_\pi)\leq U_{i,P}(\tau).
\end{equation}
Then, \eqref{pf25} and \eqref{pf26} imply
\begin{equation}\label{pf26'}
s_\pi\leq U_{i,P}(\tau).
\end{equation}
Since $s_{P} < s_\pi$, \eqref{pf26'} tells us
\begin{equation}\label{pf26''}
s_{P} < U_{i,P}(\tau),
\end{equation}
and hence policy $P$ is sending a stale packet on link $(i,j)$. By the definition of policy $P$, this happens only when all packets that are generated after $s_{P}$ in the queue of the link $(i,j)$ have been delivered to node $j$ by time $\tau$. In addition, \eqref{pf26'} tells us that by time $\tau$, node $i$ has already received a packet (say packet $h$) generated no earlier than $s_\pi$ in policy $P$. By $s_{P} < s_\pi$, packet $h$ is generated after $s_{P}$. Hence, packet $h$ must have been delivered to node $j$ by time $\tau$ in policy $P$ such that 
\begin{equation}\label{pf27}
s_\pi\leq U_{j,P}(\tau).
\end{equation}
Because the $U_{j,P}(t)$ is a non-decreasing function of $t$, and $\tau \leq \nu^{-}$, \eqref{pf27} implies
\begin{equation}\label{pf29}
s_\pi\leq U_{j,P}(\nu^{-}).
\end{equation}
Also, from \eqref{hyp1np}, we have
\begin{equation}\label{pf230}
U_{j,\pi}(\nu^{-})\leq U_{j,P}(\nu^{-}).
\end{equation}
Combining \eqref{pf29} and \eqref{pf230} with \eqref{Def1np}, we obtain
\begin{equation}
U_{j,P}(\nu)\geq U_{j,\pi}(\nu).
\end{equation}
Since there is no packet delivery under other links, we get
\begin{equation}
\begin{split}
U_{k,P}(\nu)=U_{k,P}(\nu^{-})\geq U_{k,\pi}(\nu^{-})=U_{k,\pi}(\nu), \quad \forall k\neq j.
\end{split}
\end{equation}
Hence, we have 
\begin{equation}
\mathbf{U}_{P}(\nu) \geq \mathbf{U}_{\pi}(\nu),
\end{equation}
which complete the proof.
\end{proof}

\begin{lemma}\label{lem4np}
Suppose that under policy $P$, $\mathbf{U'}_{P}$ is obtained by the arrival of a new packet to node $0$ in the system whose state is $\mathbf{U}_{P}$. Further, suppose that under policy $\pi$, $\mathbf{U'}_{\pi}$ is obtained by the arrival of a new packet to node $0$ in the system whose state is $\mathbf{U}_\pi$. If
\begin{equation}\label{hyp2np}
 \mathbf{U}_{P} \geq \mathbf{U}_\pi,
\end{equation}
then,
\begin{equation}
\mathbf{U'}_{P} \geq \mathbf{U'}_{\pi}.
\end{equation}
\end{lemma}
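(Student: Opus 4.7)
The plan is to mimic the proof of Lemma \ref{lem4} verbatim, since the argument for a packet arrival event at node $0$ does not depend on whether the scheduling discipline permits preemption. The only mechanism by which a packet arrival at node $0$ can alter the system state is updating $U_0$; the components $U_k$ for $k\neq 0$ are untouched by an arrival at the gateway, regardless of the policy, because arrivals happen at the external source/gateway interface and do not deliver packets across any link.

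Concretely, I would let $s$ denote the generation time of the newly arriving packet, which is common to both systems by the coupling. From the definition of the system state applied at the arrival epoch, I would write
\begin{equation*}
U'_{0,P}=\max\{U_{0,P},s\},\qquad U'_{0,\pi}=\max\{U_{0,\pi},s\}.
\end{equation*}
The hypothesis $\mathbf{U}_{P}\geq\mathbf{U}_{\pi}$ gives $U_{0,P}\geq U_{0,\pi}$, and since $\max\{\cdot,s\}$ is non-decreasing in its first argument, I conclude $U'_{0,P}\geq U'_{0,\pi}$. For every other component $k\neq 0$, the arrival event does not change the state, so $U'_{k,P}=U_{k,P}\geq U_{k,\pi}=U'_{k,\pi}$ by the same hypothesis. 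Combining the inequalities componentwise yields $\mathbf{U'}_{P}\geq\mathbf{U'}_{\pi}$.

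There is no real obstacle here: unlike Lemma \ref{lem3np}, where the non-preemptive setting forced a careful case split on the transmission starting time $\tau$ and required tracking that the in-service packet at the link $(i,j)$ cannot be swapped mid-transmission, the arrival event acts only on node $0$'s state variable and in the same deterministic way under every causal policy. Hence preemption versus non-preemption is irrelevant, and the proof is identical in form to that of Lemma \ref{lem4}. The only thing to be careful about is pointing out (for use in the overall induction in Lemma \ref{lem2np}) that this lemma, together with Lemma \ref{lem3np}, covers all event types at which the coupled state processes can change, thereby completing the forward induction that establishes \eqref{main_eqnp}.
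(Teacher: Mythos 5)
Your proposal is correct and matches the paper's approach exactly: the paper itself states that the proof of Lemma~\ref{lem4np} is identical to that of Lemma~\ref{lem4} and omits it, and you have simply written out that argument, correctly observing that an arrival event at node $0$ only updates $U_0$ via $\max\{\cdot,s\}$ and is insensitive to whether preemption is allowed.
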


\begin{proof}
The proof of Lemma \ref{lem4np} is similar to that of Lemma \ref{lem4}, and hence is not provided.
\end{proof}

\begin{proof}[Proof of Lemma \ref{lem2np}]
For any sample path, we have that $\mathbf{U}_{P}(0^-) = \mathbf{U}_{\pi}(0^-)$. This, together with Lemma \ref{lem3np} and Lemma \ref{lem4np},  implies that  
\begin{equation}
\begin{split}
[\mathbf{U}_{P}(t)\vert\mathcal{I}] \geq [\mathbf{U}_{\pi}(t)\vert\mathcal{I}],\nonumber
\end{split}
\end{equation}
holds for all $t\in[0,\infty)$. Hence, \eqref{main_eqnp} holds which implies \eqref{law9np} by Theorem 6.B.30 in \cite{shaked2007stochastic}.
This completes the proof.
\end{proof}

\begin{proof}[Proof of Theorem \ref{thm2}]
According to Lemma \ref{lem2np}, we have
\begin{equation*}
\begin{split}
[\{\mathbf{U}_{P}(t),  t\in[0,\infty)\}\vert\mathcal{I}]\geq_{\text{st}} [\{\mathbf{U}_{\pi}(t), t\in[0,\infty)\}\vert\mathcal{I}],
 \end{split}
\end{equation*}
holds for all $\pi\in\Pi_{npwc}$, which implies
\begin{equation*}
\begin{split}
[\{\mathbf{\Delta}_{P}(t), t\in[0,\infty)\}\vert\mathcal{I}]\!\!\leq_{\text{st}} \!\![\{\mathbf{\Delta}_{\pi}(t), t\in[0,\infty)\}\vert\mathcal{I}],
 \end{split}
\end{equation*}
holds for all $\pi\in\Pi_{npwc}$. This completes the proof. 
\end{proof}
\bibliographystyle{IEEEbib}
\bibliography{MyLib}

\begin{IEEEbiography}
    [{\includegraphics[width=1.1in,height=1.3in]{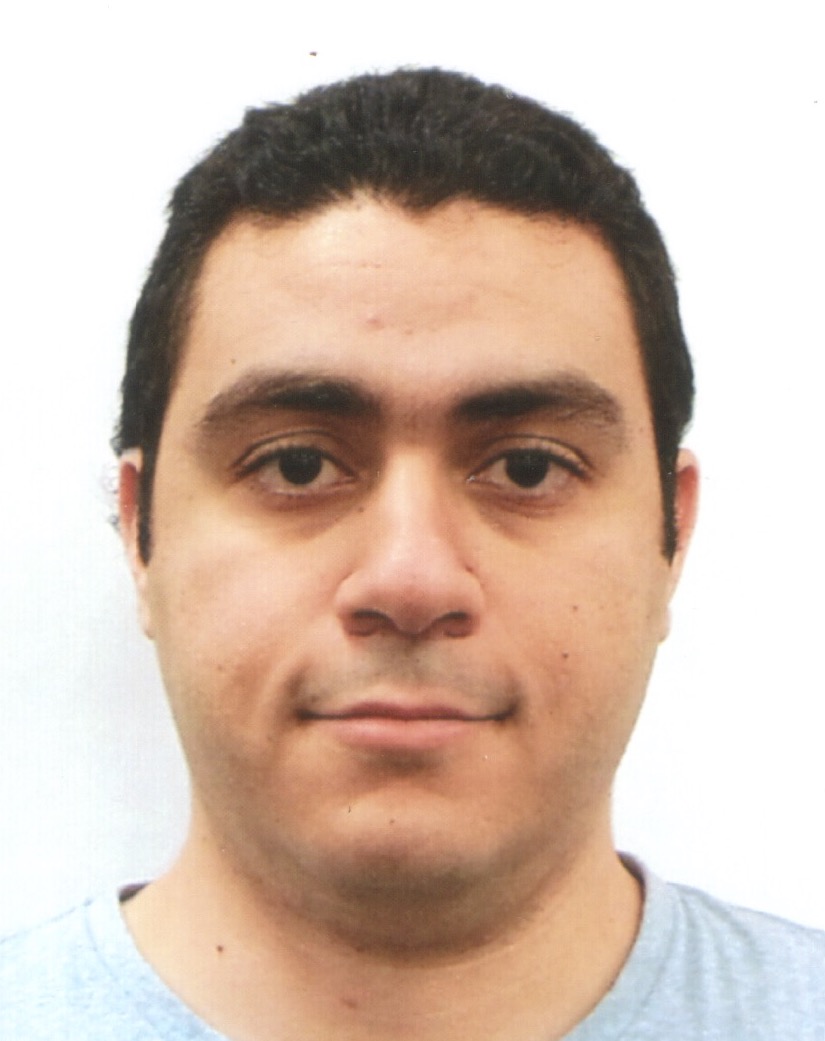}}]{Ahmed M. Bedewy}
received  the B.S.  and  M.S.  degrees  in  electrical  and  electronics engineering from Alexandria University, Alexandria,  Egypt, in 2011 and 2015,   respectively.   He   is   currently   pursuing   the Ph.D.   degree   with   the   Electrical   and   Computer Engineering Department, at the Ohio State University, OH, USA.  His  research interests  include wireless communication, cognitive radios, resource allocation, communication networks, information freshness, optimization, and scheduling algorithms.   He awarded the Certificate of Merit, First Class Honors, for being one of the top ten undergraduate students during 2006-2008 and for being \textbf{1-st} during 2008-2011 in electrical and electronics engineering.
\end{IEEEbiography}

\begin{IEEEbiography}
    [{\includegraphics[width=1.1in,height=1.3in]{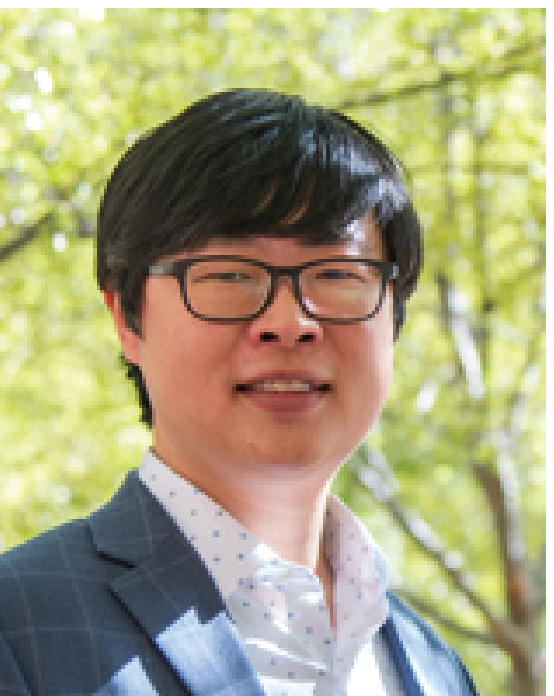}}]{Yin Sun} (S'08-M'11) received his B.Eng. and Ph.D. degrees in Electronic Engineering from Tsinghua University, in 2006 and 2011, respectively. At Tsinghua, he received the Excellent Doctoral Thesis Award of Tsinghua University, among many awards and scholarships. He was a postdoctoral scholar and research associate at the Ohio State University during 2011-2017. Since Fall 2017, Dr. Sun joined Auburn University as an assistant professor in the Department of Electrical and Computer Engineering.

His research interests include wireless communications, communication networks, information freshness, information theory, and machine learning. He is the founding co-chair of the first and second Age of Information Workshops, in conjunction with the IEEE INFOCOM 2018 and 2019. The paper he co-authored received the best student paper award at IEEE WiOpt 2013.
\end{IEEEbiography}

\begin{IEEEbiography}
    [{\includegraphics[width=1.1in,height=1.3in]{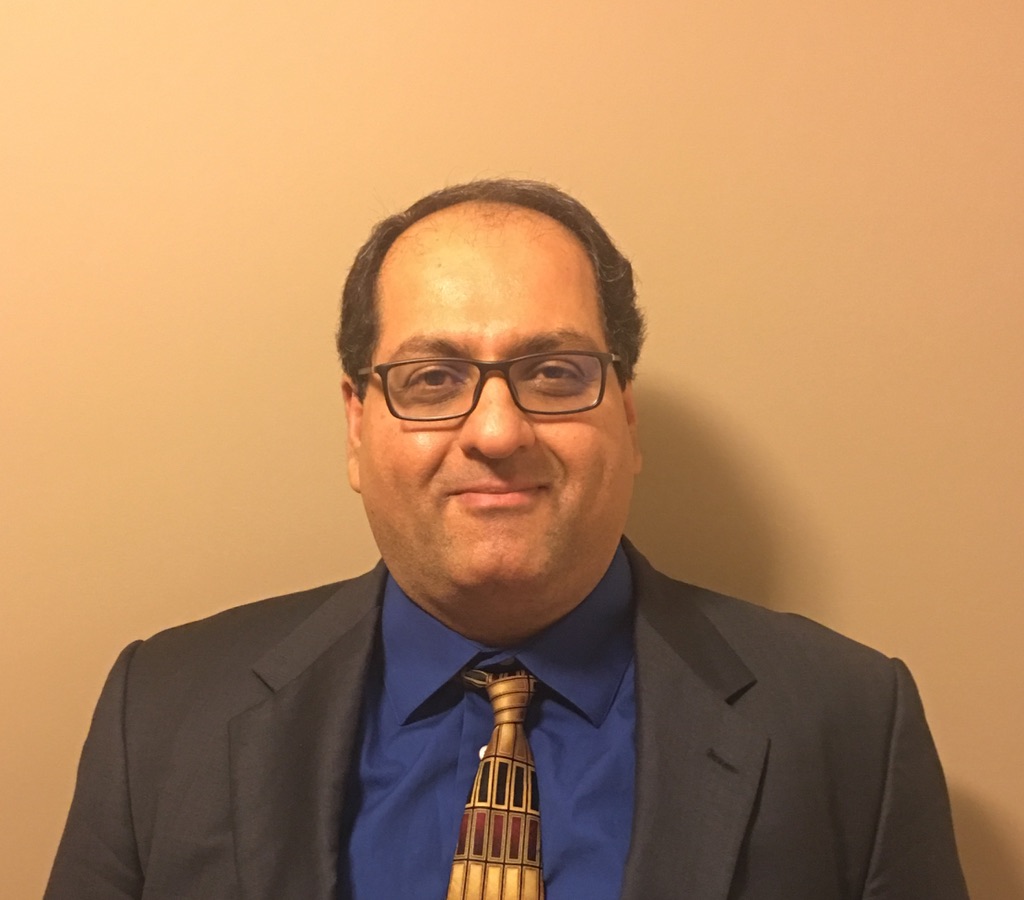}}]{Ness B. Shroff} (S’91–M’93–SM’01–F’07)
received the Ph.D. degree in electrical engineering from Columbia University in 1994. He joined Purdue University immediately thereafter as an Assistant Professor with the School of Electrical and Computer Engineering. At Purdue, he became a Full Professor of ECE and the director of  a university-wide center on wireless systems and applications in 2004. In 2007, he joined The Ohio State University, where he holds the Ohio Eminent Scholar Endowed Chair in networking and communications, in the departments of ECE and CSE. He holds or has held visiting (chaired) professor positions at Tsinghua University, Beijing, China, Shanghai Jiaotong University, Shanghai, China, and IIT Bombay, Mumbai, India. He has received numerous best paper awards for his research and is listed in Thomson Reuters’ on The World’s Most Influential Scientific Minds, and is noted as a Highly Cited Researcher by Thomson Reuters. He also received the IEEE INFOCOM Achievement Award for seminal contributions to scheduling and resource allocation in wireless networks. He currently serves as the steering committee chair for ACM Mobihoc and Editor at Large of the IEEE/ACM Transactions on Networking. 
\end{IEEEbiography}

\end{document}